\newtheorem{problem}{Problem}
\newcommand{\eat}[1]{}
\newcommand{\p}{\Pi}
\newcommand{\vw}{\text{VIEW}}
\newcommand{\out}{\text{O}}
\newcommand{\dom}{\Sigma}
\newcommand{\negligible}{\text{negl}}
\newcommand{\block}{\mathcal{B}}
\newcommand{\dprl}{DPRL\xspace}
\newcommand{\prl}{PRL\xspace}
\newcommand{\lap}{LP\xspace}
\newcommand{\cmark}{\ding{51}}%
\newcommand{\xmark}{\ding{55}}%
\newcommand{\squishlist}{
   \begin{list}{$\bullet$}
    {
      \setlength{\itemsep}{0pt}
      \setlength{\parsep}{3pt}
      \setlength{\topsep}{3pt}
      \setlength{\partopsep}{0pt}
      \setlength{\leftmargin}{1.5em}
      \setlength{\labelwidth}{1em}
      \setlength{\labelsep}{0.5em} } }
\newcommand{\squishend}{
    \end{list}  }
\begin{document}

\title{Composing Differential Privacy and Secure Computation:\\ A case study on scaling private record linkage}

\author{Xi He}
\affiliation{\institution{Duke University}}
\email{hexi88@cs.duke.edu}
\author{Ashwin Machanavajjhala}
\affiliation{%
  \institution{Duke University}
}
\email{ashwin@cs.duke.edu}
\author{Cheryl Flynn}
\affiliation{%
  \institution{AT\&T Labs-Research}
}
\email{cflynn@research.att.com}
\author{Divesh Srivastava}
\affiliation{%
  \institution{AT\&T Labs-Research}
}
\email{divesh@research.att.com}

\begin{abstract}
Private record linkage (\prl) is the problem of identifying pairs of  records that are similar as per an input matching rule from databases held by two parties that do not trust one another. We identify three key desiderata that a \prl solution must ensure: (1) perfect precision and high recall of matching pairs, (2) a proof of end-to-end privacy, and (3) communication and computational costs that scale subquadratically in the number of input records. We show that all of the existing solutions for \prl -- including secure 2-party computation (S2PC), and their variants that use non-private or differentially private (DP)  blocking to ensure subquadratic cost -- violate at least one of the three desiderata.  In particular, S2PC techniques guarantee end-to-end privacy but  have either low recall or quadratic cost.
In contrast, no end-to-end privacy guarantee has been formalized for
solutions that achieve subquadratic cost.
This is true even for solutions that compose DP and S2PC:
DP does not permit the release of any exact information about the databases, while S2PC algorithms for \prl allow the release of matching records.

In light of this deficiency, we propose a novel privacy model, called {\em output constrained differential privacy}, that shares the strong privacy protection of DP, but allows for the truthful release of the output of a certain function applied to the data. We apply this to \prl, and show that protocols satisfying this privacy model permit the disclosure of the true matching records, but their execution is insensitive to the presence or absence of a single non-matching record. We find that prior work that combine DP and S2PC techniques even fail to satisfy this end-to-end privacy model. Hence, we develop novel protocols that provably achieve this end-to-end privacy guarantee, together with the other two desiderata of \prl. Our empirical evaluation also shows that our protocols obtain high recall, scale near linearly in the size of the input databases and the output set of matching pairs, and have communication and computational costs that are at least 2 orders of magnitude smaller than S2PC baselines.
\end{abstract}

\maketitle

\section{Introduction}

Organizations are increasingly collecting vast amounts of data from individuals to advance science, public health, and resource management and governance. In a number of scenarios, different organizations would like to collaboratively analyze their data in order to mine patterns that they cannot learn from their individual datasets. For instance, hospitals or health workers in neighboring cities might want to identify HIV positive patients who have sought care in multiple cities to quantify the mobility patterns of patients, and hence the spread of the virus. This requires finding patients who occur in multiple databases even though the patient records might not have the same primary key across databases. This problem is called {\em record linkage}, and has been well studied for the last several decades \cite{Christen12:dataMatching, Dong:2013:BDI:2536222.2536253, Getoor:2013:ERB:2487575.2506179}.
In a collaborative analysis across organizations, privacy is always a concern. In particular, one of the collaborating parties, say Hospital A, should not be able to tell whether or not a record is in the database of the other party, say Hospital B, if that record does not appear in the match output. Privacy constraints arise due to concerns from individuals who provide their data, such as hospital patients, or due to contractual or legal obligations that organizations have to the individuals in their data. This has led to a field of research called {\em private record linkage} (\prl).

Traditional \prl techniques aim to solve the linkage problem with a strong privacy goal -- no information should be leaked beyond (a) the sizes of the datasets, and (b) the set of matching records. However, this strong privacy goal (which we call S2PC) \cite{Goldreich:2004:FCV:975541} comes with a high cost. Existing techniques that achieve this goal either require cryptographically secure comparisons of all pairs of records (and hence are inefficient), or are restricted to equi-joins (and thus have very low recall). Hence, we formalize our problem as follows: \emph{given private databases $D_A$ and $D_B$ held by two semi-honest parties, and a matching rule $m$, design a protocol $\p$ that outputs pairs of matching records to both parties and satisfies three desiderata: (1) correctness in terms of perfect precision and high recall of matches, (2) provable end-to-end privacy guarantee, and (3) efficiency in terms of sub-quadratic communication and computational cost in $n$, where $n=\max(|D_A|,|D_B|)$.}
There are two sources of the cost incurred by PRL: (1) the number of cryptographic operations, and (2) the time taken for each cryptographic operation. Our protocols aim to reduce the number of cryptographic operations (i.e., the number of secure pairwise comparisons), the first source of cost,
while using existing techniques to securely compare pairs of records.

Techniques that securely compare all pairs of records (APC) have a quadratic cost and hence fail to meet the efficiency requirement of our problem.
On the other hand, techniques for efficient private set intersection (PSI)
\cite{DBLP:conf/eurocrypt/FreedmanNP04, cryptoeprint:2016:930} satisfy all three desiderata for
equality-like matching rules,
but result in poor recall for general fuzzy matching rules. When records in $D_A$ and $D_B$ come from the same discrete domain, one could expand $D_A$ by adding all records that could potentially match with a record in $D_B$, and then find matches by running PSI on the expanded $D_A$ and $D_B$. However, this technique can be very inefficient:
the expanded databases could be much larger than the input databases for complex matching functions or when data are high dimensional.
A long line of work \cite{Scannapieco:2007:PPS:1247480.1247553, kar15tkde, Inan:2008:HAP:1546682.1547255, Inan:2010:PRM:1739041.1739059, Kuzu:2013:EPR:2452376.2452398,DBLP:conf/icde/CaoRBK15} has considered scaling APC by using {\em blocking}, which is a standard technique for scaling non-private record linkage with a small loss in recall of matching pairs. However, blocking can reveal sensitive properties of input records. We show that such hybrid protocols do not ensure an end-to-end privacy guarantee even in solutions where the blocking step satisfies a strong privacy notion, called differential privacy (DP) \cite{export:64346}.
This negative result is in contrast to
other success stories \cite{DBLP:journals/corr/WaghCM16, 6517175, NIPS2010_0408, Alhadidi:2012:SDF:2359015.2359025, Pettai:2015:CDP:2818000.2818027, Narayan:2012:DDP:2387880.2387895, Goryczka:2013:SMA:2457317.2457343} on composing DP and secure computation.
These settings either consider a client-server model where all data sits on the server or consider aggregated functions across partitioned data where the privacy goals of DP and secure computation do not conflict. In the case of scaling \prl, neither blocking nor DP blocking naturally composes with the strong privacy guarantee of S2PC. To our knowledge, this work presents the first solution to the above open problem, and makes the following contributions:
\squishlist
\item We propose and formalize three desiderata for the \prl problem: (1) correctness, or perfect precision and high recall of matches, (2) provable end-to-end privacy, or insensitivity to the presence or absence of an individual record that is not a matching record,
and (3) efficiency, or communication and computational costs that scale subquadratically in the input size. We show that all of the existing solutions for \prl violate at least one of these three desiderata. (\S~\ref{sec:ps})
\item This motivates us to develop a novel privacy definition, which we call \emph{Output Constrained DP}. Protocols satisfying this notion are allowed to truthfully return the output of a specific function, but must be insensitive to the presence or absence of individual records that do not affect the function output. (\S~\ref{sec:dprl_def})
\item We adapt the notion of Output Constrained DP to the context of \prl. Under this privacy notion, computationally bounded adversaries cannot distinguish two different protocol executions when a single \emph{non-matching} record is replaced by another non-matching record in one of the databases.
 This privacy notion, named \emph{\dprl}, allows protocols to truthfully release the set of matching records. (\S~\ref{sec:dprl2})
\item We show that prior attempts \cite{Inan:2010:PRM:1739041.1739059, Kuzu:2013:EPR:2452376.2452398,DBLP:conf/icde/CaoRBK15} to scale \prl using blocking do not satisfy our privacy definition \dprl (Theorem~\ref{theorem:prldp_limit}), and hence fail to achieve stronger privacy guarantees including differential privacy or S2PC. (\S~\ref{sec:algo})
\item We develop novel protocols for private record linkage that leverage blocking strategies. Our protocols ensure end-to-end privacy (Theorems~\ref{theorem:lap_dprl} and \ref{theorem:gmc}), provide at least as much recall as the non-private blocking strategy (Theorems~\ref{theorem:lap_recall} and \ref{theorem:gmc_eff_recall}), and achieve subquadratic scaling (Theorems~\ref{theorem:efficiency} and \ref{theorem:gmc_eff_recall}).
\item Using experiments on real and synthetic data, we investigate the 3-way trade-off between recall, privacy, and efficiency. Our key findings are: our protocols (1) are at least 2 orders of magnitude more efficient than S2PC baselines, (2) achieve a high recall and end-to-end privacy, and (3) achieve near linear scaling in the size of the input databases and the output set of matching pairs on real and synthetic datasets. (\S~\ref{sec:eval})
\squishend

\section{Problem Setting \& Statement}
\label{sec:ps}

In this section, we formulate our problem: finding pairs of records that are similar as per an input matching rule while ensuring three desiderata: \emph{correctness}, \emph{privacy}, and \emph{efficiency}.  We then discuss prior attempts to solve this problem and how they do not satisfy one or more of the three aforementioned desiderata, thus motivating the need for a novel solution.

\subsection{The Private Record Linkage Problem}
Consider two parties Alice and Bob who have databases $D_A$ and $D_B$. Let records in $D_A$ come from some domain $\dom_A$ and let the records in $D_B$ come from domain $\dom_B$. Let $m: \dom_A \times \dom_B \rightarrow \{0,1\}$ denote a {\em matching rule}, and let $D_A\Join_m D_B$ denote the set of matching pairs $\{(a,b) | a \in D_A, b \in D_b, m(a,b) = 1\}$.
A matching rule can be distance-metric based: two records match if their distance is less than a threshold. For example, Euclidean distance is typically used for numeric attributes, whereas for string attributes, the distance metric is typically based on q-grams  \cite{churches2004blind, churches2004some, schnell2009privacy},
phonetic encoding \cite{DBLP:conf/bci/KarakasidisV09}, or edit distance over strings \cite{Atallah:2003:SPS:1005140.1005147, Ravikumar04asecure, Pang2009}. A matching rule can also be conjunctions of predicates over different types of attributes. For instance, two records match if their names differ by at most 2 characters and their phone numbers differ by at most 1 digit.
Alice and Bob would like to jointly compute $D_A\Join_m D_B$.\footnote{The standard record linkage problem involves learning a matching function in addition to computing the matches. Although the problem considered in this paper and in the private record linkage literature ignores this crucial aspect of record linkage, we have chosen to also use this term for continuity with existing literature on the topic.}

Our goal is to design a protocol $\p$ that Alice and Bob can follow to compute $D_A\Join_m D_B$, while satisfying the following three desiderata -- correctness, privacy and efficiency.
\squishlist
\item {\em Correctness}: Let $O_\p \subseteq D_A \times D_B$ denote the set of pairs output by the protocol $\p$ as the set of matching pairs. The protocol is {\em correct} if
(a) the protocol returns to both Alice and Bob the same output $O_\p$, and
(b) $O_\p = D_A\Join_m D_B$, and incorrect otherwise. Note that
if Alice and Bob indeed receive the same output,
$O_\p$ can only be incorrect in one way -- some matching pairs $(a, b) \in D_A\Join_m D_B$ are not present in $O_\p$.
This ensures perfect precision -- no false positives.
Hence, we quantify the correctness of a protocol $\p$ using a measure called {\em recall}, which is computed as:
\begin{equation} \label{eqn:recall}
r_{\p}(D_A,D_B) = \frac{|O_{\p} \cap (D_A\Join_m D_B)|}{|D_A\Join_m D_B|}.
\end{equation}
We require $\p$ to have a high recall (close to 1). This precludes trivial protocols that output an empty set.
\item {\em Privacy}: We assume that the data in $D_A$ and $D_B$ are sensitive. As part of the protocol $\p$, Alice  would like no one else (including Bob) to learn whether a specific non-matching record $a$ is in or out of $D_A$; and analogously for Bob. This precludes the trivial solution wherein Bob sends $D_B$ to $D_A$ in the clear so that Alice can compute $D_A\Join_m D_B$ using standard techniques in the record linkage literature \cite{Christen12:dataMatching}.
It also precludes the trivial solution wherein Alice and Bob send their records to a trusted third party in the clear who can then compute $D_A\Join_m D_B$.
Formally stating a privacy definition is challenging (as we will see later in the paper) and is a key contribution of this paper. We will assume throughout the paper that Alice and Bob are semi-honest, i.e., they follow the protocol honestly, but are curious about each others' databases.  We also assume that Alice and Bob are computationally bounded, i.e., they are probabilistic polynomially bounded turing machines.
\item {\em Efficiency}: Jointly computing matching records would involve communication and computational cost.
We assume that each record in the database has $O(1)$ length; i.e., it does not grow with $n=\max(|D_A|,|D_B|)$.
The communication and computational costs are bounded below by the output size, i.e. $\Omega(M)$, where $M=|D_A\Join_m D_A|$.
If $M$ is quadratic in $n$, then the costs have to be quadratic in $n$ to ensure high recall. Hence, we consider problems with sub-quadratic output size, and we say that the protocol is efficient if both the communication and computational costs are sub-quadratic in $n$, i.e., $o(n^2)$.
\squishend

We formalize our problem statement as follows.
\begin{problem}[\prl] \label{ps:prl}
Let $D_A$ and $D_B$ be private databases held by two semi-honest parties, and let $m$ be a matching rule. Design a protocol $\p$ that outputs pairs of matching records to both parties such that
(1) $\p$ ensures high recall close to 1,
(2) $\p$ provably guarantees privacy,
and (3) $\p$ has sub-quadratic communication and computational cost.
\end{problem}

\subsection{Prior Work}\label{sec:priorwork}
Before describing our solution, we outline five approaches for the \prl problem from prior work -- APC, PSI, PSI+X, PRL+$\block$ and PRL+$\block_{DP}$. Table~\ref{table:prior} summarizes their (in)ability to satisfy our three desiderata stated in Problem~\ref{ps:prl}. Other related work on composing S2PC and DP is discussed in \S~\ref{sec:relatedwork}.

\begin{table}
\centering
\begin{tabular}{|l|c|c|c|}
\hline
\textbf{Methods}        & \textbf{Correctness} & \textbf{Privacy} & \textbf{Efficiency} \\ \hline
\textbf{APC}          &    \cmark      & \cmark   &  \xmark    \\ \hline
\textbf{PSI}            &  \xmark     & \cmark & \cmark          \\ \hline
  \textbf{PSI+X}        &  \cmark     & \cmark  & \xmark          \\ \hline
\textbf{PRL+$\block$} &   \cmark      &  \xmark     & \cmark   \\ \hline
\textbf{PRL+$\block_{DP}$}       &   \cmark         &   \xmark         &       \cmark    \\ \hline
\end{tabular}
\caption{Summary of Prior Work} \label{table:prior}
\end{table}

\subsubsection{All-Pairwise Comparisons (APC)}
One approach to solve the \prl problem, which we call APC, works as follows: (1) design a secure 2-party algorithm that takes as input a record $a \in D_A$ and a record $b \in D_B$ and outputs to both parties the pair $(a, b)$ if the value of $m(a,b) = 1$ without leaking any additional information, and (2) run the secure comparison algorithm for {\em every} pair of records in $D_A \times D_B$. The secure comparison primitive can be implemented either using garbled circuits \cite{Yao:1986:GES:1382439.1382944} or (partially) homomorphic encryption \cite{Paillier1999}, depending on the matching rule. APC achieves a recall of 1, but requires a quadratic communication and computational cost for $|D_A|\times|D_B|$ secure pairwise comparisons.

APC provides a strong end-to-end privacy guarantee -- it leaks no information other than the sizes of the databases and the set of matching records.
This guarantee is formalized as follows.
\begin{definition}[IND-S2PC \cite{Goldreich:2004:FCV:975541}] \label{def:ind-s2pc}
A 2-party protocol $\p$ that computes function $f$ satisfies IND-S2PC if for any $D_A$,  and for every pair of $D_B$ and $D'_B$ where $f(D_A,D_B) = f(D_A,D'_B)$, the view of Alice during the execution of $\p$ over $(D_A, D_B)$
is computationally indistinguishable from the view over $(D_A,D'_B)$, i.e.  for any probabilistic polynomial adversary $T$,
\begin{eqnarray}
&&Pr[T(\vw^{\p}_A(D_A,D_B))=1] \nonumber \\
&\leq & Pr[T(\vw^{\p}_A(D_A,D'_B))=1] + \negligible(\kappa);
\end{eqnarray}
and the same holds for the view of Bob over $(D_A,D_B)$ and $(D'_A,D_B)$ for $f(D_A,D_B) = f(D'_A, D_B)$.
$\negligible(\kappa)$ refers to any function that is $o(\kappa^{-c})$, for all constants $c$,
and $\vw^{\p}_A(D_A,\cdot)$ ($\vw^{\p}_B(\cdot,D_B)$ resp.) denotes the view of Alice (Bob resp.) during an execution of $\p$.
\end{definition}
The IND-S2PC definition uses $\kappa$ as a ``security'' parameter to control various quantities.
The size of the adversary is polynomial in $\kappa$, and the output of the protocol is at most polynomial in $\kappa$.
The views of the protocol execution are also parameterized by $\kappa$.

In \prl, let $f_{\Join_m}$ be the function that takes as inputs $D_A$ and $D_B$, and outputs a triple $(|D_A|,|D_B|,D_A\Join_m D_B)$.
The view of Alice, $\vw_{A}^{\p}(D_A,\cdot)$, includes ($D_A$, $r$, $m_1$,...,$m_t$), where $r$ represents the outcome of Alice's internal coin tosses, and $m_i$ represents the $i$-th message it has received. The output received by Alice after an execution of $\p$ on ($D_A,D_B$), denoted $\out^{\p}_A(D_A,D_B)$ is implicit in the party's own view of the execution. The view of Bob can be similarly defined. In addition, the output size of $\vw$ will be (at most) polynomial in $\kappa$.
Intuitively, IND-S2PC ensures that the adversary Alice cannot distinguish any two databases $D_B$ and $D'_B$ from her view given the constraint $f(D_A,D_B) = f(D_A,D'_B)$, and the same applies to Bob. This IND-S2PC definition is a necessary condition for the standard simulation-based definition (Theorem~\ref{theorem:sim-ind} in Appendix~\ref{app:sims2pc}).

To summarize, APC  guarantees end-to-end privacy and provides a recall of 1, but violates the efficiency requirement.

\subsubsection{Private Set Intersection (PSI)}
We call the next class of approaches PSI, since they were originally designed for efficient private set intersection. Like APC, PSI also ensures IND-S2PC and the parties only learn the sizes of the databases and the set of matching records. The algorithms are efficient, but only ensure high recall for equality predicate like matching rules \cite{DBLP:conf/eurocrypt/FreedmanNP04,cryptoeprint:2016:930}.

The basic protocol works as follows: Alice defines a polynomial $p(x)$ whose roots are her set of elements $a\in D_A$.
She sends the homomorphic encryptions of the coefficients to Bob.
For each element $b \in D_B$, Bob computes the encrypted values $\tilde{b} = r\cdot p(b)+b$, where $r$ is a random value, and sends them back to Alice.
These values are decrypted by Alice and then matched with $D_A$. If $b\notin D_A$, then the decrypted value of $\tilde{b}$ will be a random value not matching any records in $D_A$; otherwise, it will find a match from $D_A$. The basic protocol described thus far required $O(|D_A| + |D_B|)$ communications and $O(|D_A \times D_B|)$ operations on encrypted values.
\cite{DBLP:conf/eurocrypt/FreedmanNP04} further optimizes the computational cost with Horner's rule and cryptographic hashing to replace a single high-degree polynomial with several low-degree polynomials.  This reduces the computational cost to $O(|D_B|\cdot \ln\ln |D_A|)$, and hence is sub-quadratic in $n$, for $n=\max(|D_A|,|D_B|)$.
State of the art PSI techniques \cite{cryptoeprint:2016:930} further improve efficiency.

PSI techniques are limited to equality like matching functions, and extensions \cite{DBLP:conf/eurocrypt/FreedmanNP04,Ye2010}  allow for matching rules that require exact match on at least $t$ out of $T$ features. However these techniques achieve poor recall for general matching rules. For example, they do not extend to matching rules that involve conjunctions and disjunctions of similarity functions evaluated on multiple attributes. They also do not extend to  complex distance metrics, such as CosineSimilarity(First Name) $> 0.9$ OR CosineSimilarity(Last Name) $> 0.9$,
which are typical in record linkage tasks \cite{Getoor:2013:ERB:2487575.2506179}.

\subsubsection{PSI with Expansion (PSI+X)}
The PSI technique can be used to achieve high recall for general matching rules by using the idea of {\em expansion}.
Suppose $D_A$ and $D_B$ have the same domains, i.e., $\Sigma_A = \Sigma_B = \Sigma$.
For every record $a \in D_A$, one could add all records $a' \in \Sigma$ such that $m(a,a') = 1$ to get an expanded database $D_A^x$. An equi-join between $D_A^x$ and $D_B$ returns the required output $D_A\Join_m D_B$, and satisfies IND-S2PC. However, the expanded dataset can be many orders of magnitude larger than the original dataset making this protocol, PSI+X, inefficient (in the size of the original datasets).
Moreover, enumerating all matches per record is hard for a complex matching function. For instance, if the matching function $m$ can encode Boolean 3-CNF formulas, then finding values for $a$ such that $m(a,a') = 1$ could be an intractable problem.
In such a case, any efficient expansion algorithm may need to enumerate a superset of matches, further increasing the computational cost.
Lastly, even for relatively simple matching functions, we empirically illustrate
low recall of PSI and inefficiency of PSI+X protocols respectively in \S~\ref{sec:eval}.

\subsubsection{\prl with Blocking (PRL+$\block$)} Blocking is commonly used to scale up non-private record linkage. Formally,
\begin{definition}[Blocking ($\block$)] Given $k$ bins $\left\{ \block_0,...,\block_{k-1}\right\}$, records in $D_A$ and $D_B$ are hashed by $\block$ to a subset of the $k$ bins.  The set of records in $D_A$ (respectively $D_B$) falling into the $i^{th}$ bin are represented by $\block_i(D_A)$ (respectively $\block_i(D_B)$). A blocking strategy $\block^S\subseteq [0,k)\times [0,k)$ specifies pairs of bins of $D_A$ and $D_B$ that are compared, i.e. records in $\block_i(D_A)$ are compared with records in $\block_j(D_B)$ if $(i,j)\in \block^S$.
\end{definition}

We sometimes use $\block$ to refer to the entire blocking algorithm
as well as the blocking functions used in the algorithm.
We refer to the set of pairs of records that are compared by a blocking strategy as {\em candidate matches}. A blocking strategy $\block^S$ is \emph{sub-quadratic} if the number of candidate matches
$$cost_{\block^S}(D_A,D_B)=\sum_{(i,j)\in \block^S} |\block_i(D_A)| |\block_j(D_B)|$$ is $o(n^2)$, for $n=\max(|D_A|,|D_B|)$.
Blocking techniques are useful as a pre-processing step  \cite{Scannapieco:2007:PPS:1247480.1247553, kar15tkde, Inan:2008:HAP:1546682.1547255}
to achieve sub-quadratic efficiency and high recall.
We can use blocking as a pre-processing step for APC -- secure comparison is performed only for the candidate matches -- resulting in an efficient protocol with high recall.
However, the blocking strategy itself can leak information about the presence or absence of a record in the database. This was illustrated using an attack by Cao et al.~\cite{DBLP:conf/icde/CaoRBK15}.
This is because the number of candidate matches can vary significantly even if $D_B$ and $D'_B$ differ in only one record.
We formally prove this negative result for a large class of blocking techniques which use locality sensitive hashing (LSH). A majority of the hash functions used by blocking algorithms like q-gram based hash signatures \cite{Al-Lawati:2005:BPR:1077501.1077513} or SparseMap \cite{Scannapieco:2007:PPS:1247480.1247553} are instances of LSH.

\begin{definition}[Locality Sensitive Hashing (LSH)\cite{Gionis:1999:SSH:645925.671516}]
A family of functions $H$ is said to be $(d_1,d_2, p_1,p_2)$-sensitive, where $d_2>d_1$ and $p_1>p_2$, if for all $h\in H$, (1)  if $dist(a,b) \leq d_1$, then $Pr[h(a)=h(b)]\geq p_1$, and (2) if $dist(a,b) > d_2$, then $Pr[h(a)=h(b)] \leq p_2$.
\end{definition}

An LSH-based blocking considers a set of bins where each bin consists of records with the same hash values for all $h\in H$. A popular blocking strategy is to compare all the corresponding bins, and results in a set of candidate matches $\{(a,b)| h(a)=h(b) \forall h\in H, a\in D_A, b\in D_B \}$. In general,  we can show that any LSH based blocking cannot satisfy IND-S2PC.
\begin{theorem}\label{theorem:lsh}
An LSH based blocking with a family of $(d_1,d_2,p_1,p_2)$-sensitive hashing functions $H$ cannot satisfy IND-S2PC.
\end{theorem}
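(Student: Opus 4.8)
The plan is to refute Definition~\ref{def:ind-s2pc} directly by exhibiting a single triple $(D_A, D_B, D_B')$ with $|D_B| = |D_B'|$ and $D_A \Join_m D_B = D_A \Join_m D_B'$ — so that $f_{\Join_m}(D_A,D_B) = f_{\Join_m}(D_A,D_B')$ — on which Alice's views $\vw^{\p}_A(D_A,D_B)$ and $\vw^{\p}_A(D_A,D_B')$ are \emph{perfectly} distinguishable, which is far beyond the $\negligible(\kappa)$ advantage the definition tolerates. The enabling observation is that any protocol $\p$ implementing LSH-based blocking must reveal to Alice how much work the blocking induces: her view determines the number of candidate comparisons performed, and in typical protocols also which of her own records each comparison involves and the block profile of $D_B$. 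This quantity is a function of the \emph{blocking structure} of $D_B$, not merely of $f_{\Join_m}(D_A,D_B)$, and that mismatch is what the counterexample exploits.

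Next I would use the $(d_1,d_2,p_1,p_2)$-sensitivity of $H$. From $p_1 > p_2 \ge 0$ we have $p_1 > 0$, so records at distance $\le d_1$ collide with positive probability; from $p_1 \le 1$ we have $p_2 < 1$, so records at distance $> d_2$ fail to collide with positive probability. Hence the induced partition into bins is non-degenerate — it is neither the single-bin partition nor the all-singletons partition, both of which contradict $p_1 > p_2$ on a domain containing a close pair and a far pair, and a useful blocking genuinely merges some pairs of records. Fixing such an instantiation of the bins, I would pick $a \in \dom_A$ and $b, b' \in \dom_B$ with $h(b) = h(a)$ for every $h \in H$ but $h(b') \neq h(a)$ for some $h \in H$, chosen so that $m(a,b) = m(a,b') = 0$, and set $D_A = \{a\}$, $D_B = \{b\}$, $D_B' = \{b'\}$ (larger databases preserving these properties work equally well). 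Then $f_{\Join_m}(D_A,D_B) = (1,1,\emptyset) = f_{\Join_m}(D_A,D_B')$, yet $\p$ runs a secure comparison of the candidate pair $(a,b)$ in the first execution and none in the second; the distinguisher that outputs $1$ iff the view contains a secure-comparison message then has advantage $1$. The argument for Bob is symmetric, and since IND-S2PC is necessary for simulation-based security (Theorem~\ref{theorem:sim-ind}), the impossibility extends to that notion as well.

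The step I expect to be the main obstacle is the existence claim above: that any useful LSH blocking admits a non-matching record in the same bin as an Alice record, plus another non-matching record in a different bin. Sensitivity only bounds single-draw collision probabilities outside the band $(d_1,d_2]$, so it does not hand us these records for free. I would get there by (i) combining the fact that LSH-sensitivity excludes the two degenerate partitions with a mild genericity assumption on $\dom_A$, $\dom_B$ and $dist$ so that witnesses at the relevant distances exist, and (ii) using that the \prl matching rule is fuzzy and typically multi-attribute, hence generically not a plain threshold on $dist$ (e.g. $d_1$ exceeds the effective matching threshold, or $m$ is a conjunction of similarity predicates over attributes beyond those fed to $H$), so the medium-distance band contains colliding non-matching pairs — and for standard instantiations such as q-gram or MinHash signatures one can simply write such $a,b,b'$ down. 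A secondary subtlety is that $\p$ might attempt to hide the blocking-induced work with dummy comparisons, but masking it completely forces $\Theta(n^2)$ comparisons and defeats the purpose of blocking, so any genuinely sub-quadratic LSH blocking still leaks a blocking-dependent workload; and if $\p$ chooses its hash functions with private coins, one conditions on a fixed choice of hashes occurring with non-negligible probability and runs the argument conditionally.
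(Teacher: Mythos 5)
Your proposal is correct and follows essentially the same route as the paper's proof in Appendix~\ref{app:prlblocking}: both exhibit a neighboring pair $(D_B,D'_B)$ differing in a non-matching record pair $(b,b')$ that the LSH places in different bins, choose $D_A$ so that the two bins induce different candidate-comparison workloads, and conclude that Alice's view distinguishes the two executions despite $f_{\Join_m}(D_A,D_B)=f_{\Join_m}(D_A,D'_B)$. The only substantive difference is how the ``different bins'' step is obtained --- the paper bounds $\Pr[H(b)=H(b')]\leq p_2^{|H|}$ for $dist(b,b')>d_2$ and accepts distinguishing advantage $1-p_2^{|H|}$, whereas you condition on a fixed hash realization to get advantage $1$ --- and your explicit worries about the existence of non-matching colliding records and about dummy-padded protocols are subtleties the paper's proof shares implicitly but does not discuss.
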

The proof can be found in Appendix~\ref{app:prlblocking}.

\subsubsection{\prl with DP Blocking (PRL+$\block_{DP}$)}
Differential privacy has arisen as a gold standard for privacy in situations where it is ok to reveal statistical properties of datasets but not reveal properties of individuals. An algorithm satisfies differential privacy if its output does not significantly change when adding/removing or changing a single record in its input. More formally,
\begin{definition}[($\epsilon,\delta$)-Differential Privacy\cite{export:64346}]\label{def:dp}
A randomized mechanism $M: \mathcal{D}\rightarrow \mathcal{O}$ satisfies $(\epsilon,\delta)$-differential privacy (DP) if
\begin{equation}
Pr[M(D) \in O] \leq e^{\epsilon} Pr[M(D')\in O]+\delta
\end{equation}
for any set $O\subseteq \mathcal{O}$
and any pair of neighboring databases $D,D'\in \mathcal{D}$
such that $D$ and $D'$ differ by adding/removing a record.
\end{definition}

A recent line of work has designed differentially private blocking algorithms as a preprocessing step to APC. DP hides the presence or absence of a single record, and hence the number of candidate matches stays roughly the same on $D_B$ and $D'_B$ that differ in a single record. While this approach seems like it should satisfy all three of our desiderata, we have found that none of the protocols presented in prior work (on DP Blocking) \cite{Inan:2010:PRM:1739041.1739059, Kuzu:2013:EPR:2452376.2452398,DBLP:conf/icde/CaoRBK15} provide an end-to-end privacy guarantee. In fact, each paper in this line of work finds privacy breaches in the prior work.
We also show in the proof of Theorem~\ref{theorem:prldp_limit} (Appendix~\ref{app:prldp}) that even the most recent of these protocols in \cite{DBLP:conf/icde/CaoRBK15} does not satisfy an end-to-end privacy guarantee.
This is because of a fundamental disconnect between the privacy guarantees in the two steps of these algorithms.
DP does not allow learning any fact about the input datasets with certainty, while IND-S2PC (and \prl protocols that satisfy this definition) can reveal the output of the function $f$ truthfully. On the other hand, while DP can reveal aggregate properties of the input datasets with low error, protocols that satisfy IND-S2PC are not allowed to leak any information beyond the output of $f$. Hence, DP and IND-S2PC do not naturally compose.

To summarize, none of the prior approaches that attempt to solve Problem~\ref{ps:prl} satisfy all three of our desiderata.  Approaches that satisfy a strong privacy guarantee (IND-S2PC) are either inefficient or have poor recall. Efficient \prl with blocking or DP blocking fail to provide true end-to-end privacy guarantees.
A correct conceptualization of an end-to-end privacy guarantee is critical for achieving correctness, privacy and efficiency.
Hence, in the following sections, we first define an end-to-end privacy guarantee for \prl to address this challenge (\S~\ref{sec:dprl}),
and then present algorithms in this privacy framework to achieve sub-quadratic efficiency and high recall (\S~\ref{sec:algo}).

\section{Output Constrained DP}
\label{sec:dprl}

Designing efficient and correct algorithms for \prl is challenging and non-trivial because there is no existing formal privacy framework that enables the trade-off between
 correctness, privacy, and efficiency.  In this section, we propose a novel privacy model to achieve this goal.

\subsection{Output Constrained Differential Privacy}\label{sec:dprl_def}
Both IND-S2PC (Def.~\ref{def:ind-s2pc}) and DP (Def.~\ref{def:dp}) ensure the privacy goal of not revealing information about individual records in the dataset. However, there is a fundamental incompatibility between the two definitions. IND-S2PC reveals the output of a function truthfully; whereas, nothing truthful can be revealed under differential privacy. On the other hand, DP reveals noisy yet accurate (to within an approximation factor) aggregate statistics about all the records in the dataset; but, nothing other than the output of a pre-specified function can be revealed under IND-S2PC.

The difference between these privacy definitions can be illustrated by rephrasing the privacy notions in terms of a distance metric imposed on the space of databases. Without loss of generality, assume Alice is the adversary. Let $\mathcal{G}=(V,E)$ denote a graph, where $V$ is the set of all possible databases that Bob could have and $E$ is a set of edges that connect neighboring databases. The distance between any pair of databases is the shortest path distance in $\mathcal{G}$. Intuitively, the adversary Alice's ability to distinguish protocol executions on a pair of databases $D_B$ and $D'_B$ is larger if the shortest path between the databases is larger.

DP can be represented by the set of edges that connect neighboring databases that differ in the presence or absence of one record, $|D_B\backslash D_B'\cup D'_B\backslash D_B|=1$. This means, any pair of databases $D_B$ and $D'_B$ are connected in this graph by a path of {\em finite} length that is equal to the size of their symmetric difference. While an adversary can distinguish protocol executions between some pair of ``far away'' databases, the adversary can never tell with certainty whether the input was a specific database. On the other hand, under IND-S2PC, {\em every pair} of databases that result in the same output for $f(D_A,\cdot)$ for a given $D_A$ are neighbors. However, there is neither an edge nor a path between databases that result in different outputs. Thus the output constraint divides the set of databases into disjoint complete subgraphs  (in fact equivalence classes).

\begin{example}\label{eg:neighbor}
Consider databases with domain $\{1,2,3,4,5,6\}$. Given $D_A=\{1,2\}$, the graph $\mathcal{G}$ for the database instances for $D_B$ are shown in Figure~\ref{fig:graphexamples}. For the graph of differential privacy in Figure~\ref{fig:example_dp}, every pair of database instances that differ in one record is connected by an edge and form a neighboring pair. For instance, $D_B=\{1\}$ and $D'_B=\{1,2\}$ are neighbors under DP. Figure~\ref{fig:example_s2pc} considers an output which consists of the size of $D_B$ and the intersection between $D_B$ and $D_A$. Hence, all the instances in $\mathcal{G}_{IND-S2PC}$ have the same datasize and have the same intersection with $D_A=\{1,2\}$. For example, the fully connected 6 database instances all have 2 records, but have no intersection with $D_A$. The instance $\{1,2\}$ has no neighboring databases, as it is same as the output, and hence none of the records in this database instance requires privacy protection.

Comparing these two graphs, we can see that all instances in $\mathcal{G}_{DP}$ are connected, and hence an adversary can not distinguish protocol executions on any pair of databases with certainty, but is allowed to learn statistical properties (with some error). This is not true under $\mathcal{G}_{IND-S2PC}$, where some instances are disconnected.
For instance, an adversary can distinguish between protocol executions on $\{1,2\}$ and $\{1,5\}$ since they give different outputs when matched with $D_A$.
\end{example}

From Example~\ref{eg:neighbor}, it is clear that the privacy guarantees given by DP and IND-S2PC are different. To ensure scalable record linkage with formal privacy guarantees, we need the best of both worlds: \emph{the ability to reveal records that appear in the match truthfully, the ability to reveal statistics about non-matching records, and yet not reveal the presence or absence of individual non-matching records in the dataset}. Hence, we propose a weaker, but end-to-end, privacy definition for the two party setting.

\begin{figure}[t]
\centering
\subfigure[$\mathcal{G}_{DP}$]{
\includegraphics[scale=0.3]{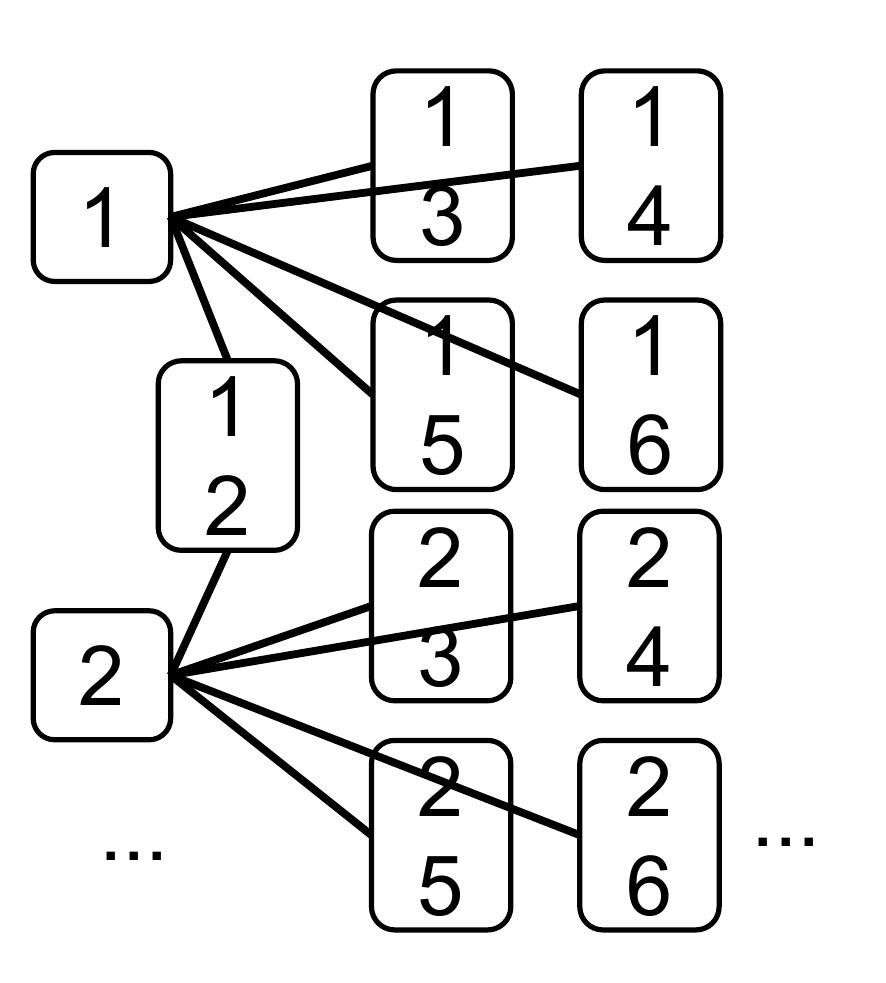} \label{fig:example_dp}
}
~
\subfigure[$\mathcal{G}_{IND-S2PC}$]{
\includegraphics[scale=0.3]{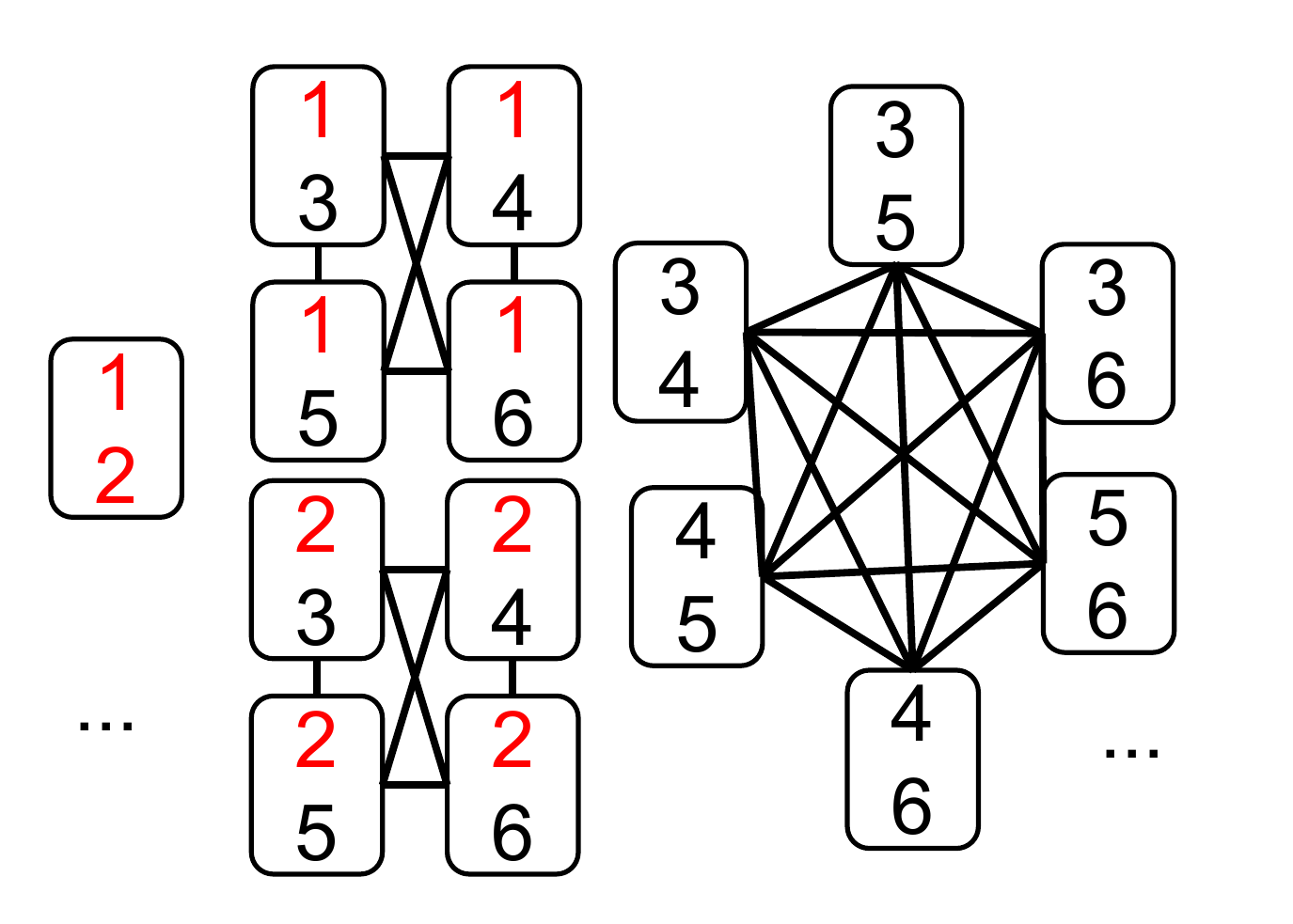}\label{fig:example_s2pc}
}
\caption{Neighboring databases for (a) DP, and (b) IND-S2PC for Example~\ref{eg:neighbor}.}\label{fig:graphexamples}
\end{figure}

\begin{definition}[$f$-Neighbors] \label{def:cneighbor} Given function
$f:\mathcal{D}_A \times \mathcal{D}_B \rightarrow \mathcal{O}$
and $D_A\in \mathcal{D}_A$.
For any pairs of datasets $D_B,D'_B$,
let $\triangle(D_B, D'_B) = D_B\backslash D'_B \cup D'_B \backslash D_B$.
This is the symmetric difference between $D_B$ and $D'_B$, and is the set of records that must be deleted and added to $D_B$ to get $D'_B$.
$D_B$ and $D'_B$ are neighbors w.r.t to $f(D_A, \cdot)$, denoted by $\mathcal{N}(f(D_A, \cdot))$ if
\squishlist
\item[(1)] $f(D_A,D_B) = f(D_A,D'_B)$,
\item[(2)] $\triangle(D_B,D'_B) \neq \emptyset$, and
\item[(3)] there is no database $D_B''\in \mathcal{D}_B$, where $f(D_A,D_B) = f(D_A,D''_B)$, such that
$\triangle(D_B,D''_B) \subset \triangle(D_B,D'_B)$.
\squishend
$\mathcal{N}(f(\cdot,D_B))$ is similarly defined.
\end{definition}
The third condition ensures that $D_B$ and $D'_B$ are minimally different
in terms of record changes.

\begin{definition} [Output Constrained DP]\label{def:output}
A 2-party \prl protocol $\p$ for computing function $f:\mathcal{D}_A \times \mathcal{D}_B \rightarrow \mathcal{O}$ is $(\epsilon_A,\epsilon_B, \delta_A,\delta_B,f)$-constrained differential privacy (DP) if for any $(D_B,D'_B)\in \mathcal{N}(f(D_A,\cdot))$, the views of Alice during the execution of $\p$ to any probabilistic polynomial-time adversary $T$ satisfies
\begin{eqnarray}
&&Pr[T(\vw^{\p}_A(D_A,D_B))=1] \nonumber \\
&\leq & e^{\epsilon_B} Pr[T(\vw^{\p}_A(D_A,D'_B))=1] + \delta_B
\end{eqnarray}
and the same holds for the views of Bob with $\epsilon_A$ and $\delta_A$.
\end{definition}

If $\epsilon_A=\epsilon_B=\epsilon, \delta_a=\delta_b=\delta$, we simply denote it as $(\epsilon,\delta,f)$-constrained DP. Similar to DP, Output Constrained DP satisfies composition properties that are useful for protocol design.

\begin{theorem}[Sequential Composition]\label{theorem:seq}
Given $\p_1$ is $(\epsilon_1,\delta_1,f)$-constrained DP, and $\p_2$ is $(\epsilon_2,\delta_2,f)$-constrained DP, then applying these two protocols sequentially, i.e. $\p_2(D_A, D_B,\p_1(D_A,D_B))$ satisfies $(\epsilon_1+\epsilon_2, \delta_1+\delta_2,f)$-constrained DP.
\end{theorem}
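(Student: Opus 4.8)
The plan is to adapt the classical sequential composition theorem of differential privacy to our setting, realized as a \emph{hybrid argument} over protocol views so that it respects the computational (PPT-adversary) form of Definition~\ref{def:output}. Fix $D_A$ and a pair $(D_B,D'_B)\in\mathcal{N}(f(D_A,\cdot))$; since the $f$-neighbor relation depends only on $f$ and not on any protocol, this same pair is admissible when invoking the guarantees of both $\p_1$ and $\p_2$. The starting observation is that Alice's view of the composed execution decomposes as
\[
\vw^{\p}_A(D_A,D_B)=\bigl(\vw^{\p_1}_A(D_A,D_B),\ \vw^{\p_2}_A(D_A,D_B,O_1)\bigr),
\]
where $O_1$ is the common output of $\p_1$ (held by both parties and contained in $\vw^{\p_1}_A(D_A,D_B)$), and where, conditioned on the whole execution of $\p_1$, the coins driving $\p_2$ are independent. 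Throughout I treat $\p_2$ with its auxiliary input held fixed, reading the hypothesis ``$\p_2$ is $(\epsilon_2,\delta_2,f)$-constrained DP'' as saying that $\p_2(\cdot,\cdot,o_1)$ satisfies that guarantee uniformly over the value $o_1$ handed to it by $\p_1$.

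Next I would insert one intermediate hybrid between $W_0:=\vw^{\p}_A(D_A,D_B)$ and $W_2:=\vw^{\p}_A(D_A,D'_B)$: let $W_1$ be the distribution that runs $\p_1$ on $(D_A,D'_B)$, producing a view $v_1$ with output $O_1$, and then runs $\p_2$ on $(D_A,D_B,O_1)$, outputting the concatenation. In the transition $W_0\to W_1$ only the data fed to $\p_1$ changes, while $O_1$ and then the $\p_2$-view (still generated on $D_B$) are a randomized post-processing of $\vw^{\p_1}_A$. So I build a PPT distinguisher $S$ that, given a $\p_1$-view $v_1$, reads off its output $O_1$, internally simulates $\p_2$ on $(D_A,D_B,O_1)$ to obtain $v_2$, and returns $T(v_1,v_2)$; then $Pr[T(W_0)=1]=Pr[S(\vw^{\p_1}_A(D_A,D_B))=1]$ and $Pr[T(W_1)=1]=Pr[S(\vw^{\p_1}_A(D_A,D'_B))=1]$, and the $(\epsilon_1,\delta_1,f)$-constrained DP of $\p_1$ gives $Pr[T(W_0)=1]\le e^{\epsilon_1}Pr[T(W_1)=1]+\delta_1$. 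In the transition $W_1\to W_2$ the execution of $\p_1$ (hence $v_1$ and $O_1$) is identically distributed in both, and only the data fed to $\p_2$ changes from $D_B$ to $D'_B$; conditioning on $(v_1,O_1)$, applying the $(\epsilon_2,\delta_2,f)$-constrained DP of $\p_2(\cdot,\cdot,O_1)$ to the PPT distinguisher $v_2\mapsto T(v_1,v_2)$, and averaging over $(v_1,O_1)$ drawn from the execution of $\p_1$ on $(D_A,D'_B)$ yields $Pr[T(W_1)=1]\le e^{\epsilon_2}Pr[T(W_2)=1]+\delta_2$. Naively chaining the two bounds already gives $Pr[T(W_0)=1]\le e^{\epsilon_1+\epsilon_2}Pr[T(W_2)=1]+\delta_1+e^{\epsilon_1}\delta_2$.

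I expect the main obstacle to be tightening this additive slack from $\delta_1+e^{\epsilon_1}\delta_2$ to the claimed $\delta_1+\delta_2$. As in the standard proof of adaptive composition for approximate DP, this requires not applying the $\p_2$ bound crudely before averaging: instead one works, for each $(v_1,O_1)$, with the truncated $[0,1]$-valued quantity $\min\bigl(g(v_1,O_1),\,e^{\epsilon_2}h(v_1,O_1)\bigr)$, where $g$ and $h$ denote the acceptance probability of $T(v_1,\cdot)$ when fed a $\p_2$-view generated on $D_B$ and on $D'_B$ respectively (with auxiliary input $O_1$). The excess of $Pr[T(W_0)=1]$ over the average of this truncated quantity under $\p_1$ on $D_B$ is at most $\delta_2$; applying the guarantee of $\p_1$ to this $[0,1]$-valued test (the constrained-DP inequality being stated for, and extending to, $[0,1]$-valued tests) makes $e^{\epsilon_1}$ multiply only a quantity bounded by $e^{\epsilon_2}Pr[T(W_2)=1]$ rather than $\delta_2$, giving exactly $\delta_1+\delta_2$. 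For pure-DP instantiations ($\delta_i=0$) this step is vacuous; in our computational setting the delicate point is that this truncated test must be realizable up to negligible error by a PPT distinguisher -- the well-known subtlety in composing computational differential privacy -- which is where most of the care goes. A final routine check is that all reductions remain polynomial time, which holds because $\p_2$ is a polynomial-time protocol and no cryptographic assumption is invoked, so the reduction is tight and introduces no loss beyond $\delta_1+\delta_2$. The symmetric inequality for Bob's view follows by the same argument, completing the proof that $\p$ is $(\epsilon_1+\epsilon_2,\delta_1+\delta_2,f)$-constrained DP.
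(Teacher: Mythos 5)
Your proposal is correct and follows the same basic strategy as the paper's proof: decompose Alice's view of the composed protocol over the view of $\p_1$, apply $\p_2$'s guarantee conditioned on that view, apply $\p_1$'s guarantee, and chain. The differences are in ordering and in rigor. The paper works inside a single integral, first replacing $D_B$ by $D'_B$ in the $\p_2$ factor and then in the $\p_1$ density, and simply writes $\delta_1+\delta_2$ on the last line; as written, its $\delta_1$ term actually appears multiplied by $\int_x e^{\epsilon_2}\Pr[T(\cdot,x)=1]\,dx$, and the $\p_1$ inequality is applied pointwise to densities rather than to a $[0,1]$-valued test, so the paper's chain silently elides exactly the $e^{\epsilon}\delta$ cross term that you make explicit. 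Your two-hybrid packaging (changing $\p_1$'s input first via a post-processing reduction, then $\p_2$'s input conditionally) plus the truncated test $\min\bigl(g, e^{\epsilon_2}h\bigr)$ is the standard, and correct, way to actually obtain the claimed $\delta_1+\delta_2$ rather than $\delta_1+e^{\epsilon_1}\delta_2$; in that sense your writeup justifies the theorem's constants where the paper's does not. The one caveat you flag yourself --- that the truncated test and the extension of the definition from $\{0,1\}$-valued to $[0,1]$-valued PPT distinguishers must be realized efficiently in the computational model --- is genuine, is also unaddressed in the paper, and is the only place where your argument is a sketch rather than a complete proof. Your implicit reading of the hypothesis on $\p_2$ (that its guarantee holds uniformly over the auxiliary input $o_1$) matches what the paper assumes when it writes $\Pr[T(\vw^{\p_2}_A(D_A,D_B,x))=1]\le e^{\epsilon_2}\Pr[T(\vw^{\p_2}_A(D_A,D'_B,x))=1]+\delta_2$ for each fixed $x$.
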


\begin{theorem}[Post-processing] \label{theorem:pp}
Given $\p$ is $(\epsilon,\delta,f)$-constrained DP, and let $\out^{\p}(D_A,D_B)$ be the output after the execution of $\p$, then any probabilistic polynomial (in $\kappa$) function  $g(\out^{\p}(D_A,D_B))$ satisfies $(\epsilon,\delta,f)$-constrained DP.
\end{theorem}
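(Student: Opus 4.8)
The plan is to reduce the claim to the elementary fact that a distinguisher for a post-processed distribution yields a distinguisher for the original distribution, carried out with care in the computational, two-party setting. Let $\p'$ denote the protocol that runs $\p$ and then outputs $g(\out^{\p}(D_A,D_B))$, drawing fresh coins for the (possibly randomized) map $g$. The first step is to observe that Alice's view under $\p'$ is a probabilistic polynomial-time function of her view under $\p$: as noted after Definition~\ref{def:ind-s2pc}, the output $\out^{\p}(D_A,D_B)$ is implicit in each party's own view, so Alice can recompute $g(\out^{\p}(D_A,D_B))$ locally from $\vw^{\p}_A(D_A,D_B)$ using her own randomness. Hence there is a probabilistic polynomial-time randomized map $h$ with $\vw^{\p'}_A(D_A,D_B) = h(\vw^{\p}_A(D_A,D_B))$ for all inputs, and symmetrically for Bob. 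Since $g$'s output is polynomial in $\kappa$ by hypothesis and $\vw^{\p}_A$ is polynomial in $\kappa$, the new view is still polynomial in $\kappa$, so $h$ can indeed be taken to be computable in probabilistic polynomial time.

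Second, I would fold $h$ into the adversary. Fix any $(D_B,D'_B)\in\mathcal{N}(f(D_A,\cdot))$; note that the $f$-neighbor relation depends only on $f$ and $D_A$ (Definition~\ref{def:cneighbor}) and is therefore unchanged by post-processing. Let $T$ be any probabilistic polynomial-time adversary against $\p'$, and define $T'$ by: on input $x$, sample $h$'s internal coins, compute $h(x)$, and run $T$ on the result. Then $T'$ is again probabilistic polynomial-time, and $Pr[T(\vw^{\p'}_A(D_A,D_B))=1] = Pr[T'(\vw^{\p}_A(D_A,D_B))=1]$, and likewise for $D'_B$. Applying the $(\epsilon,\delta,f)$-constrained DP guarantee of $\p$ to the adversary $T'$ gives $Pr[T(\vw^{\p'}_A(D_A,D_B))=1] \le e^{\epsilon} Pr[T(\vw^{\p'}_A(D_A,D'_B))=1] + \delta$, with the matching inequality for Bob obtained identically. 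Thus $\p'$ is $(\epsilon,\delta,f)$-constrained DP.

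I expect the only real work to be definitional bookkeeping rather than mathematics: pinning down exactly what ``the view of the post-processed protocol'' is --- appending $g(\out^{\p})$ to the existing view, or releasing $g(\out^{\p})$ alone --- and checking that in either reading it is a polynomial-in-$\kappa$, probabilistic-polynomial-time-computable function of the original view, so that the adversary class is closed under the reduction $T \mapsto T'$. One must also be careful to account for $g$'s own randomness by charging it to $T'$ (equivalently, to $h$) rather than to the protocol's coins, so that the probabilities compared in the two executions are over the correct distributions. Once this is arranged the bound transfers with no loss in $\epsilon$ or $\delta$, exactly as in the classical post-processing lemma for differential privacy, and the same argument yields the asymmetric version with $(\epsilon_A,\epsilon_B,\delta_A,\delta_B)$.
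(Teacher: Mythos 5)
Your proposal is correct and follows essentially the same route as the paper's (much terser) proof: the paper simply notes that the efficient function $g$ can be composed with the adversary $T$ to form a new adversary, so a violation after post-processing would imply a violation for $\p$ itself, which is exactly your reduction $T \mapsto T'$ stated in the contrapositive. Your version just spells out the definitional bookkeeping (the view of the post-processed protocol as an efficient function $h$ of the original view, and the closure of the adversary class under composition with $h$) that the paper leaves implicit.
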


See Appendix~\ref{app:seqproof} and ~\ref{app:ppproof} for the proofs of Theorem~\ref{theorem:seq} and Theorem~\ref{theorem:pp} respectively.
Output constrained DP inherits
other desirable properties from DP,
for instance, its robustness to attacks \cite{psc_smith:2015, He:2014:BPT:2588555.2588581}.
We omit details due to space constraints.

\subsection{Differential Privacy for Record Linkage}\label{sec:dprl2}
\prl can be a direct application of Output Constrained Differential Privacy by considering $f_{\Join_m}$. We have the following theorem to define the neighboring databases for \prl.
\begin{theorem}[Neighbors for \prl]\label{theorem:neighbor_prl}
Given the function $f_{\Join_m}$ in \prl, if $(D_B,D'_B)\in \mathcal{N}(f_{\Join_m}(D_A, \cdot))$ for a given $D_A\in\mathcal{D}$, then $|D_B|=|D'_B|$, $D_B$ and $D'_B$ must differ in only one pair of non-matching records with respect to the given $D_A$, i.e. $D'_B=D_B-b+b'$ and $b\neq b'$, where $m(b,a)=0$ and $m(b',a)=0$ for all $a \in D_A$.
\end{theorem}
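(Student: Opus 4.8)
The plan is to translate each of the three conditions in the definition of $f$-neighbors (Definition~\ref{def:cneighbor}) into a structural constraint, for the specific choice $f = f_{\Join_m}$ whose value is the triple $(|D_A|, |D_B|, D_A \Join_m D_B)$. Since $|D_B|$ is literally one coordinate of $f_{\Join_m}(D_A, D_B)$, condition~(1) immediately gives $|D_B| = |D'_B|$; combined with condition~(2), $\triangle(D_B, D'_B) \neq \emptyset$, this shows $|D_B \setminus D'_B| = |D'_B \setminus D_B| = k$ for some integer $k \geq 1$, i.e., passing from $D_B$ to $D'_B$ deletes $k$ records and inserts $k$ records.

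Next I would show that every record in $\triangle(D_B, D'_B)$ is non-matching with respect to $D_A$, using the observation that a matching pair $(a,b) \in D_A \Join_m D_B$ is \emph{witnessed} solely by the presence of $b$ in the database: the join pairs whose second coordinate equals $b$ depend only on whether $b \in D_B$. Hence if some $b \in D_B \setminus D'_B$ matched some $a \in D_A$, then $(a,b) \in (D_A \Join_m D_B) \setminus (D_A \Join_m D'_B)$, contradicting condition~(1); the same applies symmetrically to $D'_B \setminus D_B$. A consequence I then use is that inserting or deleting non-matching records does not change $D_A \Join_m (\cdot)$. Finally I invoke the minimality condition~(3) to force $k = 1$: if $k \geq 2$, pick $b \in D_B \setminus D'_B$ and $b' \in D'_B \setminus D_B$ (distinct, since $b \in D_B$ and $b' \notin D_B$, and both non-matching by the previous step), and set $D''_B = D_B - b + b'$. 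Then $|D''_B| = |D_B|$ and $D_A \Join_m D''_B = D_A \Join_m D_B$, so $f_{\Join_m}(D_A, D''_B) = f_{\Join_m}(D_A, D_B)$, while $\triangle(D_B, D''_B) = \{b, b'\}$ is a proper subset of $\triangle(D_B, D'_B)$ (proper because $|\triangle(D_B, D'_B)| = 2k \geq 4$), contradicting~(3). Hence $k = 1$, so $D'_B = D_B - b + b'$ with $b \neq b'$ and both records non-matching w.r.t. $D_A$, which is the claim.

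The argument is largely bookkeeping, and the only step that needs genuine care is the middle one — pinning down exactly why the presence of a single matching record is reflected in $f_{\Join_m}$, hence why all symmetric-difference records must be non-matching — since this is precisely what makes the swap in the last step output-preserving. One should also be careful to treat databases as sets, so that $D''_B = D_B - b + b'$ is a genuine one-in-one-out swap whose symmetric difference with $D_B$ is exactly $\{b,b'\}$; if multisets were allowed the counting changes slightly but the structure of the argument is unchanged.
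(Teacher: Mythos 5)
Your proposal is correct and follows essentially the same route as the paper's proof: derive $|D_B|=|D'_B|$ and equality of the join outputs from condition (1), argue that any matching record in the symmetric difference would change $D_A\Join_m D_B$, and use the minimality condition (3) to rule out more than one swapped pair. Your explicit construction of $D''_B = D_B - b + b'$ just fills in the detail the paper leaves implicit when it says ``a subset of $\triangle(D_B,D'_B)$ can give a valid $D''_B$.''
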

\begin{proof}
The output constraint $f_{\Join_m}(D_A,D_B)=f_{\Join_m}(D_A,D'_B)$ implies that $|D_B|=|D'_B|$ and $D_A\Join_m D_B = D_A \Join_m D'_B$.
If $D_B$ and $D'_B$ differ in a matching record, then their matching outputs with a given $D_A$ are different. Hence $D_B$ and $D'_B$ must
differ in one or more non-matching records.
In addition, to ensure $|D_B|=|D'_B|$, the number of non-matching records added to $D_B$ to get $D'_B$ must be the same as the number of non-matching records removed from $D_B$. If $\triangle(D_B,D'_B)$ contains more than one pair of record additions and deletions, a subset of $\triangle(D_B,D'_B)$ can give a valid $D''_B$ such that $f_{\Join_m}(D_A,D_B)=f_{\Join_m}(D_A,D''_B)$. Hence, a neighboring pair $D_B,D'_B$ differ by exactly one pair of non-matching records.
\end{proof}

Next we define the privacy guarantee that allows us to design efficient \prl protocols with provable privacy guarantees.
\begin{definition}[\dprl] \label{def:dprl}
A 2-party \prl protocol $\p$ for computing function $f_{\Join_m}:\mathcal{D}_A \times \mathcal{D}_B \rightarrow \mathcal{O}$ is $(\epsilon_A,\epsilon_B, \delta_A,\delta_B)$-\dprl if
$\p$ satisfies $(\epsilon_A,\epsilon_B,\delta_A,\delta_B,f_{\Join_m}$)-constrained DP.
\end{definition}

\subsection{Related Privacy Definitions}
In this section we discuss related privacy definitions and their connections with \dprl.
First, both \dprl and IND-S2PC assume a computationally bounded model. We show that \dprl is a weaker guarantee than IND-S2PC.
\begin{theorem}\label{theorem:dprl_ind-s2pc}
All IND-S2PC protocols for record linkage satisfy (0,$\negligible(\kappa)$)-\dprl.
\end{theorem}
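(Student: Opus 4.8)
The plan is to observe that every $f_{\Join_m}$-neighbor pair is, in particular, a pair of databases that produce the same matching output, and that IND-S2PC already guarantees computational indistinguishability on all such pairs; since $e^{0}=1$ and $\negligible(\kappa)$ is exactly the slack IND-S2PC provides, this is precisely what $(0,\negligible(\kappa))$-\dprl demands.

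First I would unwind the definitions. By Definition~\ref{def:dprl}, proving that $\p$ is $(0,\negligible(\kappa))$-\dprl is the same as proving $\p$ is $(0,0,\negligible(\kappa),\negligible(\kappa),f_{\Join_m})$-constrained DP in the sense of Definition~\ref{def:output}. So I would fix $D_A$ and take an arbitrary pair $(D_B,D'_B)\in\mathcal{N}(f_{\Join_m}(D_A,\cdot))$. By condition~(1) of Definition~\ref{def:cneighbor}, any such neighbor pair satisfies $f_{\Join_m}(D_A,D_B)=f_{\Join_m}(D_A,D'_B)$ (conditions~(2) and~(3) are not even needed). I would then invoke the IND-S2PC guarantee of $\p$ (Definition~\ref{def:ind-s2pc}) for exactly this pair: for every probabilistic polynomial-time adversary $T$,
$$Pr[T(\vw^{\p}_A(D_A,D_B))=1]\ \leq\ Pr[T(\vw^{\p}_A(D_A,D'_B))=1] + \negligible(\kappa),$$
which is the required bound with $\epsilon_B=0$ (so $e^{\epsilon_B}=1$) and $\delta_B=\negligible(\kappa)$. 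The symmetric statement for Bob follows identically: any $(D_A,D'_A)\in\mathcal{N}(f_{\Join_m}(\cdot,D_B))$ satisfies $f_{\Join_m}(D_A,D_B)=f_{\Join_m}(D'_A,D_B)$, and the IND-S2PC clause for Bob's view gives the matching inequality on $\vw^{\p}_B(\cdot,D_B)$ with $\epsilon_A=0$, $\delta_A=\negligible(\kappa)$.

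There is essentially no technical obstacle here — the whole content is a containment of neighbor relations — so the ``hard part'' is only bookkeeping: (i) checking that the quantifier structure lines up, namely that IND-S2PC fixes one party's database and ranges over \emph{all} output-preserving replacements of the other party's database, a set that in particular contains every \dprl neighbor; and (ii) confirming that a single fixed negligible function may serve as the additive term $\delta$ in the output-constrained-DP inequalities, which holds because $\negligible(\kappa)=o(\kappa^{-c})$ for all $c$ is exactly the slack already built into Definition~\ref{def:ind-s2pc}. Conceptually, the theorem just records that IND-S2PC sits at the most restrictive end of the \dprl spectrum: it enforces $(\epsilon{=}0)$-indistinguishability not merely on minimally different pairs of non-matching records, but on every pair of databases consistent with the released set of matches and dataset sizes.
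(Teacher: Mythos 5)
Your proposal is correct and follows essentially the same route as the paper's (much terser) proof: the $f_{\Join_m}$-neighbor relation is contained in the set of output-preserving pairs over which IND-S2PC quantifies, so the IND-S2PC inequality with $e^{0}=1$ and additive slack $\negligible(\kappa)$ is literally the $(0,\negligible(\kappa))$-\dprl{} condition. Your write-up is in fact more careful than the paper's, which asserts an ``equivalence'' without spelling out the neighbor-set containment that you make explicit.
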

\begin{proof}
IND-S2PC for record linkage is equivalent to \dprl with $\epsilon=0$ and $\delta=\negligible(\kappa)$. The $\delta$ in \dprl is always greater than $\negligible(\kappa)$ but smaller than $o(1/n)$.
\end{proof}
Hence, APC, PSI, and PSI+X techniques that satisfy IND-S2PC, guarantee (0,$\negligible(\kappa)$)-\dprl as well.

Indistinguishable computationally differential privacy (IND-CDP-2PC) \cite{Mironov:2009:CDP:1615970.1615981} is another privacy notion under a computationally bounded model, and is a direct extension of DP to the two party setting where both parties are computationally bounded. \dprl is weaker than IND-CDP-2PC. Formally
\begin{theorem}\label{theorem:dprl_cdp}
If a protocol for record linkage satisfies $\epsilon/2$-IND-CDP-2PC, then it satisfies $(\epsilon,\delta)$-\dprl.
\end{theorem}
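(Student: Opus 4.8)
The plan is to reduce the \dprl neighbor relation---which, by Theorem~\ref{theorem:neighbor_prl}, amounts to a single \emph{swap} of one non-matching record for another in $D_B$---to two elementary steps along the add/remove neighbor relation that underlies $(\epsilon,\delta)$-DP and hence IND-CDP-2PC, and then to invoke the $\epsilon/2$-IND-CDP-2PC guarantee once per step, in the style of a group-privacy argument at distance two.

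Concretely, I would first apply Theorem~\ref{theorem:neighbor_prl}: for any $D_A$ and any $(D_B,D'_B)\in\mathcal{N}(f_{\Join_m}(D_A,\cdot))$ we have $|D_B|=|D'_B|$ and $D'_B=D_B-b+b'$ for a single pair $b\neq b'$ of records that are non-matching with respect to $D_A$. Introduce the intermediate instance $D''_B=D_B-b$. Then $(D_A,D_B)$ and $(D_A,D''_B)$ differ by the removal of the single record $b$, while $(D_A,D''_B)$ and $(D_A,D'_B)=(D_A,D''_B+b')$ differ by the addition of the single record $b'$; in both cases this is exactly the neighbor relation of Definition~\ref{def:dp}, so each pair is a valid IND-CDP-2PC neighbor pair. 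Note that the output constraint $f_{\Join_m}(D_A,D_B)=f_{\Join_m}(D_A,D'_B)$ is not needed for this detour: $D''_B$ need not respect it, and IND-CDP-2PC imposes no such requirement on its neighbors---which is precisely what makes the distance-two path through $D''_B$ available.

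Next, fixing an arbitrary probabilistic polynomial-time adversary $T$ and considering Alice's view, I would apply the $\epsilon/2$-IND-CDP-2PC guarantee to the pair $(D_A,D_B),(D_A,D''_B)$ and then to $(D_A,D''_B),(D_A,D'_B)$, obtaining
\[
Pr[T(\vw^{\p}_A(D_A,D_B))=1]\le e^{\epsilon/2}Pr[T(\vw^{\p}_A(D_A,D''_B))=1]+\negligible(\kappa)
\]
and
\[
Pr[T(\vw^{\p}_A(D_A,D''_B))=1]\le e^{\epsilon/2}Pr[T(\vw^{\p}_A(D_A,D'_B))=1]+\negligible(\kappa).
\]
Chaining these two inequalities gives $Pr[T(\vw^{\p}_A(D_A,D_B))=1]\le e^{\epsilon}Pr[T(\vw^{\p}_A(D_A,D'_B))=1]+(e^{\epsilon/2}+1)\,\negligible(\kappa)$. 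The residual additive term is still negligible in $\kappa$ and hence lies below the $\delta$ permitted by \dprl. The symmetric argument with the roles of Alice and Bob exchanged handles Bob's view, and together these establish $(\epsilon,\epsilon,\delta,\delta,f_{\Join_m})$-constrained DP, i.e.\ $(\epsilon,\delta)$-\dprl.

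The one place that needs care is the two-step composition of the \emph{indistinguishability}-based guarantee: the same fixed PPT adversary $T$ must be used across both steps, which is legitimate because the IND-CDP-2PC bound holds simultaneously for every PPT distinguisher, and one must check that the accumulated slack $(e^{\epsilon/2}+1)\,\negligible(\kappa)$ remains negligible and therefore within the $\delta$ budget of \dprl (recall $\delta$ is only required to dominate $\negligible(\kappa)$). It is also worth noting explicitly why the factor of $2$ is intrinsic: a \dprl neighbor pair sits at distance two in the add/remove neighbor graph of DP and IND-CDP-2PC, so spending $\epsilon/2$ per elementary step is exactly what lands the bound at $\epsilon$ overall.
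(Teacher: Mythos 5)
Your proposal is correct and follows essentially the same route as the paper: the paper's (very terse) proof likewise observes that $\epsilon/2$-IND-CDP-2PC at symmetric difference one yields an $\epsilon$ guarantee at symmetric difference two, and that \dprl neighbors form a subset of such pairs; your intermediate database $D''_B$ and two-step chaining simply make that group-privacy argument explicit. The extra care you take with the accumulated $(e^{\epsilon/2}+1)\,\negligible(\kappa)$ slack and with noting that $D''_B$ need not satisfy the output constraint are welcome details the paper leaves implicit.
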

The factor 2 arises since neighboring databases protected by \dprl have a symmetric difference of 2,
while neighboring databases under IND-CDP-2PC have a symmetric difference of 1.
The detailed proof can be found in Appendix~\ref{app:cdp}.

Blowfish Privacy \cite{He:2014:BPT:2588555.2588581} generalizes differential privacy to problems where constraints on the input database must hold (e.g., when certain query answers have been released by the database exactly). Output Constrained DP, including \dprl, is an extension of Blowfish in two ways:
(1) from a computationally unbounded model to a computationally bounded model;
(2) from a single-party setting to a two-party setting.
Note that with the output constraint $f_{\Join_m}(D_A,D_B)=f_{\Join_m}(D_A,D'_B)$ for record linkage, the number of different records between neighboring databases $D_B$ and $D'_B$ is only two.
This is not necessarily true for other applications of Output Constrained DP, or Blowfish Privacy. This property is desirable for DP based algorithms since larger distances between neighboring databases typically require larger perturbation to hide the difference between neighbors resulting in poorer utility.

Another instantiation of Blowfish privacy, called Protected DP \cite{Kearns26012016}, aims to ensure the privacy of
a protected subpopulation. In contrast, an unprotected ``targeted'' subpopulation receives no privacy guarantees.
In \dprl, one could think of the non-matching records as the protected subpopulation, and the matching records as the targeted subpopulation. However, unlike in Protected DP, in DPRL the set of protected records are learned as an output of the \dprl protocol, and hence are not available as an input to the protocol like the targeted subpopulation in the Protected DP algorithms.

\section{Protocols for \dprl }
\label{sec:algo}

In this section, we introduce protocols that satisfy \dprl and permit a 3-way trade-off between correctness, privacy and efficiency. We first present a class of protocols
that achieves $(\epsilon,\negligible(\kappa))$-\dprl by using a blocking strategy that satisfies {\em local} differential privacy (DP).
Though these protocols achieve high recall with a sufficiently small privacy parameter, they only achieve a constant factor speedup in efficiency. Next, we present the Laplace Protocol (\lap) that achieves all three desiderata of high recall, privacy and subquadratic efficiency.
This protocol hides non-matching records by adding Laplace noise to the blocking strategy. We also show that
attempts from prior work to use Laplace noise in blocking fail to
satisfy \dprl (Theorem~\ref{theorem:prldp_limit}). Moreover, we design a Sort \& Prune (SP) heuristic that is used in conjunction with \lap (as well as the local DP based protocols) and helps additionally tradeoff efficiency and recall. Finally, we present the Greedy Match \& Clean heuristic optimization (GMC), that can further improve efficiency. All the protocols presented in this section are proven to satisfy \dprl.

\subsection{Local DP Protocol}
Let $\block$ be a blocking that randomly hashes records into a pre-specified set of $k$ bins, such that for all $i \in [1\ldots k]$,
\begin{equation}
\Pr[\block(b) = i] \leq e^{\epsilon} \Pr[\block(b')=i].
\end{equation}
Such a blocking $\block$ satisfies $\epsilon$-local DP (as defined in Appendix~\ref{app:localdpdef}), since each record is perturbed \emph{locally} independent of the other records. We show that protocols that combine a local differentially private blocking with IND-S2PC protocols for record linkage can achieve $(\epsilon, \negligible(\kappa))$-\dprl.

\begin{theorem}\label{theorem:localdp}
All IND-S2PC protocols for record linkage with $\epsilon$-local differentially private blocking satisfies $(\epsilon,\negligible(\kappa))$-\dprl.
\end{theorem}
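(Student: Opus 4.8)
The plan is to combine the privacy of the local-DP blocking step with the security guarantee of the downstream IND-S2PC comparison, using the sequential composition property and the fact that IND-S2PC implies $(0,\negligible(\kappa))$-\dprl (Theorem~\ref{theorem:dprl_ind-s2pc}). Concretely, the protocol proceeds in two stages: (i) each party locally applies the blocking hash $\block$ to each of its records to obtain bin assignments, and (ii) the parties run an IND-S2PC protocol for record linkage on the candidate matches induced by the blocking strategy. By Theorem~\ref{theorem:neighbor_prl}, a neighboring pair $(D_B, D'_B)\in\mathcal{N}(f_{\Join_m}(D_A,\cdot))$ satisfies $D'_B = D_B - b + b'$ where $b,b'$ are both non-matching with respect to $D_A$. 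I will fix such a pair and bound the ratio of the probabilities that any p.p.t.\ adversary $T$ outputs $1$ on Alice's view (the argument for Bob is symmetric but trivial, since changing $D_B$ does not change Bob's input; the real work is bounding Alice's view, which depends on $D_B$ only through the blocked candidate set and the S2PC transcript).

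First I would condition on the bin assignment of the single differing record. Let $Z$ (resp.\ $Z'$) denote the random bin that $\block$ assigns to $b$ (resp.\ $b'$); all other records of $D_B$ are hashed identically in both executions since $\block$ acts independently and locally on each record. For each possible bin value $i\in[k]$, condition on the event $\{\block(b)=i\}$ in the first execution and $\{\block(b')=i\}$ in the second. Conditioned on these events, the two executions have \emph{identical} distributions over the entire blocked input to the S2PC stage, hence over Alice's view; call this common conditional distribution $\mu_i$. Then
\begin{align}
Pr[T(\vw^{\p}_A(D_A,D_B))=1] &= \sum_{i=1}^k \Pr[\block(b)=i]\cdot \Pr_{\mu_i}[T=1],\\
Pr[T(\vw^{\p}_A(D_A,D'_B))=1] &= \sum_{i=1}^k \Pr[\block(b')=i]\cdot \Pr_{\mu_i}[T=1].
\end{align}
Applying the $\epsilon$-local-DP guarantee $\Pr[\block(b)=i]\le e^{\epsilon}\Pr[\block(b')=i]$ termwise and summing gives the bound
\[
Pr[T(\vw^{\p}_A(D_A,D_B))=1] \le e^{\epsilon}\, Pr[T(\vw^{\p}_A(D_A,D'_B))=1].
\]
This would establish $(\epsilon, 0)$-\dprl for the idealized protocol in which the S2PC stage leaks exactly $f_{\Join_m}$.

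The one subtlety — and the step I expect to be the main obstacle — is that a real IND-S2PC subprotocol is only \emph{computationally} indistinguishable from the ideal leakage, so conditioned on $\{\block(b)=i\}$ vs.\ $\{\block(b')=i\}$ the two views are not literally identical but only computationally indistinguishable up to $\negligible(\kappa)$, and moreover the blocked candidate set itself may differ between the two conditional worlds when $i$ happens to be a bin that participates in the blocking strategy (the records $b$ and $b'$ land in the same bin $i$, but their presence can change which candidate pairs get securely compared). I would handle this by a hybrid argument: since both $b$ and $b'$ are non-matching with respect to every $a\in D_A$, any candidate pair involving them contributes nothing to $f_{\Join_m}(D_A,\cdot)$, so the ideal functionality's output is unchanged, and the IND-S2PC guarantee (Definition~\ref{def:ind-s2pc}) says Alice's view on the two blocked inputs is computationally indistinguishable. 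Inserting this $\negligible(\kappa)$ slack into each of the $k$ conditional terms — with $k$ polynomial in $\kappa$, so $k\cdot\negligible(\kappa)$ is still negligible — upgrades the clean bound above to
\[
Pr[T(\vw^{\p}_A(D_A,D_B))=1] \le e^{\epsilon}\, Pr[T(\vw^{\p}_A(D_A,D'_B))=1] + \negligible(\kappa),
\]
which is exactly $(\epsilon,\negligible(\kappa))$-\dprl. Equivalently, one can phrase the whole argument as sequential composition (Theorem~\ref{theorem:seq}) of the $(\epsilon,0,f_{\Join_m})$-constrained-DP local blocking step with the $(0,\negligible(\kappa),f_{\Join_m})$-constrained-DP S2PC step, which yields $(\epsilon,\negligible(\kappa),f_{\Join_m})$-constrained DP, i.e.\ $(\epsilon,\negligible(\kappa))$-\dprl.
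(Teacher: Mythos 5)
Your proposal is correct and follows essentially the same route as the paper's proof sketch: reduce to a single differing non-matching record via Theorem~\ref{theorem:neighbor_prl}, use the $\epsilon$-local-DP guarantee on that record's bin assignment to bound the ratio of candidate-match distributions termwise, and observe that since $b$ and $b'$ match nothing in $D_A$ the downstream IND-S2PC stage contributes only a $\negligible(\kappa)$ slack. Your version is simply more explicit about the conditioning decomposition and the hybrid argument for the computational indistinguishability of the S2PC transcripts, both of which the paper leaves implicit.
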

The proof can be found in Appendix~\ref{app:localdp}. Such local differentially private protocols can be constructed from well known local differentially private algorithms based on randomized response (RR) \cite{Dwork:2014:AFD:2693052.2693053} or the Johnson-Lindenstrauss (JL) transformation \cite{Blocki:2012:JTI:2417500.2417899}, where each record is hashed independent of others. We refer the reader to Appendix~\ref{app:localdp} for a concrete blocking algorithm based on RR. We show that while this algorithm permits high recall and privacy, it does not improve efficiency by more than a constant factor (a function of $\epsilon$) (Theorem~\ref{theorem:localdp_constantfactor}).  Whether any local DP based blocking algorithms can achieve subquadratic efficiency is an interesting open question.

\subsection{Laplace Protocol (\lap)}\label{sec:basicLap}
\subsubsection{Algorithm Description}
In this protocol, Alice and Bob agree on a blocking function $\block$ with $k$ bins and strategy $\block^S$, which we take as input to the protocol.
The Laplace Protocol (\lap, as shown in Algorithm~\ref{algo:lap}) works by inserting a carefully chosen number of \emph{dummy} records into each bin of the blocking strategy such that the bin sizes are differentially private. While candidate matches may contain dummy records, they do not contribute to the output set of matches, because the dummy records do not match {\em any} record. These candidate matches are then securely matched using an IND-S2PC algorithm.

\begin{algorithm}[t]
{\small
    \SetKwInOut{Input}{Input}
    \SetKwInOut{Output}{Output}
    \Input{$D_A$,$D_B$, $\epsilon_A,\epsilon_B$, $\delta_A,\delta_B$, $\block$(including $\block^S$)}
    \Output{$O$}
    \ \tcp{Alice performs the following: }
    	\ $\tilde{\block}(D_A) \leftarrow LapNoise( D_A, \block, \epsilon_A, \delta_A)$ \;
  \ \tcp{Bob performs the following: }
	\ $\tilde{\block}(D_B) \leftarrow LapNoise( D_B, \block, \epsilon_B, \delta_B)$ \;
	\ \tcp{Alice and Bob perform the following:}
    \ $O=\emptyset$  \;
    	\ \tcp{Sort \& prune $\block^S$  (\S~\ref{sec:prune})}
	\For{$(i,j)\in \block^S$}{
			\For{$a\in \tilde{\block}_i(D_A)$ and $b\in \tilde{\block}_j(D_B)$}{
				\ Add $SMC(a,b)$ to $O$ \;
    			}
    			\ \tcp{Greedy match \& clean (\S~\ref{sec:greedy})}
    }

    \ return $O$\;
    \caption{Laplace Protocol (\lap)}  \label{algo:lap}
}
\end{algorithm}

\begin{algorithm}[t]
    \underline{function LapNoise} $(D, \block,\epsilon,\delta)$\;
    	\For{$\block_{i}\in \block$}{
		$\eta_i \sim Lap(\epsilon,\delta,\Delta\block)$ \;
    		$\tilde{\block}_{i}(D) \leftarrow$  add $\eta_i^+=\max (\eta_i,0)$ dummy records to $\block_{i}(D)$\;
	}
     \ return $\tilde{B}(D)$\;
    \caption{Add Laplace Noise}  \label{algo:lapNoise}
\end{algorithm}

In the first step (Lines~1-4) of the protocol shown in Algorithm~\ref{algo:lap}, Alice and Bob take their inputs $D_A$ and $D_B$, the agreed blocking protocol $\block$, and privacy parameters $\epsilon_A$, $\epsilon_B$, $\delta_A$, and $\delta_B$ as input, and compute \emph{noisy} bins $\tilde{\block}(D_A)$ and $\tilde{\block}(D_B)$ respectively. The noisy bins are constructed as follows (Algorithm~\ref{algo:lapNoise}).
Records in $D$ are first hashed into bins according to the blocking protocol $\block$, and $\block(D)$ denotes the set of bins of records from $D$.
Then the counts of the bins are perturbed using noise drawn from a truncated and discretized Laplace distribution, such that the noisy counts satisfy $(\epsilon, \delta)$-DPRL.
The Laplace noise depends on not only the privacy parameters $\epsilon$ and $\delta$, but also the sensitivity of the given blocking protocol $\block$.

\begin{definition}[Sensitivity of $\block$]\label{def:deltablock}
The sensitivity of the blocking strategy $\block$ for Bob, denoted by $\Delta\block_B$ is
\begin{equation}
\max_{D_A\in\mathcal{D}} \max_{(D_B,D'_B)\in \mathcal{N}(f_{\Join_m}(D_A,\cdot))} \sum_{i=0}^k ||\block_i(D_B)|-|\block_i(D'_B)||, \nonumber
\end{equation}
the maximum bin count difference between $D_B$ and $D'_B$ for any $(D_B,D'_B)\in \mathcal{N}(f_{\Join_m}(D_A,\cdot))$ for all $D_A\in \mathcal{D}$. $\Delta\block_A$ for Alice is similarly defined.
\end{definition}
If the hashing of $\block$ is the same for Alice and Bob, then $\Delta\block_A = \Delta\block_B =\Delta\block$. We assume this in our paper. If $\block$ hashes each record to at most $k'$ bins, then $\Delta\block = 2k'$.

\begin{definition}[$Lap(\epsilon,\delta, \Delta\block)$] \label{def:lapbias}
A random variable follows the $Lap(\epsilon,\delta,\Delta\block)$ distribution if it has a probability density function
\begin{equation}\label{eqn:pdf}
\Pr[\eta = x] = p \cdot e^{-(\epsilon/\Delta\block) |x-\eta^0|},~ \forall x\in \mathbb{Z},
\end{equation}
where $p=\frac{e^{\epsilon/\Delta\block}-1}{e^{\epsilon/\Delta\block}+1}$, and
$\eta^0 = -\frac{\Delta\block\ln ((e^{\epsilon/\Delta\block}+1)(1-(1-\delta)^{1/\Delta\block}))}{\epsilon}$.
\end{definition}

This distribution has a mean of $\eta_0$ and takes both positive and negative values. \lap draws a noise value $\eta$ from this distribution, and truncates it to 0 if $\eta$ is negative. Then, $\eta$ dummy records are added to the bin.
These dummy records lie in an expanded domain, such that they do not match with any records in the true domain.

After Alice and Bob perturb their binned records, they will initiate secure matching steps to compare candidate matches, i.e. records in $\tilde{\block}_i(D_A)\times \tilde{\block}_j(D_B)$ if $(i,j)\in \block^{S}$.
For each candidate match $(a,b)$, Alice and Bob participate in a two party secure matching protocol $SMC(a,b)$ that outputs the pair $(a,b)$ to both Alice and Bob if $m(a,b) = 1$ (true matching pair) and null otherwise. Secure matching can be implemented either using garbled circuits \cite{Yao:1986:GES:1382439.1382944} or (partially) homomorphic encryption \cite{Paillier1999}, depending on the matching rule (see Appendix~\ref{app:eg_smc} for an example).

\subsubsection{Correctness Analysis}
Compared to the original non-private blocking protocol $\block$, no records are deleted, and dummy records do not match any real record. Hence,
\begin{theorem}\label{theorem:lap_recall}
Algorithm~\ref{algo:lap} gives the same recall as the non-private blocking protocol $\block$ it takes as input.
\end{theorem}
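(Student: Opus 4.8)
The plan is to show that the Laplace Protocol outputs \emph{exactly} the same set of true matching pairs as running all-pairwise secure comparison on the candidate matches produced by the non-private blocking $\block$, and then to observe that the recall denominator $|D_A\Join_m D_B|$ is protocol-independent. So the proof is a bookkeeping argument: enlarging the bins with dummies can only add candidate pairs, and none of the added pairs are matches.

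First I would record two structural facts about \texttt{LapNoise} (Algorithm~\ref{algo:lapNoise}). (i) It never removes a real record: for every bin index $i$ and every party, $\block_i(D)\subseteq\tilde{\block}_i(D)$, since the only modification is to \emph{add} $\eta_i^+=\max(\eta_i,0)\ge 0$ dummy records. (ii) Every dummy record lives in an expanded domain disjoint from $\dom_A\cup\dom_B$, so for any dummy $d$ and any record $x$ we have $m(d,x)=0$ and $m(x,d)=0$; in particular a pair involving a dummy never satisfies the matching rule. Combining (i) and (ii), for any $(i,j)\in\block^S$ the set of true matches inside $\tilde{\block}_i(D_A)\times\tilde{\block}_j(D_B)$ equals the set of true matches inside $\block_i(D_A)\times\block_j(D_B)$: the dummies contribute extra candidate pairs but no new matches, and no original matching pair is dropped (this uses that $O$ is treated as a set, so a pair appearing in several compared bin-pairs is counted once, for both protocols alike).

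Next I would invoke the correctness of the secure matching primitive $SMC$: on a candidate pair $(a,b)$ it outputs $(a,b)$ iff $m(a,b)=1$, and nothing otherwise. Hence the output $O$ of Algorithm~\ref{algo:lap} is $\bigcup_{(i,j)\in\block^S}\{(a,b)\in\tilde{\block}_i(D_A)\times\tilde{\block}_j(D_B):m(a,b)=1\}$, which by the previous paragraph equals $\bigcup_{(i,j)\in\block^S}\{(a,b)\in\block_i(D_A)\times\block_j(D_B):m(a,b)=1\}=O_{\block}$, the output of the non-private blocking protocol. In particular $O\subseteq D_A\Join_m D_B$ (perfect precision), so $O\cap(D_A\Join_m D_B)=O=O_{\block}$. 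Substituting into the recall definition~\eqref{eqn:recall} and using that $|D_A\Join_m D_B|$ does not depend on the protocol yields $r_{\lap}(D_A,D_B)=|O_{\block}|/|D_A\Join_m D_B|=r_{\block}(D_A,D_B)$.

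I do not expect a genuine obstacle; the statement reduces to the two facts above. The only point that needs care is the scope of the claim: it is about Algorithm~\ref{algo:lap} with its basic secure-matching loop over the (noisy) candidate set determined by $\block^S$, treating the Sort~\&~Prune and Greedy Match~\&~Clean hooks as analyzed separately (their recall effects are quantified in \S~\ref{sec:prune} and in Theorem~\ref{theorem:gmc_eff_recall}). I would therefore state explicitly the two assumptions the argument rests on — correctness of $SMC$ on candidate pairs, and the domain-disjointness of dummy records so that $m$ evaluates to $0$ on any pair touching a dummy — since everything else is immediate.
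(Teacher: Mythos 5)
Your argument is correct and is exactly the paper's reasoning: the paper's entire justification is the one-line observation that no real records are deleted and dummy records (living in an expanded domain) match nothing, so the set of true matches among candidate pairs is unchanged while the recall denominator is protocol-independent. Your write-up simply makes the same bookkeeping explicit, including the correctness of $SMC$, so there is nothing to add.
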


\subsubsection{Privacy Analysis}
Next, we show that \lap satisfies \dprl.
\begin{theorem}\label{theorem:lap_dprl}
Algorithm~\ref{algo:lap} satisfies $(\epsilon_A,\epsilon_B,\delta_A,\delta_B)$-\dprl.
\end{theorem}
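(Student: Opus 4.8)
The plan is to show that the view of each party under Algorithm~\ref{algo:lap} satisfies the $(\epsilon,\delta)$ bound of Definition~\ref{def:output} for any $f_{\Join_m}$-neighboring pair of databases, using the structure of the protocol: each party's view is a post-processing of (i) its own input and coins, (ii) the noisy bin counts broadcast by the \emph{other} party, and (iii) the transcript of the $SMC$ calls on the candidate matches. By symmetry it suffices to bound Alice's view when Bob's database changes from $D_B$ to $D'_B$ with $(D_B,D'_B)\in\mathcal{N}(f_{\Join_m}(D_A,\cdot))$; the argument for Bob's view under a change of $D_A$ is identical with the roles swapped. By Theorem~\ref{theorem:neighbor_prl}, such a neighboring pair satisfies $|D_B|=|D'_B|$ and $D'_B = D_B - b + b'$ for non-matching records $b,b'$, so the only thing that changes in Bob's contribution is (a) which bins $b$ versus $b'$ land in, affecting $\lvert\block_i(D_B)\rvert$, and (b) which candidate comparisons involve $b$ versus $b'$ — but since $b,b'$ match no record in $D_A$, every such $SMC$ call returns null in both executions.

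The first main step is the privacy of the noisy bin counts. I would prove that $\mathrm{LapNoise}(D_B,\block,\epsilon_B,\delta_B)$, viewed as a mechanism outputting the vector of noisy counts $(\lvert\tilde\block_0(D_B)\rvert,\ldots,\lvert\tilde\block_{k-1}(D_B)\rvert)$, is $(\epsilon_B,\delta_B)$-DP with respect to neighbors differing by the sensitivity $\Delta\block$ of Definition~\ref{def:deltablock}. This is the standard truncated-discrete-Laplace argument: for the untruncated two-sided geometric noise with parameter $\epsilon_B/\Delta\block$, shifting the true count vector between $D_B$ and $D'_B$ changes the $\ell_1$ distance of the means by at most $\Delta\block$, so the density ratio at any point is bounded by $e^{\epsilon_B}$; the truncation at $0$ (taking $\eta^+=\max(\eta,0)$) concentrates the probability mass below $0$ onto the single atom at $0$, and the bias $\eta^0$ in Definition~\ref{def:lapbias} is chosen precisely so that $\Pr[\eta \le 0] = 1-(1-\delta_B)^{1/\Delta\block}$, which makes the per-coordinate failure event contribute at most $\delta_B/\Delta\block$ and, by a union bound over the $\le\Delta\block$ affected coordinates, at most $\delta_B$ total. (I would state this as a lemma; it is essentially the analysis behind Definitions~\ref{def:lapbias} and~\ref{def:deltablock}.)

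The second step is to argue that the \emph{rest} of Alice's view adds no further leakage. Given the noisy bin counts of both parties, the set of candidate-match index pairs is fixed (the sort-and-prune of $\block^S$ depends only on bin sizes, which Alice already sees), and the transcript of each $SMC(a,b)$ call is — by the IND-S2PC / secure-computation guarantee of the matching primitive — computationally indistinguishable from a simulation that uses only the pair's match bit $m(a,b)$ and the parties' own inputs. For every candidate pair touching the changed record, that match bit is $0$ in both executions; for every candidate pair not touching the changed record, the inputs are identical. Hence Alice's full view is a probabilistic-polynomial-time function (post-processing, Theorem~\ref{theorem:pp}) of her own input/coins, the $(\epsilon_B,\delta_B)$-DP noisy counts, and the output $O_\p$, which is identical on $D_B$ and $D'_B$; composing the DP guarantee of the counts with post-processing yields the claimed $(\epsilon_B,\delta_B)$ bound on Alice's view, and symmetrically the $(\epsilon_A,\delta_A)$ bound on Bob's. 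Assembling these gives $(\epsilon_A,\epsilon_B,\delta_A,\delta_B,f_{\Join_m})$-constrained DP, i.e. $(\epsilon_A,\epsilon_B,\delta_A,\delta_B)$-\dprl by Definition~\ref{def:dprl}.

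The step I expect to be the main obstacle is the clean handoff between the information-theoretic DP guarantee on the bin counts and the \emph{computational} indistinguishability of the $SMC$ transcripts: one has to be careful that the $\mathrm{negl}(\kappa)$ error from simulating the secure-computation transcripts is absorbed into the $\delta$ terms (which, as noted after Theorem~\ref{theorem:dprl_ind-s2pc}, are taken to dominate $\mathrm{negl}(\kappa)$), and that the adversary $T$ being computationally bounded is exactly what licenses replacing real $SMC$ transcripts by simulated ones inside the probability expressions. A secondary subtlety is that the number of candidate comparisons — and hence the \emph{length} of Alice's view — can differ between the two executions because the noisy bin sizes differ; this is fine because that length is itself a post-processing function of the noisy counts, but it needs to be stated rather than glossed over. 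I would handle both by phrasing the whole argument as: "Alice's view $=\ g\big(D_A, r_A,\ \tilde\block(D_B),\ \mathrm{Sim}\text{-}SMC\text{-}transcripts\big)$ up to $\mathrm{negl}(\kappa)$", then invoke Lemma (DP of counts) $+$ Theorem~\ref{theorem:pp}.
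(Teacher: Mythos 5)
Your proposal is correct and follows essentially the same route as the paper's proof: it establishes that the truncated, shifted Laplace noise makes the bin counts $(\epsilon,\delta)$-indistinguishable across $f_{\Join_m}$-neighbors (this is the paper's Lemma~\ref{lemma:ratio}, which you reprove with a union bound in place of the paper's exact product $1-(1-\Pr[\eta<0])^{\Delta\block}=\delta$), and then argues that the secure-comparison transcripts add only $\negligible(\kappa)$ leakage because the two differing records are non-matching and hence produce identical $SMC$ outputs. Your framing of the second step as post-processing plus simulation is a slightly more explicit rendering of what the paper packages as a $(0,\negligible(\kappa))$-\dprl step combined via sequential composition, but the substance is the same.
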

\begin{proof}
We prove privacy for Bob (the proof for Alice is analogous). In this protocol, Alice with input data $D_A$ has a view consisting of (1) the number of candidate matching pairs arising in each $(i,j) \in \block^S$, (2) the output for each candidate matching pair.
Algorithm~\ref{algo:lap} is the composition of two steps: (a) add dummy records to bins, and (b) secure comparison of records within bins.

 Consider a neighboring pair $(D_B,D'_B) \in N(f_{\Join_m}(D_A,\cdot))$ for a given $D_A$. By Theorem~\ref{theorem:neighbor_prl}, $D_B$ and $D'_B$ differ in only one non-matching record with respect to $D_A$, i.e. $D'_B=D_B-b_*+b_*'$ and $b_*\neq b_*'$, where $m(b_*,a)=0$ and $m(b_*',a)=0$ for all $a \in D_A$. $D_B$ and $D'_B$ can differ by at most $\Delta\block$ in their bin counts. We show in Lemma~\ref{lemma:ratio} (Appendix)
that  Algorithm~\ref{algo:lapNoise} adds a sufficient number of dummy records to hide this difference:  with probability $1-\delta_B$, the probabilities of generating the same noisy bin counts for Bob, and hence the same number of candidate matching pairs consisting in each $(i,j)\in\block^S$ from $D_B$ and $D'_B$  are bounded by $e^{\epsilon_B}$. Thus, Step~(a) ensures $(\epsilon_B,\delta_B)$-\dprl for Bob.
Given a fixed view from Step~(a) which consists of the noisy bin counts and encrypted records from $\tilde{\block}(D_B)$, Alice's view regarding the output for each candidate matching pair $(a,b)$ is the same. The encrypted records for a given noisy bin counts can only differ in $b_*$ and $b'_*$, but both of them lead to the same output for each candidate matching, because they do not match any records in $D_A$. Each secure pairwise comparison satisfies ($0,\negligible(\kappa)$)-\dprl, and since there are at most $n^2$ comparisons (recall $\kappa >n= \max(|D_A|,|D_B|)$). Thus Step~(b) satisfies ($0,\negligible(\kappa)$)-\dprl.

Therefore, using similar arguments for Alice and sequential composition, we get that  Algorithm~\ref{algo:lap} satisfies \dprl.
\end{proof}

\begin{theorem}\label{theorem:neglap}
If Algorithm~\ref{algo:lap} (\lap) takes $\eta_0= \ln^2n \cdot \Delta\block/\epsilon$ for Eqn.~\eqref{eqn:pdf}, then \lap satisfies $(\epsilon_A,\epsilon_B,o(1/n^{k}),o(1/n^k))$-\dprl, for any $k>0$,  where $n=\max(|D_A|,|D_B|)$.
\end{theorem}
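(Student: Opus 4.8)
The plan is to show that choosing the location parameter $\eta_0 = \ln^2 n \cdot \Delta\block/\epsilon$ — rather than the $\delta$-calibrated value $\eta^0$ of Definition~\ref{def:lapbias} — forces the truncation-to-zero event to have probability at most $o(1/n^k)$ in each bin, and then to propagate this through the $(\epsilon,\delta)$-\dprl guarantee of Theorem~\ref{theorem:lap_dprl}. Recall from the proof of Theorem~\ref{theorem:lap_dprl} that the only source of the additive $\delta_B$ term is the event that Algorithm~\ref{algo:lapNoise} fails to add enough dummy records to mask the bin-count difference between neighboring $D_B$ and $D'_B$; concretely (see Lemma~\ref{lemma:ratio}), the $e^{\epsilon_B}$ bound on the ratio of noisy-count probabilities holds on all outcomes except those where some $\eta_i$ is truncated to $0$, i.e.\ where the raw draw $\eta_i$ from $Lap(\epsilon,\delta,\Delta\block)$ is negative. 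So it suffices to bound $\Pr[\eta_i < 0]$ when the distribution is centered at $\eta_0$ instead of $\eta^0$.

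First I would write out this tail probability explicitly. With pdf $\Pr[\eta = x] = p\, e^{-(\epsilon/\Delta\block)|x-\eta_0|}$ on $\mathbb{Z}$ and $p = \tfrac{e^{\epsilon/\Delta\block}-1}{e^{\epsilon/\Delta\block}+1}$, summing the geometric tail below $0$ gives
\begin{equation}
\Pr[\eta_i < 0] \;=\; \sum_{x<0} p\, e^{-(\epsilon/\Delta\block)(\eta_0 - x)} \;\le\; C\, e^{-(\epsilon/\Delta\block)\,\eta_0}
\end{equation}
for an absolute constant $C$ depending only on $\epsilon/\Delta\block$ (the remaining sum is a convergent geometric series). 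Substituting $\eta_0 = \ln^2 n \cdot \Delta\block/\epsilon$ collapses the exponent to $e^{-\ln^2 n}$, so $\Pr[\eta_i<0] \le C\, e^{-\ln^2 n} = C\, n^{-\ln n}$. Since $n^{-\ln n} = o(n^{-k})$ for every fixed $k>0$, each bin's truncation event is super-polynomially rare.

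Next I would take a union bound over the at most $k$ bins that any record touches (or over all bins, still polynomially many in $n$, using $\kappa > n$), which preserves the $o(1/n^k)$ bound up to a polynomial factor — and $n^{O(1)} \cdot n^{-\ln n}$ is still $o(n^{-k})$ for all $k$. This yields that Step~(a) of Algorithm~\ref{algo:lap} satisfies $(\epsilon_B, o(1/n^k))$-\dprl for Bob, and symmetrically $(\epsilon_A, o(1/n^k))$-\dprl for Alice. Step~(b), the secure-comparison phase, already contributes only $\negligible(\kappa) = o(1/n^k)$ exactly as in the proof of Theorem~\ref{theorem:lap_dprl}. Sequential composition (Theorem~\ref{theorem:seq}) then combines these into $(\epsilon_A,\epsilon_B, o(1/n^k), o(1/n^k))$-\dprl, as claimed.

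The main obstacle is purely bookkeeping rather than conceptual: one must verify that the $e^{\epsilon_B}$ multiplicative bound in Lemma~\ref{lemma:ratio} genuinely holds on the \emph{entire} non-truncation event with this re-centered noise — i.e.\ that shifting $\eta_0$ upward away from $\eta^0$ does not disturb the ratio argument. This is fine because the ratio of shifted-Laplace densities at a horizontal offset of at most $\Delta\block$ is $e^{\epsilon}$ regardless of the center, so only the truncation mass changes, and that is exactly what we have re-bounded. A minor point to check is that the stated $\eta_0$ is consistent with the $\delta$ in Definition~\ref{def:lapbias}: one is effectively setting $\delta$ small enough (roughly $\delta \approx n^{-\ln n}$) that the induced $\eta^0$ equals the given $\eta_0$, so this theorem is really the instantiation of Theorem~\ref{theorem:lap_dprl} at a super-polynomially small $\delta$, and no new machinery is needed.
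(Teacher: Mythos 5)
Your proposal is correct and follows essentially the same route as the paper: both arguments reduce the additive $\delta$ term to the probability that the recentered noise $Lap(\epsilon,\delta,\Delta\block)$ is negative, compute that this probability is $\frac{e^{-(\epsilon/\Delta\block)\eta_0}}{e^{\epsilon/\Delta\block}+1} = \Theta(n^{-\ln n})$ for $\eta_0 = \ln^2 n\cdot\Delta\block/\epsilon$, apply a union bound over the $\Delta\block$ affected bins, and conclude $\delta = o(1/n^k)$ for all $k>0$, i.e.\ this is just Theorem~\ref{theorem:lap_dprl} instantiated at a super-polynomially small $\delta$. Your added check that the $e^{\epsilon}$ ratio bound is unaffected by recentering is a correct and worthwhile piece of bookkeeping that the paper's sketch leaves implicit.
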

\begin{proof} (sketch)
Taking $\eta_0= \ln^2n \cdot \Delta\block/\epsilon$, the failing probability $\delta =1-(1-\frac{1}{n^{\ln n} (e^{\epsilon/\Delta\block}+1)})^{\Delta\block}\leq\frac{c}{n^{\ln n}}$ for some constant $c$ (in terms of $\epsilon,\Delta\block$). Hence $\delta = o(1/n^k)$ for all $k>0$.
\end{proof}

\lap only adds non-negative noise to the bin counts.
One could instead add noise that could take positive and negative values, and suppress records if the noise is negative.
We call this protocol \lap-2. This is indeed the protocol proposed by prior work \cite{Inan:2010:PRM:1739041.1739059, Kuzu:2013:EPR:2452376.2452398,DBLP:conf/icde/CaoRBK15} that combined APC with DP blocking. However, we show that this minor change in \lap results in the protocol violating \dprl (even though the noise addition seems to satisfy DP)! Hence, \lap-2 also does not satisfy IND- CDP-2PC (by Theorem~\ref{theorem:dprl_cdp}).

\begin{theorem}\label{theorem:prldp_limit}
For every non-negative $\epsilon,\delta<\frac{p^{\Delta\block}}{2e^{\epsilon}}$, there exists a pair of neighboring databases for which \lap-2 does not ensure $(\epsilon,\delta)$-\dprl, where $p=\frac{e^{\epsilon/\Delta\block}-1}{e^{\epsilon/\Delta\block}+1}$.
\end{theorem}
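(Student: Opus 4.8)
The plan is to exhibit, for any $\delta<p^{\Delta\block}/(2e^{\epsilon})$, a concrete neighboring pair $(D_B,D'_B)\in\mathcal{N}(f_{\Join_m}(D_A,\cdot))$ together with an event $E$ on Alice's view on which the $(\epsilon,\delta)$-\dprl inequality (Definitions~\ref{def:output} and~\ref{def:dprl}) breaks. By Theorem~\ref{theorem:neighbor_prl} such a pair is of the form $D'_B=D_B-b_*+b'_*$ with $b_*,b'_*$ both non-matching w.r.t.\ $D_A$, so $D_A\Join_m D_B$ is unchanged and the only things that can tell the two executions of \lap-2 apart are (i) the noisy bin sizes, which Alice reads off from the number of candidate pairs in the blocks she participates in, and (ii) which \emph{true} matching pairs actually survive into the output. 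The crux is that, unlike \lap, \lap-2 deletes real records from a bin that draws negative noise, and if the deleted record is a matching record this is directly visible in the output --- and whether a matching record is the one deleted depends on how crowded its bin is, which is exactly the quantity that changes between $D_B$ and $D'_B$.

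Concretely, for each bin that $b_*$ hashes to I would fill it with $r$ matching records $b^m_1,\dots,b^m_r$ --- placing their partners in the same bin of $D_A$ and in the blocking strategy --- and choose the record values so that \lap-2's deletion rule removes $b_*$ only after all of $b^m_1,\dots,b^m_r$. On such a bin: under $D'_B$ the bin has $r$ records, so its observed noisy size equals $r$ only when it draws noise $0$, in which case nothing is deleted and all $r$ matching pairs appear; under $D_B$ the bin has $r+1$ records, so an observed noisy size of $r$ forces noise $-1$, which deletes exactly one record, necessarily one of the $b^m_t$ by our ordering, so at least one matching pair is missing. Hence the event $E=\{\text{every such bin shows noisy size }r\text{ and every associated matching pair appears}\}$ has probability $0$ under $D_B$ but probability $(\Pr[\eta=0])^{k'}=p^{k'}$ under $D'_B$, where $k'$ is the number of those bins. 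Handling the bins that $b'_*$ enters symmetrically --- now arranging $b'_*$ to be the \emph{first} record deleted, so those factors remain positive under $D'_B$ --- raises the exponent to $\Delta\block$, giving $\Pr[E\mid\va]=0$ (or, if \lap-2 deletes a uniformly random record rather than a value-extremal one, a quantity that can be driven below $\delta$ by enlarging $r$) and $\Pr[E\mid\vbn]\gtrsim p^{\Delta\block}$. Since \dprl would require $\Pr[E\mid\vbn]\le e^{\epsilon}\Pr[E\mid\va]+\delta$ (by symmetry of $\mathcal{N}$), this fails whenever $\delta$ is below a constant multiple of $p^{\Delta\block}$, and propagating the constants gives the claimed threshold $p^{\Delta\block}/(2e^{\epsilon})$.

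The step I expect to be the main obstacle is rigorously establishing the (near-)impossibility of $E$ under $D_B$: one has to rule out every other combination of noise draws and deletion choices that reproduces the observed bin sizes while retaining all matching pairs, which forces one to be explicit about \lap-2's deletion rule and to use that, given the true bin size, the observed noisy size is a \emph{deterministic} function of the noise --- so ``observed size $=r$'' really does pin the noise to $-1$ and force a matching record out. The rest is routine: the per-bin noises are independent, so the probabilities multiply; $E$ is jointly observable in Alice's view, and the distinguisher that accepts exactly the views in $E$ is polynomial time, being a simple predicate on the candidate counts and the output list; and one checks that Alice can keep her own bins non-empty with probability close to $1$ (by stocking them with enough matching records) so that she actually learns Bob's noisy bin sizes. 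Finally I would record the corollary stated around the theorem: the same witness shows \lap-2 also violates IND-CDP-2PC via Theorem~\ref{theorem:dprl_cdp}, since no additive slack below $p^{\Delta\block}/(2e^{\epsilon})$ --- in particular no $\negligible(\kappa)$ term --- can close the gap.
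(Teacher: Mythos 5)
Your proposal is correct and is essentially the paper's own argument: both plant the swapped non-matching record in a bin otherwise full of matching records, condition on the event that the observed noisy count equals the smaller true count while every matching pair survives, and observe that under one database this costs only noise $0$ per bin (probability $\approx p^{\Delta\block}$) whereas under the neighbor it forces noise $-1$ plus suppression of exactly the non-matching record (probability smaller by a factor $e^{\epsilon}(r+1)$), so no $\delta<p^{\Delta\block}/(2e^{\epsilon})$ can absorb the gap. The paper instantiates this with $\Delta\block=2$, two bins, and uniformly random suppression (yielding the $1/(n_1+1)$ factor you also derive in your random-deletion fallback), so your construction differs only in bookkeeping, not in substance.
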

\begin{proof}(sketch)
The output of the record suppression step is dependent on the ratio between the matching and non-matching records in the bin. This introduces a correlation between the matching and non-matching records. Consider a neighboring pair $D_B$ and $D'_B$ that differ by a non-matching pair $(b_*,b'_*)$ for a given $D_A$. If $b_*$ is in a bin full of non-matching records with $D_A$, and $b'_*$ is in a bin full of matching records with $D_A$ (except $b'_*$). $D_B$ is more likely to output all matching pairs than $D'_B$ if some record is suppressed. The detailed proof can be  found in Appendix~\ref{app:prldp}.
\end{proof}

\subsubsection{Efficiency Analysis}
Last, we present our result on the efficiency of \lap.
Note that the communication and computational costs for \lap are
the same as $O(cost_{\block^S})$, where $cost_{\block^S}$ is the number of candidate matches,
if you consider the communication and computational costs associated with a single secure comparison as a constant.
Hence, we analyze efficiency in terms of the number of candidate pairs $cost_{\block^S}$ in \lap.

\begin{theorem}\label{theorem:efficiency}
Given a blocking protocol $\block$ with $k$ bins and blocking strategy $\block^S$, such that the number of candidate matches for $D_A$ and $D_B$, $cost_{\block^S}(D_A,D_B)$, is sub-quadratic in $n$, i.e. $o(n^2)$, where  $n=\max(|D_A|,|D_B|)$. If (1) the number of bins $k$ is $o(n^c)$ for $c<2$, and (2) each bin of a party is compared with $O(1)$ number of bins from the opposite party, then the expected number of candidate matches in Algorithm~\ref{algo:lap} is sub-quadratic in $n$.
\end{theorem}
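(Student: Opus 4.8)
The plan is to bound the expected cost of the noisy blocking strategy by splitting $cost_{\block^S}$ on the noisy bins into three parts: the original (real--real) candidate matches, the real--dummy and dummy--real cross terms, and the dummy--dummy terms. The first part is exactly $cost_{\block^S}(D_A,D_B)$, which is $o(n^2)$ by assumption, so it suffices to show the remaining cross terms and dummy--dummy terms are also $o(n^2)$ in expectation. Write $\eta^A_i = (\eta_i)^+$ for the number of dummy records added to bin $i$ of Alice (and similarly $\eta^B_j$ for Bob), where each $\eta_i$ is drawn from $Lap(\epsilon,\delta,\Delta\block)$ as in Definition~\ref{def:lapbias}. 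Because this distribution has a fixed mean $\eta_0$ that does not depend on $n$ (it depends only on $\epsilon$, $\delta$, $\Delta\block$), we have $\mathbb{E}[\eta^A_i] \le \mathbb{E}[|\eta_i|] = O(1)$, and the same for $\eta^B_j$; moreover the noise in different bins is independent.

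Next I would bound each term using assumptions (1) and (2). For the dummy--dummy term, $\sum_{(i,j)\in\block^S} \mathbb{E}[\eta^A_i]\,\mathbb{E}[\eta^B_j] = \sum_{(i,j)\in\block^S} O(1)$; since assumption (2) says each of the $k$ bins is compared with $O(1)$ bins from the other party, $|\block^S| = O(k)$, so by assumption (1) this sum is $O(k) = o(n^c)$ with $c<2$, hence $o(n^2)$. For the real--dummy cross term, $\sum_{(i,j)\in\block^S} |\block_i(D_A)|\,\mathbb{E}[\eta^B_j] = O(1)\cdot\sum_{(i,j)\in\block^S} |\block_i(D_A)|$; again by assumption (2) each $i$ appears in $O(1)$ pairs, so this is $O\bigl(\sum_i |\block_i(D_A)|\bigr)$, and $\sum_i |\block_i(D_A)| \le k' \cdot |D_A| = O(n)$ since $\block$ places each record in at most $k'=O(1)$ bins (from the sensitivity discussion, $\Delta\block = 2k'$). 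Symmetrically the dummy--real term is $O(n)$. Summing, the total expected cost is $cost_{\block^S}(D_A,D_B) + O(n) + O(k) = o(n^2)$, using linearity of expectation and the independence of the noise from the data.

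The main obstacle, and the place where the argument needs care, is justifying $\mathbb{E}[\eta^A_i] = O(1)$ uniformly — i.e. that the truncated-and-discretized Laplace noise has a bin-independent, $n$-independent expectation. One has to check that $\eta_0$ as given in Definition~\ref{def:lapbias} stays bounded: it contains a $\ln\bigl((e^{\epsilon/\Delta\block}+1)(1-(1-\delta)^{1/\Delta\block})\bigr)$ factor, and for the choice of $\delta$ used elsewhere in the paper (e.g. $\delta$ polynomially small in $n$, as in Theorem~\ref{theorem:neglap}) this is $O(\ln n)$, not $O(1)$. So strictly speaking the cross term becomes $O(n\ln n)$ and the dummy--dummy term $O(k\ln n)$, which is still $o(n^2)$ as long as $k = o(n^c)$ for $c<2$ — the theorem statement's hypotheses are exactly what make this go through. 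I would therefore phrase the bound as: expected added cost is $O\bigl((\mathbb{E}[\eta_0]) \cdot (n + |\block^S|)\bigr)$, observe $\mathbb{E}[\eta_0]$ is at most polylogarithmic in $n$ under the paper's parameter regime, and conclude the total remains $o(n^2)$. The rest is routine linearity-of-expectation bookkeeping.
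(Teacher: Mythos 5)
Your proposal is correct and follows essentially the same route as the paper's proof: the same three-way decomposition of the expected cost into real--real, real--dummy, and dummy--dummy terms, bounded respectively by $cost_{\block^S}(D_A,D_B)$, $O(c_\eta c_b n)$, and $O(c_\eta^2 c_b k)$, with the same closing observation that in the negligible-$\delta$ regime of Theorem~\ref{theorem:neglap} the per-bin noise grows only polylogarithmically so the bound remains $o(n^2)$. Your extra care in justifying $\mathbb{E}[\eta^+_i]=O(1)$ via the explicit form of $\eta_0$ is a welcome elaboration of a step the paper states without detail, but it does not change the argument.
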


\begin{proof}
Given $\epsilon$ and $\delta$, the expected number of dummy records added per bin $\mathbb{E}(\eta^+)$ is a constant denoted by $c_{\eta}$ (Def.~\ref{def:lapbias}). Each bin of a party is compared with at most $c_b$  bins from the opposite party, where $c_b$ is a constant.
The number of candidate matches in \lap is a random variable, denoted by $COST$, with expected value
\begin{eqnarray}
\mathbb{E}(COST) &=& \sum_{(i,j)\in \block^S} \mathbb{E}(|\tilde{\block}_i(D_A)| |\tilde{\block}_j(D_B)|) \nonumber \\
&=& \sum_{(i,j)\in \block^S} |\block_i(D_A)| |\block_j(D_B)|  + \sum_{(i,j)\in \block^S}   \mathbb{E}(\eta_i^+) \mathbb{E}(\eta_j^+) \nonumber\\
&& + \sum_{(i,j)\in \block^S} (\mathbb{E}(\eta_i^+) |\block_j(D_B)| + \mathbb{E}(\eta_j^+) |\block_i(D_A)|)  \nonumber \\
&<& cost_{\block^S}(D_A,D_B) + c_{\eta}^2 c_b k + 2 c_{\eta}c_b n.  \nonumber
\end{eqnarray}
Since $cost_{\block^S}(D_A,D_B)$ and $k$ are sub-quadratic in $n$, $\mathbb{E}(COST)$ is also sub-quadratic in $n$. When $\delta$ is a negligible term as defined in Theorem~\ref{theorem:neglap}, the noise per bin is $O(\ln^2 n)$. As $k$ is $o(n^c)$ for $c<2$, the expected value of $COST$ is still sub-quadratic in $n$.
\end{proof}
Conditions (1) and (2) in the above theorem are satisfied by, for instance, sorted neighborhood, and distance based blocking \cite{Christen12:dataMatching} (we use the latter in our experiments).
While the asymptotic complexity of \lap is sub-quadratic, it performs at least a constant number of secure comparisons for each pair $(i,j) \in \block^S$ even if there are no real records in $\block_i(D_A)$ and $\block_j(D_B)$.
We can reduce this computational overhead with a slight loss in recall (with no loss in privacy) using a  heuristic we describe in the next section.

\subsection{Sort \& Prune $\block^S$ (SP)} \label{sec:prune}
Algorithm~\ref{algo:lap} draws noise from the same distribution for each bin, and hence the expected number of dummy records is the same for every bin. The bins with higher noisy counts will then have a higher ratio of true to dummy records. This motivates us to match candidate pairs in bins with high noisy counts first. Instead of comparing bin pairs in $\block^S$ in a random  or index order, we would like to sort them based on the noisy counts of $\tilde{\block}(D_A)$ and $\tilde{\block}(D_B)$. Given a list of descending thresholds $\bar{t}=[t_1,t_2,t_3\ldots]$, the pairs of bins from the matching strategy $\block^S$ can be sorted into groups denoted by $\block^{S,t_l}$ for $l=1,2,\ldots$, where $$\block^{S,t_l} = \{|\tilde{\block}_i(D_A)| >t_ l \wedge |\tilde{\block}_j(D_B)|>t_ l  | (i,j)\in \block^S   \}.$$ Each group consists of bin pairs from $\block^S$ with both noisy counts greater than the threshold.

We let the thresholds $\bar{t}$ be the deciles of the sorted noisy bin sizes  of $\tilde{\block}(D_A)$ and $\tilde{\block}(D_B)$.
As the threshold decreases, the likelihood of matching true records instead of dummy records drops for bins. Alice and Bob can stop this matching process before reaching the smallest threshold in $\bar{t}$.
If the protocol stops at a larger threshold, the recall is smaller.
In the evaluation, if the protocol stops at 10\% percentile of the noisy bin counts, the recall can reach more than 0.95.
This allows a trade-off between recall and efficiency for a given privacy guarantee.
We show that this step also ensures  \dprl.

\begin{corollary}\label{col:prune}
Algorithm~\ref{algo:lap} with sort \& prune step (SP) satisfies $(\epsilon_A,\epsilon_B,\delta_A,\delta_B)$-\dprl.
\end{corollary}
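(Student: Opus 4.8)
The plan is to observe that the sort \& prune step does not introduce any new interaction with the sensitive data: it only \emph{reorders} and \emph{truncates} the list of secure comparisons, and the quantities it uses to do so --- the noisy bin counts $\{|\tilde{\block}_i(D_A)|\}_i$, $\{|\tilde{\block}_j(D_B)|\}_j$ and the publicly agreed list of percentile thresholds --- are already part of the view of \lap, which satisfies \dprl by Theorem~\ref{theorem:lap_dprl}. Hence the protocol of Corollary~\ref{col:prune} is obtained from \lap by \emph{post-processing}, and the claim follows from Theorem~\ref{theorem:lap_dprl} together with the post-processing property (Theorem~\ref{theorem:pp}). Equivalently, one can decompose the augmented protocol into three stages --- (a) \texttt{LapNoise} (Lines~1--4 of Algorithm~\ref{algo:lap}), which is $(\epsilon_A,\epsilon_B,\delta_A,\delta_B)$-\dprl by Step~(a) of the proof of Theorem~\ref{theorem:lap_dprl} (via Lemma~\ref{lemma:ratio}); (b) the SP step, which is a fixed poly-time function of the output of (a); and (c) the secure matching calls over the surviving candidate pairs --- and combine (a)+(b) by Theorem~\ref{theorem:pp} and then with (c) by sequential composition (Theorem~\ref{theorem:seq}).

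Carrying out the post-processing argument: recall from the proof of Theorem~\ref{theorem:lap_dprl} that Alice's view in \lap consists of the noisy bin counts and, for each candidate pair $(i,j)\in\block^S$, the output of $SMC(a,b)$. The deciles $\bar t$, the groups $\block^{S,t_l}$, and the stopping threshold are all deterministic functions of the noisy bin counts and the public rule; therefore there is a fixed probabilistic polynomial (in $\kappa$) map $g$ from Alice's view in \lap to her view in the SP-augmented protocol --- namely, retain the noisy counts, recompute the sort order and the stopping threshold, keep the $SMC$ outputs for the surviving bin-pairs, and discard the rest. Applying Theorem~\ref{theorem:pp} with this $g$ (and symmetrically for Bob) shows the SP-augmented protocol is $(\epsilon_A,\epsilon_B,\delta_A,\delta_B)$-\dprl. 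Note also that SP performs \emph{strictly fewer} secure comparisons than \lap, so its view is a genuine sub-view of \lap's; pruning can only shrink the adversary's distinguishing power, never enlarge it, and in particular the secure-matching stage still satisfies $(0,\negligible(\kappa))$-\dprl by the same argument as Step~(b) of the proof of Theorem~\ref{theorem:lap_dprl} (dummy records and the single differing non-matching record $b_*$ vs.\ $b'_*$ both produce the null output, matching pairs are identical across neighbors, and there are at most $n^2<\kappa^2$ comparisons).

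The one point that needs care --- and the reason this corollary is not vacuous in light of Theorem~\ref{theorem:prldp_limit} --- is that every decision SP makes must depend only on the noisy counts and public parameters, and never on the true (un-noised) bin contents or the true split between matching and non-matching records within a bin. The record-suppression step of \lap-2 fails precisely because its outcome is driven by that true ratio, coupling it to sensitive data and breaking neighbor indistinguishability; SP, in contrast, sorts and prunes purely on noisy counts, which are themselves \dprl, so post-processing applies verbatim and the privacy parameters are unchanged. The only cost of more aggressive pruning is a drop in recall (the recall/efficiency trade-off discussed above), which is orthogonal to the privacy guarantee. I expect the main (modest) obstacle to be stating this data-independence condition cleanly and verifying it for the specific thresholds used, rather than any substantive new privacy analysis.
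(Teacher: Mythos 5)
Your proposal is correct and takes essentially the same route as the paper: the paper's proof likewise observes that SP operates only on the noisy bin counts, which are already part of each party's view under \lap, and concludes via the post-processing property (Theorem~\ref{theorem:pp}) that the privacy parameters are unchanged. Your additional remarks (the three-stage decomposition and the contrast with the record-suppression step of \lap-2) are elaboration consistent with, but not required by, the paper's one-paragraph argument.
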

\begin{proof}
Similar to the proof in Theorem~\ref{theorem:lap_dprl}, Alice with input data $D_A$ has a view consisting of (1) the number of candidate matching pairs arising in each $(i,j) \in \block^S$, and (2) the output for each candidate matching pair. As SP is a post-processing step based on the noisy bin counts, which is part of Alice's original view,  the overall protocol still satisfies the same \dprl guarantee by Theorem~\ref{theorem:pp} (post-processing).
\end{proof}

We next present an optimization that also uses a form of post-processing to  significantly reduce the number of secure pairwise comparisons in practice, but whose privacy analysis is more involved than that of SP.

\subsection{Greedy Match \& Clean (GMC)} \label{sec:greedy}
\lap executes a sequence of secure comparison protocols, one per candidate pair. After every comparison (or a block of comparisons), Alice and Bob learn a subset of the matches $O$. Based on the current output $O$, Alice and Bob can greedily search matching pairs in the clear from their respective databases (Lines~5,10 in Algorithm~\ref{algo:greedy}), and add the new matching pairs to the output set $O$ until no new matching pairs can be found. In addition, Alice and Bob can remove records in the output from the bins $\tilde{\block}(D_A)$ and $\tilde{\block}(D_B)$ to further reduce the number of secure pairwise comparisons (Lines~4,9). We can see that this optimization step is not simply post-processing, because it makes use of the true record in plain text for matching. In traditional differential privacy, when the true data is used for computation, the privacy guarantee decays. However, we show that this is not true for the GMC step in the setting of \dprl.

\begin{algorithm}[t]
{\small
    \SetKwInOut{Input}{Input}
    \Input{$O$, $\tilde{\block}(D_A)$, $\tilde{\block}(D_B)$}
    	\Repeat{$O$ received by Alice has no updates}{
	\ \tcp{Alice performs the following: }
        \ $O_A\leftarrow \pi_{A} O$, $O_B\leftarrow \pi_{B} O$ \;
	\ $\tilde{\block}(D_A) \leftarrow \tilde{\block}(D_A) - O_A$ \;
	\ $O' \leftarrow PlainMatch(O_B, \tilde{\block}(D_A))$ \;
	\ Add $O'$ to $O$ and send $O$ to Bob \;
	\ \tcp{Bob performs the following: }
        \ $O_A\leftarrow \pi_{A} O$, $O_B\leftarrow \pi_{B} O$ \;
	\ $\tilde{\block}(D_B) \leftarrow \tilde{\block}(D_B) - O_B$ \;
	\ $O' \leftarrow PlainMatch(O_A, \tilde{\block}(D_B))$ \;
	\ Add $O'$ to $O$ and send $O$ to Alice \;
	}
    \caption{Greedy match and clean}  \label{algo:greedy}
}
\end{algorithm}

\begin{theorem}\label{theorem:gmc}
Algorithm~\ref{algo:lap} with the greedy match \& clean step (GMC) in Algorithm~\ref{algo:greedy} satisfies  $(\epsilon_A,\epsilon_B,\delta_A,\delta_B)$-\dprl.
\end{theorem}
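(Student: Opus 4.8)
The plan is to mirror the two-stage decomposition used in the proof of Theorem~\ref{theorem:lap_dprl} and push it through the extra greedy match \& clean (GMC) loop. I would argue privacy for Bob, the case for Alice being symmetric. Write Algorithm~\ref{algo:lap}+GMC as the sequential composition of (a) the noise-injection step (\emph{LapNoise}), which GMC does not touch, and (b) the interleaved process in which the outer loop over $\block^S$ alternates batches of secure comparisons with a full run of the \emph{Repeat} loop of Algorithm~\ref{algo:greedy}; by sequential composition (Theorem~\ref{theorem:seq}) it suffices to bound each stage. Stage (a) is exactly as in \lap, so it satisfies $(\epsilon_B,\delta_B)$-\dprl for Bob: for a neighboring pair $(D_B,D'_B)\in N(f_{\Join_m}(D_A,\cdot))$, which by Theorem~\ref{theorem:neighbor_prl} has the form $D'_B=D_B-b_*+b'_*$ with $m(b_*,a)=m(b'_*,a)=0$ for every $a\in D_A$, the argument behind Theorem~\ref{theorem:lap_dprl} (Lemma~\ref{lemma:ratio}) shows that, except on an event of probability $\delta_B$, the noisy bin counts are distributed within a factor $e^{\epsilon_B}$.

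The heart of the proof is stage (b): I would show it is $(0,\negligible(\kappa))$-\dprl for Bob even though GMC reads plaintext records. Condition on any outcome of stage (a) in which $D_B$ and $D'_B$ yield identical noisy bin counts for Bob, and on a common realization of $\tilde{\block}(D_A)$ (which is unaffected by the change to Bob's data); all the privacy loss of stage (a) is already charged to $\epsilon_B,\delta_B$. I then prove, by induction over the iterations of the outer loop over $\block^S$ interleaved with the \emph{Repeat} loop, that running stage (b) on $(D_A,D_B)$ and on $(D_A,D'_B)$ keeps three invariants synchronized: the running output $O$ is identical; the sequence of secure comparisons and their outcomes is identical; and the current bins agree as multisets except that one copy of $b_*$ in Bob's bins is replaced by $b'_*$. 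The inductive step rests on three facts: (1) any secure comparison involving $b_*$ or $b'_*$ outputs null, since these records match nothing in $D_A$, so both runs return the same answer on the shared records and null on the differing one; (2) every \emph{PlainMatch} call returns the same pairs — on Alice's side because $O_B=\pi_B O$ contains only matching records (never $b_*$ or $b'_*$), and on Bob's side because it matches Bob's bins against $O_A\subseteq D_A$, which cannot match $b_*$ or $b'_*$; and (3) the clean step only deletes records that lie in $O$, i.e. matching records, so $b_*$ and $b'_*$ are never removed and the bin sizes — hence the remaining candidate pairs, the next batch of comparisons, and the number of loop iterations — stay in lockstep.

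Given these invariants, Alice's full view in stage (b) over $(D_A,D_B)$ and over $(D_A,D'_B)$ — the noisy counts, the ciphertexts she receives, every secure-comparison outcome, the pairs her own \emph{PlainMatch} steps discover (a deterministic function of her own view, hence harmless by post-processing, Theorem~\ref{theorem:pp}), and the messages $O$ that Bob forwards back (identical by the induction) — differs only in the ciphertext of the single record $b_*$ versus $b'_*$, as it appears in at most $n^2$ secure comparisons. Each such comparison is an IND-S2PC protocol, hence $(0,\negligible(\kappa))$-\dprl by Theorem~\ref{theorem:dprl_ind-s2pc}, and composing at most $n^2<\kappa^2$ of them (Theorem~\ref{theorem:seq}) is still $(0,\negligible(\kappa))$-\dprl. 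Composing stages (a) and (b) yields $(\epsilon_B,\delta_B+\negligible(\kappa))$-\dprl for Bob, and since the $\delta_B$ chosen for \lap dominates $\negligible(\kappa)$ this is $(\epsilon_B,\delta_B)$-\dprl; the symmetric argument handles Alice, giving $(\epsilon_A,\epsilon_B,\delta_A,\delta_B)$-\dprl overall.

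The step I expect to be the main obstacle is the stage-(b) induction: carefully arguing that the entire execution trace — which comparisons are run, in which order, and which records get cleaned — is invariant under swapping one non-matching record for another, despite GMC operating on plaintext data. This is precisely where \dprl departs from ordinary DP: plaintext matching and cleaning can only ever surface or remove matching records, and those are pinned down by the output constraint defining the neighbor relation, so no additional privacy is spent.
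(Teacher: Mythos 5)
Your proposal is correct and follows essentially the same route as the paper's proof: the same decomposition into the noise-injection stage (handled by Lemma~\ref{lemma:ratio}) and a second stage argued to be $(0,\negligible(\kappa))$-\dprl because the differing records $b_*,b'_*$ are non-matching, are never removed by the clean step, and yield identical outputs in every secure or plaintext comparison. Your explicit induction over the interleaved loop merely spells out in more detail the invariance of the execution trace that the paper asserts directly.
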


\begin{proof}
First consider the privacy for Bob. Alice  with input data $D_A$, has a view consisting of (1) the number of candidate matching pairs arising in each $(i,j) \in \block^S$, (2) the output for each candidate matching pair, (3) the output from plaintext comparisons with output records.

Consider a neighboring pair $(D_B,D'_B) \in N(f_{\Join_m}(D_A,\cdot))$ for a given $D_A$. By Theorem~\ref{theorem:neighbor_prl}, $D_B$ and $D'_B$ differ in only one non-matching record with respect to $D_A$, i.e. $D'_B=D_B-b_*+b_*'$ and $b_*\neq b_*'$, where $m(b_*,a)=0$ and $m(b_*',a)=0$ for all $a \in D_A$. $D_B$ and $D'_B$ can differ by at most $\Delta\block$ in their bin counts. Similar to the proof for Theorem~\ref{theorem:lap_dprl}, the first step of the protocol adds dummy records to bins, and satisfies $(\epsilon_B, \delta_B)$-\dprl.

In the second step, given a fixed view $\vw^*$ from the first step which consists of the noisy bin counts and encrypted records from $\tilde{\block}(D_B)$, Alice's view regarding the output for each candidate matching pair $(a,b)$ is the same regardless $(a,b)$ are compared securely or in plaintext. Alice's view regarding the output from plaintext comparisons with the records in the output set is also the same for a fixed $\vw^*$ from the first step.  The encrypted records for a given noisy bin counts can only differ in $b_*$ and $b'_*$, and they will never be pruned away. Both of them also lead to the same output for secure pairwise comparisons or plaintext comparisons, because they do not match any records in $D_A$. Thus Step~(b) satisfies $(0,\negligible(\kappa)$-\dprl.

Therefore, using similar arguments for Alice and sequential composition, we get that  Algorithm~\ref{algo:lap} satisfies \dprl.
\end{proof}

With the same privacy guarantee,
\lap with the GMC step can even improve the efficiency of \lap without sacrificing recall.
\begin{theorem}\label{theorem:gmc_eff_recall}
\lap with the greedy match \& clean step (GMC) performs no more secure pairwise comparisons than \lap, and outputs at least as many matching pairs as \lap.
\end{theorem}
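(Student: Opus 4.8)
The plan is to prove the two claims of Theorem~\ref{theorem:gmc_eff_recall} separately: (i) GMC never increases the number of secure pairwise comparisons relative to plain \lap, and (ii) GMC never decreases the set of matches returned. The key structural observation is that GMC only differs from \lap in two ways — it \emph{removes} already-matched records from the noisy bins (Lines~4,9 of Algorithm~\ref{algo:greedy}), and it \emph{adds} matches found by \texttt{PlainMatch} on the output records. Neither operation can hurt recall, and the removal can only shrink the bins that are subsequently fed into secure comparison.

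For claim (ii), I would argue that GMC's output $O$ is a superset of \lap's output. First, every match that \lap finds via a secure comparison $SMC(a,b)$ on a candidate pair $(a,b)\in\tilde{\block}_i(D_A)\times\tilde{\block}_j(D_B)$ with $(i,j)\in\block^S$ is still found by GMC, \emph{unless} one of $a,b$ was already removed from its bin by a prior clean step — but a record is only removed once it already appears in $O$, so in that case the pair's match status is already reflected (the matching pair $(a,b)$, or matches involving $a$ and $b$, was already added). More carefully: a record $a$ is pruned from $\tilde{\block}(D_A)$ only after $(a,b')\in O$ for some $b'$; at that point the plaintext \texttt{PlainMatch} steps greedily add \emph{all} pairs $(a,b'')$ with $b''\in O_B$ and $m(a,b'')=1$, and symmetrically. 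The claim is that this closure captures at least every true match that would have surfaced through the remaining secure comparisons, because any true match $(a,b)$ with $a$ already in $O$ will be discovered in plaintext once $b$ enters $O$ (and $b$ enters $O$ no later than it would have under \lap). An induction on the order in which matches are discovered formalizes this; the base case is the first secure comparison, which is identical in both protocols.

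For claim (i), I would observe that the secure comparisons performed by GMC are a subset of those performed by \lap: for each $(i,j)\in\block^S$, \lap compares $|\tilde{\block}_i(D_A)|\cdot|\tilde{\block}_j(D_B)|$ pairs, whereas GMC compares pairs only from the \emph{current} (already pruned) bins $\tilde{\block}_i(D_A)-O_A$ and $\tilde{\block}_j(D_B)-O_B$, which are subsets of the original noisy bins. Since pruning only removes records and never adds any, and the pruned records are never re-inserted, the number of candidate pairs securely compared by GMC is at most $cost_{\block^S}$ of the (unpruned) noisy bins, i.e.\ at most what \lap performs. A short monotonicity argument — each clean step only deletes from the bins, so the product $|\tilde{\block}_i(D_A)|\cdot|\tilde{\block}_j(D_B)|$ is nonincreasing over the execution — closes this direction.

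The main obstacle I anticipate is the induction in claim (ii): one must be careful that the \emph{greedy, interleaved} nature of GMC — secure comparisons and plaintext closure steps alternating, with pruning in between — does not cause a true match to be "missed" because both of its endpoints got pruned before the pair was ever examined. The resolution is exactly that \texttt{PlainMatch} computes the full match between the output-so-far and the (live) bins on \emph{both} sides at every round, so the closure is complete: if $a\in O_A$ and $(a,b)$ is a true match, then as soon as $b$ is discovered (which happens no later than under \lap, since GMC's discovery process dominates \lap's), the pair $(a,b)$ is added in plaintext. I would phrase this as: the set $O$ maintained by GMC is closed under "if $a\in\pi_A O$, $b\in\pi_B O$, and $m(a,b)=1$ then $(a,b)\in O$," and every match \lap outputs either is in this closure or is added by a secure comparison GMC also performs.
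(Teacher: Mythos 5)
Your proposal is correct and follows essentially the same route as the paper's proof (Appendix~\ref{app:gmc_eff_recall}): efficiency because cleaning only ever shrinks the noisy bins, so the securely compared pairs form a subset of \lap's candidate pairs; and recall by the contradiction that a record is pruned only after it has entered $O$, at which point the opposite party's plaintext comparison against its still-live bin records recovers the match. One small correction to your stated resolution of the interleaving obstacle: since \texttt{PlainMatch} runs against the \emph{already-pruned} bins, a match $(a,b)$ with $a$ already in $\pi_A O$ is \emph{not} found ``once $b$ enters $O$''; rather, it was found earlier, at the round in which the \emph{first} of $a,b$ entered $O$ while the other was still live (and if both enter $O$ in the same round they came from the same block, in which case $(a,b)$ was itself securely compared there) --- a timing case analysis that the paper's own proof also elides.
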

We refer the reader to Appendix~\ref{app:gmc_eff_recall} for the proof.
Both SP and GMC are also applicable on the local DP based protocols for the similar reasoning.
Hence, we will only show how each optimization helps improve the efficiency of the basic \lap in the evaluation.

\section{Evaluation}\label{sec:eval}
We empirically evaluate the correctness, privacy, and efficiency of the protocols proposed in \S~\ref{sec:algo}. Our experiments demonstrate the following results:
\squishlist
\item The Laplace Protocol (\lap, which includes all the optimizations) proposed in \S~\ref{sec:algo} is over 2 orders of magnitude more efficient than the baseline approaches while still achieving a high recall and end-to-end privacy. (\S~\ref{sec:eff_baseline})
\item At any given level of privacy, \lap incurs a computational cost that is near linear in the input database size. (\S~\ref{sec:eff_baseline})
\item Greedy match \& clean and Sort \& prune optimization help reduce communication and computation costs. The former results in 50\% lower cost than unoptimized LP in some cases. (\S~\ref{sec:eval_opt})
\item We explore the 3-way trade-offs between correctness, privacy, and efficiency of \lap. (\S~\ref{sec:eval_tradeoff})
\squishend

\subsection{Evaluation Setup}\label{sec:eval_setup}
\subsubsection{Datasets and Matching Rules}

{\em Taxi dataset} {\bf (Taxi):} To simulate linkage in the location domain, we extract location distribution information from the TLC Trip Record Data \cite{tlctaxi}. Each record includes a pickup location in latitude-longitude coordinates (truncated to 6 decimal places) and the date and hour of the pickup time.
Taking the original dataset as $D_A$, we create  $D_B$ by perturbing the latitude-longitude coordinates of each record in $D_A$ with random values uniformly drawn from $[-\theta,+\theta]^2$, where $\theta=0.001$. Each day has approximately 300,000 pickups.
The data size can be scaled up by increasing the number of days, $T$. We experiment with $T=1,2,4,8,16$, with $T=1$ being the default.
Any pair of records $a,b\in \dom$ are called a match if they have the same day and hour, and their Euclidean distance in location is no larger than $\theta$.
The location domain is within the bounding box (40.711720N, 73.929670W) and (40.786770N, 74.006600W).  We project the locations into a uniform grid of $16 \times 16$ cells with size  $0.005\times 0.005$.
A blocking strategy $\block^S$ based on the pickup time and grid is applied to both datasets, resulting in $(16\times 16\times 24T)$ bins. $\block^S$ compares pairs of bins that are associated with the same hour, and corresponding/neighboring grid cells. Thus, each bin in $\block(D_A)$ is compared with 9 bins in $\block(D_B)$.

{\em Abt and Buy product dataset} {\bf (AB):} These datasets are synthesized from the online retailers Abt.com and Buy.com \cite{abtbuy} who would like to collaboratively study the common products they sell as a function of time.
Each record in either dataset consists of a product name, brand and the day the product was sold. The product names are tokenized into trigrams, and hashed into a bit vector with a bloom filter having domain $\dom = \left\{0,1\right\}^{50}$. We consider 16 brands, and sample 5,000 records per day from the original datasets for Abt and Buy each. The data size can be scaled up with $T$ for $T=1,2,4,8,16$, with 1  being the default for $T$. Any pair of records $a,b\in \dom$ are called a match if (a) they are sold on the same day, (b) they are of the same brand, and (c) the hamming distance between their vectorized names is no more than $\theta=5$. A blocking strategy hashes records having the same value for day and brand into the same bin, resulting in $16T$ bins, and compares records falling in the corresponding bins.

\subsubsection{Protocols:}  We evaluate four \dprl protocols: (1) Laplace protocol (\lap), (2) all-pairwise comparisons (APC),  (3) private set intersection (PSI), and (4) PSI with expansion (PSI+X). The default \lap consists of the basic protocol described in Algorithm~\ref{algo:lap} along with optimization steps (SP and GMC) in \S~\ref{sec:prune} and \ref{sec:greedy}.

\subsubsection{Metrics:}
There are three dimensions in the trade-off space: correctness, privacy and efficiency. The correctness of a protocol is measured by the {\em recall}, which is the fraction of the matching pairs output by the algorithm, as defined in Eqn.~\eqref{eqn:recall}, with larger values close to 1 being better. The {\em privacy} metric is specified in advance for each algorithm using parameters $\epsilon,\delta$. For AP, PSI, and PSI+X, $\epsilon=0$ and $\delta=\negligible(\kappa)$ by Theorem~\ref{theorem:dprl_ind-s2pc}.
We consider $\epsilon_A=\epsilon_B=\epsilon$ and $\delta_A=\delta_B$ for $\epsilon\in \{0.1,0.4,1.6\}$ and $\delta\in \{10^{-9},10^{-7},10^{-5}\}$ for \lap.
The default value for $\epsilon$ and $\delta$ is $1.6$ and $10^{-5}$, respectively. Finally, we define {\em efficiency} of APC and \lap protocols for a given dataset as the number of secure pairwise comparisons, and denote this by \emph{cost}. The cost of PSI and PSI+X can be estimated as $\gamma n\ln\ln(n)$, where $\gamma$ is the expansion factor, or the ratio of sizes of the expanded and true databases. This represents the number of operations on encrypted values. For PSI, $\gamma$ is 1. We use the number of secure comparison/operations on encrypted values rather than the wallclock times as a measure of efficiency, since these operations dominate the total time. We discuss wallclock times in more detail in \S~\ref{sec:clock}.

\begin{figure}[t]
\centering
\includegraphics[scale=0.57]{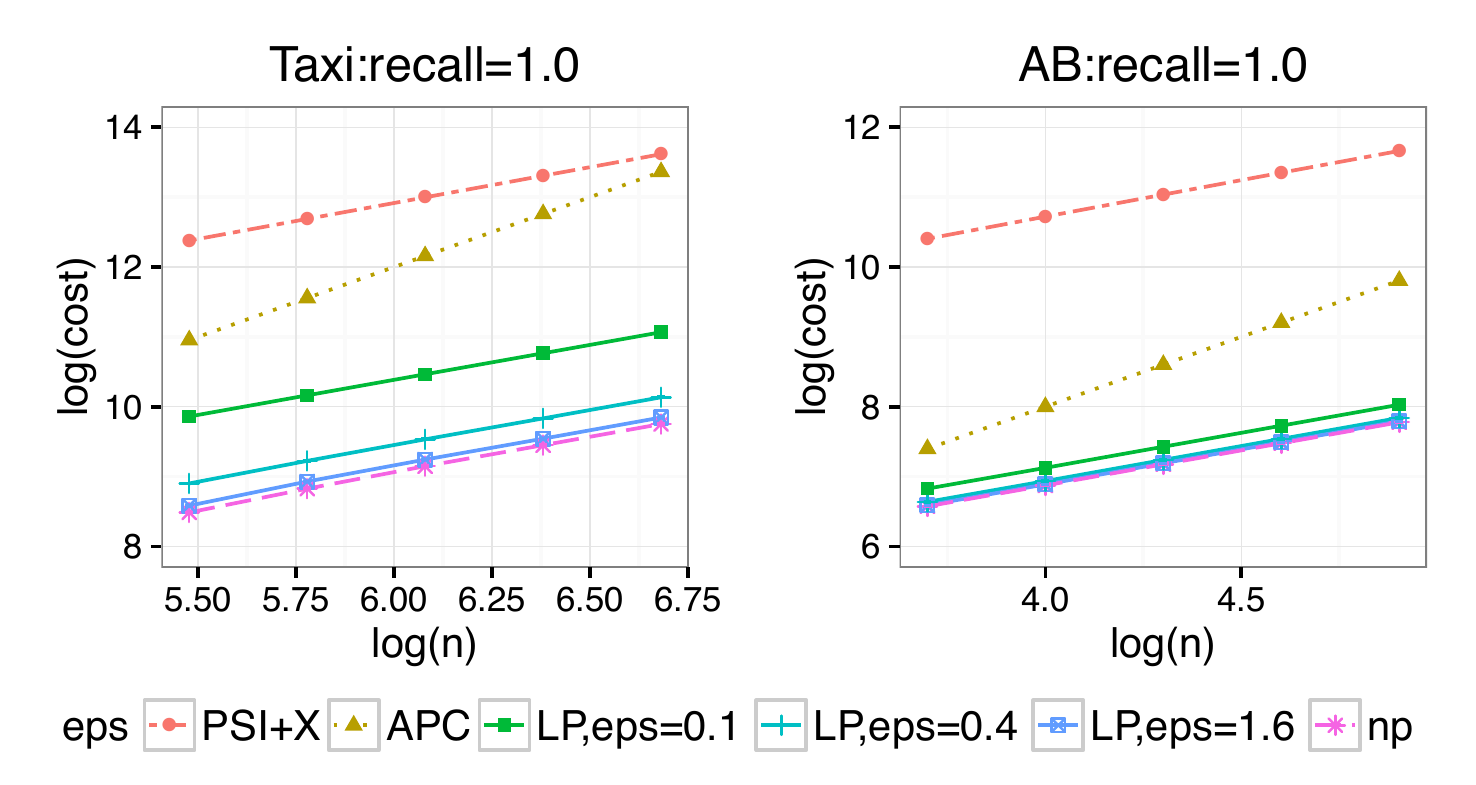}
\caption{The average $\log\mbox{(cost)}$ of \lap, APC, PSI+X and non-private matching (np) for the Taxi and AB datasets vs $\log\mbox{(data size)}$.
\lap give lower costs than the baselines PSI+X and APC for all values of $\epsilon=0.1,0.4,1.6$ and $\delta=10^{-5}$, and scales near linearly.
}\label{fig:greedy_recall100}
\end{figure}

\begin{figure*}[th!]
  \centering
\includegraphics[scale=0.58]{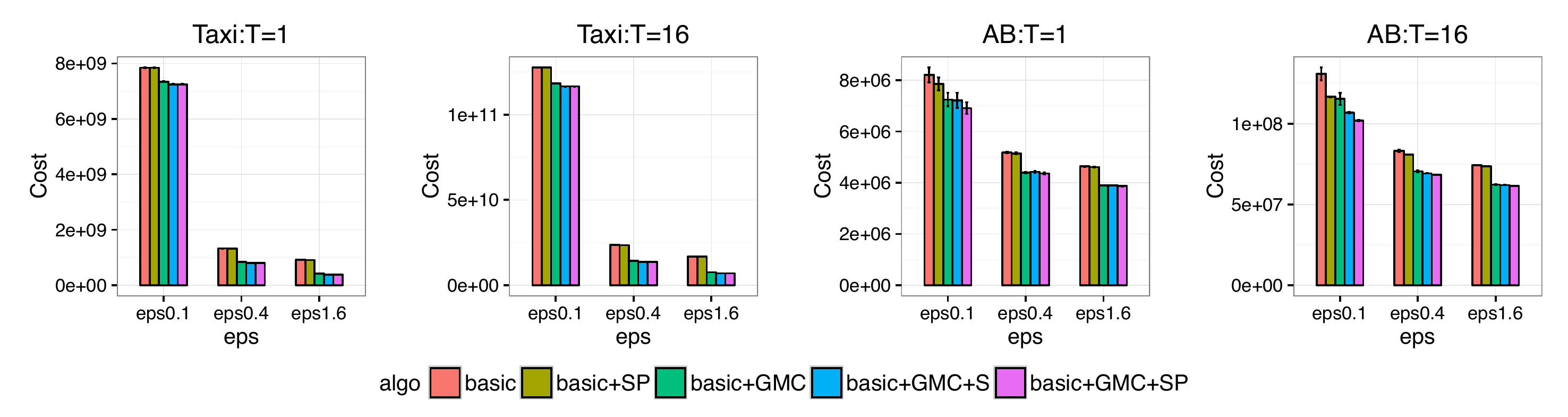}
\caption{The average cost with standard deviation of \lap protocols with five settings: (1) `basic' -- the basic \lap protocol in Algorithm~\ref{algo:lap}, (2) `basic+SP' -- the basic \lap with sort \& prune step in \S~\ref{sec:prune}, (3) `basic+GMC' -- the basic protocol with greedy match \& clean step in Algorithm~\ref{algo:greedy}, (4) `basic+GMC+S' -- the previous protocol with addition sorting step described in \S~\ref{sec:prune}, and (5) `basic+GMC+SP' -- the protocol stops at recall less than full recall.
}\label{fig:opt}
\end{figure*}

\subsection{Results and Discussions}\label{sec:findings}

\subsubsection{Efficiency and scalability} \label{sec:eff_baseline}
In this section, we empirically investigate how \lap scales as the data size increases ($T\in\{1,2,4,8,16\}$) in comparison to baselines APC and PSI+X, when all the algorithms achieve 100\% recall. We do not include PSI as its recall is close to 10\%. \lap is evaluated at privacy parameter $\epsilon \in \{1.6, 0.4, 0.1\}$ and fixed $\delta = 10^{-5}$.  At each $\epsilon$, we report the average number of candidate pairs for LP over 10 runs for each value of $T$.
To achieve 100\% recall, PSI+X expands each record $b$ in $D_B$ to every other record $b'$ within a $\theta$-ball around $b$.  We add 2,369,936 records per record in the AB dataset, and $1000^2\pi$ records per record in the Taxi dataset.

In Figure~\ref{fig:greedy_recall100}, we report the log(base 10) value of the average cost, $\log(cost)$, with respect to the log value of data size $\log(n)$ for PSI+X, APC, and \lap with varying $\epsilon$ and the non-private setting (np) when they achieve a recall of 1.0.
Results for Taxi are shown on the left, and AB are shown on the right.
For both datasets, the baseline methods, PSI+X and APC, have data points and line segments above \lap for the plotted data size range.
When the Taxi dataset has a size of $10^{5.5}$, \lap at $\epsilon=0.1$ costs an order of magnitude less than APC, as shown by the leftmost brown point (APC) and blue point (\lap,eps=0.1) in Figure~\ref{fig:greedy_recall100}(left). As the data size increases, the gap between APC and \lap gets larger. When data size increases by 16 times (the right most points in the plots), \lap at $\epsilon=0.1$ costs over 2 orders of magnitude less than APC.  When $\epsilon$ increases, the cost of \lap shifts downward towards the non-private setting (np). When $\epsilon=1.6$, \lap has 3 orders of magnitude lower cost than APC for the given range of data sizes. The line for np is the lower bound for \lap, where no dummy records are added to the bins. Similar observations are found in Figure~\ref{fig:greedy_recall100}(right) for the AB dataset, where \lap improves APC by up to 2 orders for the plotted data size range.

PSI+X has a much larger cost than both APC and \lap, mainly due to the fact that the expansion factor is far larger than the data size. We also observe that the lines that pass through the points of APC for both Taxi and AB datasets have a slope of 2, which corresponds to the quadratic communication and computational cost of APC.
LP and PSI+X have slopes of values slightly larger than 1, and thus are linear time. Thus, for sufficiently large data sizes, PSI+X can beat APC. However, we do not expect PSI+X to beat LP due to the large expansion factor. Similar results are observed when the protocol stops before achieving full recall (Figure~\ref{fig:greedy_recall95} in Appendix~\ref{app:eval}).

\subsubsection{Optimization steps}\label{sec:eval_opt}
We next study the effectiveness of the optimization steps for \lap. We study 5 protocols as shown below:
\squishlist
\item `basic': the basic \lap Algorithm~\ref{algo:lap} with no heuristic optimizations;
\item `basic+SP': the basic \lap with the sort \& prune step (SP). SP stops the protocol when the threshold reaches the 10\% percentile of the noisy bin counts of $\tilde{\block}(D_A)$ and $\tilde{\block}(D_B)$. Together with the sorting step, bins pairs with insufficient counts can be pruned away, resulting in a recall slightly smaller than the highest possible recall;
\item `basic+GMC': the greedy match \& clean step (GMC) in Algorithm~\ref{algo:greedy} is applied to the basic \lap;
\item `basic+GMC+S': in addition to the previous protocol, bins are sorted in order of size. Pruning is omitted so that the highest possible recall is achieved;
\item  `basic+GMC+SP': the same protocol as `basic+GMC+S', except it prunes the bins with counts in the bottom 10\% percentile.
\squishend

Hence, the default \lap can be also denoted by `basic+GMC+S' if recall is 1.0 and `basic+GMC+SP' if recall is less than 1.0.

In Figure~\ref{fig:opt}, we report the average cost with the standard deviation across 10 runs of the above mentioned protocols at $\epsilon=0.1, 0.4, 1.6$ and $\delta=10^{-5}$ for the Taxi and AB datasets when $T=1$ and $T=16$. Several interesting observations arise from this plot.

First, the most significant drop in cost is due to GMC. The protocols with the greedy step have smaller cost than other protocols for all $\epsilon$ and datasets. For the Taxi datasets at $T=1$ or $T=16$, `basic+GMC' saves the cost of `basic' by over 50\% when $\epsilon=1.6$. As $\epsilon$ decreases, these relative savings reduce because more dummy records are added and cannot be matched or removed by this greedy step. For the AB datasets, `basic+GMC' reduces the cost of `basic' by up to 16\% at $\epsilon=1.6$ and 11\% at $\epsilon=0.1$.

Next, adding the sorting step to GMC (GMC +S) improves upon GMC when the data sizes are large (T=16). For instance, when $\epsilon=0.1$ and $T=16$, `basic+GMC+S' can further bring the cost down by approximately $8.0\times 10^6$ candidate pairs for the AB datasets, and by $2.0\times10^9$ for the Taxi datasets.

Third, the cost of `basic+GMC+SP' is reported at a recall reaching above 0.95. The reduction with respect to `basic+GMC+S' is relatively small, but the absolute reduction in cost is significant in some setting. For instance, the number of candidate pairs is reduced by $5.0\times 10^6$ for the AB datasets when $\epsilon=0.1$ and $T=16$.

Last, for the AB dataset at $T=16$, `basic+SP' has a smaller variance in cost than `basic' at $\epsilon=0.1$. Similarly, `basic+GMC+SP' has a smaller variance in cost than `basic+GMC'. This implies the sort \& prune step can help prune away bins, and hence reduce the variance introduced by dummy records.

\begin{figure*}[t]
\centering
\subfigure[Vary $\epsilon\in \{1.6,0.4,0.1\}$, $\delta=10^{-5}$]{
\includegraphics[scale=0.55]{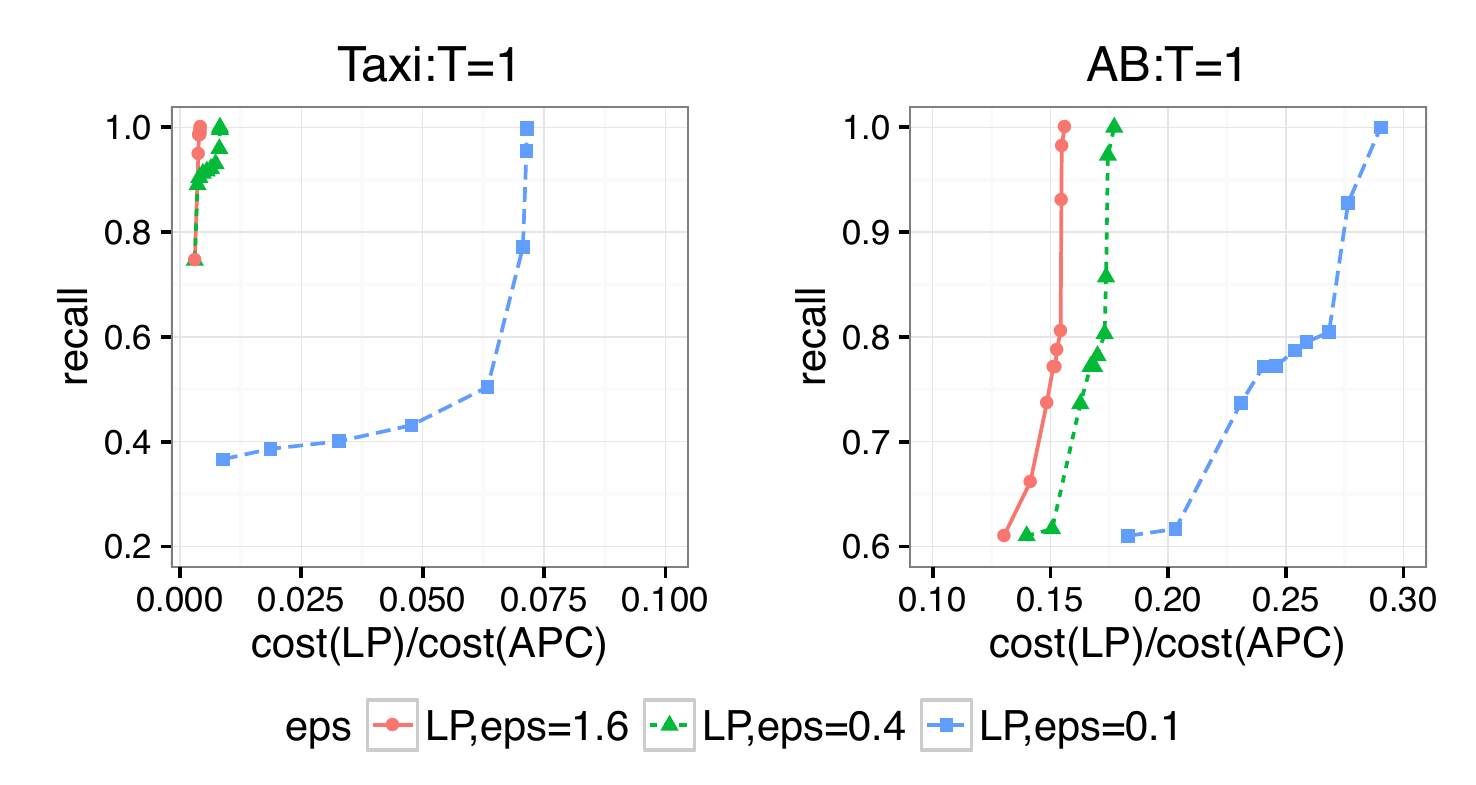}\label{fig:vary_eps}
}
~
\subfigure[Vary $\delta\in\{10^{-5}, 10^{-7}, 10^{-9}\}$, $\epsilon=1.6$]{
\includegraphics[scale=0.55]{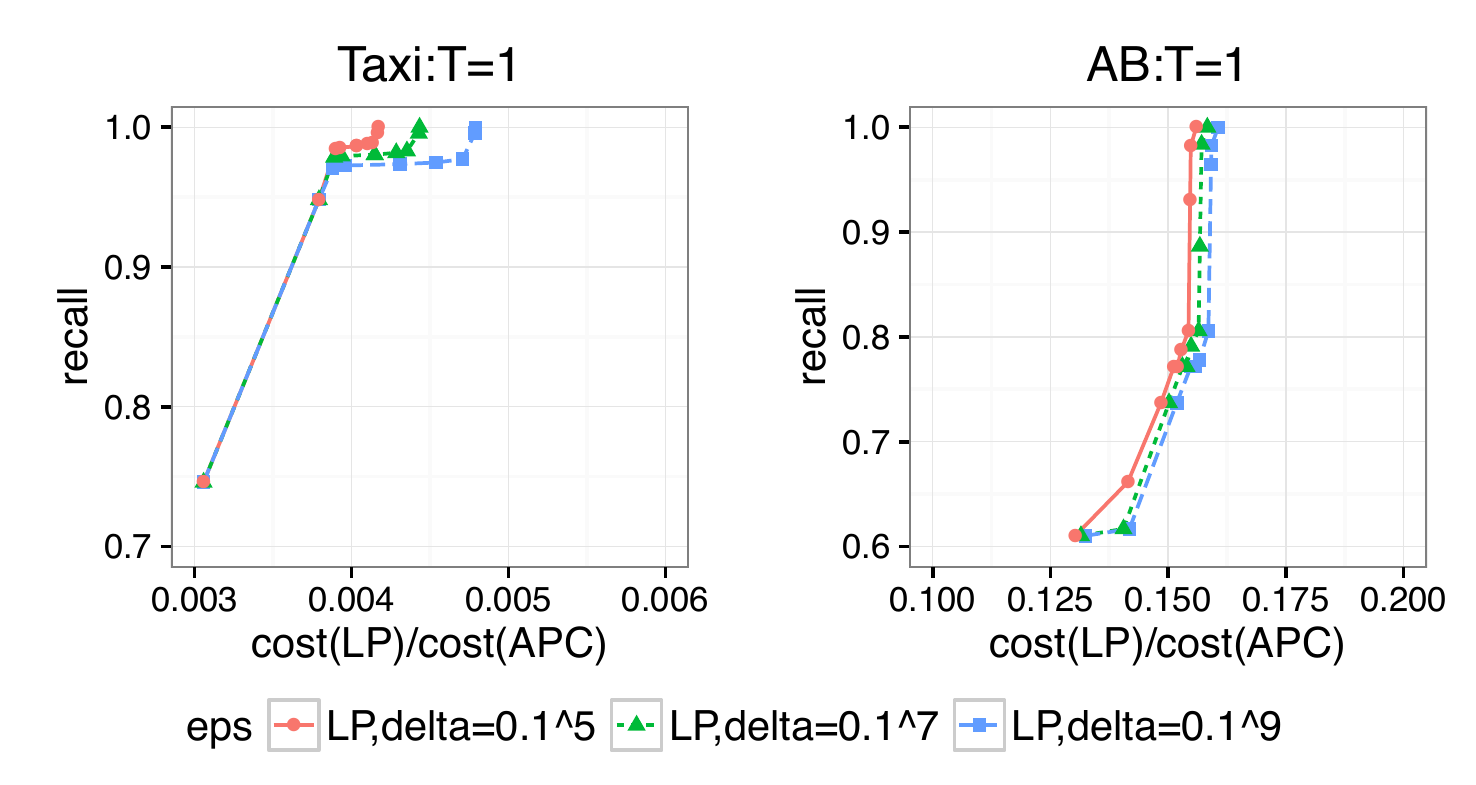}\label{fig:vary_delta}
}
\caption{\lap with varying privacy settings plotted over the default Taxi datasets and AB datasets. Each trade-off line between recall and the efficiency (cost(\lap)/cost(APC)) corresponds to the default \lap at a privacy setting $(\epsilon,\delta)$. Figure~\ref{fig:vary_eps} varies $\epsilon$ and Figure~\ref{fig:vary_delta} varies $\delta$.
}\label{fig:varyprivacy}
\end{figure*}

\subsubsection{Three-way trade-offs}\label{sec:eval_tradeoff}
All the \dprl baseline methods including APC, PSI and PSI+X, have a fixed and strong privacy guarantee where $\epsilon=0$ and $\delta=\negligible(\kappa)$. Hence, each baseline has a single point in a plot between recall and efficiency for a given data size, where APC and PSI+X have a point with full recall and high cost, and PSI has a point with low recall and low cost. Here, we will show that \lap allows a trade-off between recall and efficiency for a given privacy guarantee. The efficiency metric used here is the ratio of the cost(\lap) to the cost(APC).

Figure~\ref{fig:vary_eps} illustrates the case when both Alice and Bob require $(\epsilon,\delta)$-\dprl protection where $\epsilon=\{0.1, 0.4,1.6 \}$ and fixed $\delta=10^{-5}$. In Figure~\ref{fig:vary_delta}, we vary the values of $\delta$ for $\delta\in \{10^{-9}, 10^{-7}, 10^{-5}\}$ with fixed $\epsilon=1.6$. Each data point in the plot corresponds to the average cost(\lap)/cost(APC) and average recall of the default \lap for a given $(\epsilon,\delta)$ and the default data size with $T=1$. The default \lap allows the sort \& prune step as described in \S~\ref{sec:prune} with a list of thresholds that are the 90\%, 80\%, ..., 0\% percentiles of the sorted bin sizes of $\tilde{\block}(D_A)$ and $\tilde{\block}(D_B)$. We report the average recall and cost(\lap)/cost(APC) for each percentile. This gives a trade-off line for each $\epsilon$ and $\delta$ value.

We observe that all the trade-off lines obtain a high recall at very small values of cost(LP)/cost(APC). Even at $\epsilon=0.1$, \lap incurs 100 times smaller cost than APC.  \lap has a slightly larger cost for AB dataset.
In Figure~\ref{fig:vary_eps}, the trade-off lines between recall and efficiency shift rightwards as the privacy parameter $\epsilon$ gets smaller. In other words, the cost is higher for a stronger privacy guarantee in order to output the same recall. Similar observations are found in Figure~\ref{fig:vary_delta}. However, the trade-off lines are more sensitive to $\epsilon$ than $\delta$. The red lines in Figure~\ref{fig:vary_eps} and the red lines in Figure~\ref{fig:vary_delta} correspond to the same privacy setting. As $\delta$ reduces by 10000 times from $10^{-5}$ to $10^{-9}$, the trade-off line of \lap for the Taxi datasets shifts the ratio of costs by at most 0.001 as shown in Figure~\ref{fig:vary_delta} (left) while the trade-off line increases the ratio of costs to 0.07 as $\epsilon$ reduces from $1.6$ to $0.1$ (Figure~\ref{fig:vary_eps}).

As the Taxi and AB dataset have different data distributions over bins, the shapes of the trade-off lines are different. AB datasets are more skewed and have some bins with large counts. These bins also have many matching pairs, and hence we see a steep rise for the first part of the trade-off lines for the AB datasets. When the data size increases, if the distribution of matching pairs remains similar, the trade-off lines between the efficiency and recall tends to stay the same. These trade-off lines can be useful when choosing the recall, privacy and efficiency  for larger datasets.

\subsubsection{Wall clock times}\label{sec:clock}
We implemented APC and \lap in python, and implemented operations on encrypted records using the Paillier homomorphic cryptosystem using the python-paillier library \cite{pyphe}. As all algorithms require a one-time encryption of records we exclude this cost and only measure the cost of operations on the encrypted records. On a 3.1 GHz Intel Core i7 machine with 16 GB RAM, we found that computing the Hamming  distance of two encrypted records with dimension $d=50$ takes an average of $t_s = 77$ ms. That is, for datasets of size $n=5000$, APC would take over 22 days to complete! Additionally, for the same dataset with $\epsilon=1.6$, \lap would only take 80 hours to achieve a recall of 1. In comparison, the wall clock time of \lap ignoring the time spent in comparisons of encrypted records was only $120$ seconds. We believe that this order of magnitude difference in time for secure operations and normal operations is true independent of the library or protocol used for secure comparisons.
Thus, the computational cost of \lap is dominated by the cost of secure comparison. How to improve the unit cost of each secure pairwise comparison is an important research topic, and is orthogonal to our research. Hence, in this evaluation, we focused only on the number of secure comparisons/operations on encrypted values to measure efficiency.

\section{Related Work}\label{sec:relatedwork}
In addition to the prior work \cite{Inan:2010:PRM:1739041.1739059, Kuzu:2013:EPR:2452376.2452398,DBLP:conf/icde/CaoRBK15} that attempted to combine DP and secure computation techniques in order to scale-up the \prl problem, there are other efforts that take similar approaches, but focus on solving different problems. Wagh et al. \cite{DBLP:journals/corr/WaghCM16} formalized the notion of \emph{differentially private oblivious RAM (DP ORAM)} and their corresponding protocols significantly improved the bandwidth overheads with a relaxed privacy guarantee. This privacy notion considers a client-server model where all data sit on a single server, while \dprl considers two party computation. Moreover, the protocols for DP ORAM only consider the trade-off between privacy and efficiency while \dprl considers an additional trade-off dimension: correctness. Several efforts \cite{6517175, NIPS2010_0408, Alhadidi:2012:SDF:2359015.2359025, Pettai:2015:CDP:2818000.2818027, Narayan:2012:DDP:2387880.2387895, Goryczka:2013:SMA:2457317.2457343} also integrated DP with SMC in a distributed setting where data is vertically or horizontally partitioned between parties. The difference is that these papers focus on aggregate functions over the partitioned data, such as join size, marginal counts and sum, while \prl requires matching individual record pairs. This matching of individual record pairs does not naturally compose with DP, and hence motivated \dprl, a new privacy model for efficient \prl.

\section{Conclusion} \label{sec:discussion}
In this work, we propose a novel privacy model, called {\em output constrained differential privacy}, that shares the strong privacy protection of differential privacy, but allows for the truthful release of the output of a certain function on the data. We showed that this new privacy model can be applied to record linkage to define differential privacy for record linkage (\dprl).
Under this framework, we proposed novel protocols for efficient \prl that satisfy three desiderata: correctness, privacy and efficiency.  This is an important advance, since none of the prior techniques achieves all three desiderata.
Despite this advance,
further investigation into the practicality of \dprl protocols
is a direction for future research.
This includes investigation into their
wall clock times in a specific operational environment and over datasets with more complex matching functions.
Additional
directions for future research include identifying \dprl protocols that further reduce the computational complexity of record linkage, such as applying a data-dependent blocking strategy,
extending two-party \dprl to a multi-party setting, and generalizing the notion of output constrained differential privacy to other applications beyond private record linkage.

\ \\{\bf Acknowledgements:}
This work was supported by NSF grant 1253327, 1408982, 1443014,
and DARPA \& SPAWAR under contract N66001-15-C-4067.
The U.S. Government is authorized to reproduce and distribute reprints for Governmental
purposes not withstanding any copyright notation thereon.
The views, opinions, and/or findings expressed are those of
the author(s) and should not be interpreted as representing
the official views or policies of the Department of Defense or
the U.S. Government.

\bibliographystyle{ACM-Reference-Format}

\appendix

\section{Related Privacy Definitions} \label{sec:otherprivacy}
\subsection{Simulation-Based S2PC} \label{app:sims2pc}
The standard simulation-based definition for SMC is defined below.
\begin{definition}[SIM-S2PC] \label{def:sim-s2pc}  \cite{Goldreich:2004:FCV:975541}
For a functionality $f$,  a 2-party protocol $\p$ which computes $f$ provides simulation-based secure 2-party computation (SIM-S2PC) if for all data sets $D_A$, $D_B$ of  polynomial sizes (in $\kappa$),
there exist probabilistic polynomial-time algorithms (simulators), denoted by $S_A$ and $S_B$ such that the distribution of $S_A$ (resp., $S_B$) is computationally indistinguishable from $\vw^{\p}_A$ (resp., $\vw^{\p}_B$), i.e.  for any probabilistic polynomial-time (in $\kappa$) adversary $T$,
\begin{eqnarray}
&&Pr[T(S_A(D_A, f_A(D_A,D_B), f(D_A,D_B)))=1]  \\
&\leq & Pr[T(\vw^{\p}_A(D_A,D_B), \out^{\p}(D_A,D_B))=1] + \negligible(\kappa) \nonumber\\
&&Pr[T(S_B(D_A, f_B(D_A,D_B),  f(D_A,D_B)))=1]  \\
&\leq & Pr[T(\vw^{\p}_B(D_A,D_B), \out^{\p}(D_A,D_B))=1] + \negligible(\kappa).  \nonumber
\end{eqnarray}
\end{definition}
If $f$ is deterministic, Alice gains no additional knowledge other than its respective input ($D_A$) and output ($f_A(D_A,D_B)$); similarly for Bob. When randomized functionalities are concerned, augmenting the view of the semi-honest party by the output of the other party is essential. In this case, for any protocol $\p$ that computes the randomized functionality $f$, it does not necessarily hold that $\out^{\p}(D_A,D_B) = f(D_A,D_B)$. Rather, these two random variables must be identically distributed.  In order to study the possibility of composing DP and S2PC, we choose the indistinguishability-based definition for PRL, which is implied from SIM-S2PC.
\begin{theorem} \label{theorem:sim-ind}
SIM-S2PC implies IND-S2PC.
\end{theorem}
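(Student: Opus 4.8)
The plan is a standard hybrid (triangle-inequality) argument routed through the simulator. Fix $D_A$ and any pair $D_B, D'_B$ with $f(D_A,D_B) = f(D_A,D'_B)$; I want to show that $\vw^{\p}_A(D_A,D_B)$ and $\vw^{\p}_A(D_A,D'_B)$ are computationally indistinguishable (the Bob side is symmetric, swapping the roles of the parties and of $D_A, D'_A$). The key observation is that $f_A(D_A,\cdot)$ is just Alice's component of the output of $f(D_A,\cdot)$, so the equality $f(D_A,D_B)=f(D_A,D'_B)$ forces $f_A(D_A,D_B)=f_A(D_A,D'_B)$ as well (in \prl both parties receive the same output, so this is immediate). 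Hence the simulator $S_A$ guaranteed by SIM-S2PC is invoked on \emph{identical} inputs in the two executions, and therefore $S_A(D_A, f_A(D_A,D_B), f(D_A,D_B))$ and $S_A(D_A, f_A(D_A,D'_B), f(D_A,D'_B))$ are identically distributed.

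Next I would chain the three indistinguishabilities. By SIM-S2PC, $(\vw^{\p}_A(D_A,D_B), \out^{\p}(D_A,D_B))$ is computationally indistinguishable from $S_A(D_A, f_A(D_A,D_B), f(D_A,D_B))$, and likewise $(\vw^{\p}_A(D_A,D'_B), \out^{\p}(D_A,D'_B))$ from $S_A(D_A, f_A(D_A,D'_B), f(D_A,D'_B))$, where I read each of these as the two-sided bound obtained by applying the one-sided inequality of Definition~\ref{def:sim-s2pc} both to an adversary $T$ and to its complement $1-T$. Composing the three steps --- view-to-simulator, simulator-equals-simulator, simulator-to-view --- by the triangle inequality shows that $(\vw^{\p}_A(D_A,D_B), \out^{\p}(D_A,D_B))$ and $(\vw^{\p}_A(D_A,D'_B), \out^{\p}(D_A,D'_B))$ are computationally indistinguishable up to a sum of a constant number of $\negligible(\kappa)$ terms, which is again $\negligible(\kappa)$. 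Finally, since IND-S2PC only tests the bare view, I note that any probabilistic polynomial-time IND-adversary $T'$ induces the adversary $T(x,y):=T'(x)$ on the augmented pair, so indistinguishability of the augmented pairs yields indistinguishability of the views; equivalently, $\out^{\p}_A$ is already implicit in Alice's view, so dropping the output component is harmless.

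The only place that needs care --- the ``main obstacle,'' such as it is --- is bookkeeping around the asymmetry of the two definitions: Definition~\ref{def:sim-s2pc} states a one-directional inequality and augments the view with the protocol output, whereas Definition~\ref{def:ind-s2pc} asks for a one-directional inequality on the bare view. Bridging these requires (i) promoting each one-sided simulator bound to a two-sided bound by also running it against $1-T$, and (ii) the projection/post-processing step that strips $\out^{\p}$. Both are routine; in particular, for deterministic $f$ (such as $f_{\Join_m}$) one has $\out^{\p}(D_A,D_B)=f(D_A,D_B)$, so the augmenting output is literally a fixed function of the simulator's input and the reduction is immediate.
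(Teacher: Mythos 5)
Your proposal is correct and follows essentially the same route as the paper's own proof: since $f(D_A,D_B)=f(D_A,D'_B)$, the simulator $S_A$ is run on identical inputs in both executions and hence produces identical distributions, and a triangle-inequality/hybrid argument through the simulator transfers this to the real views. The paper's version is just terser, omitting the bookkeeping you carefully spell out (the one-sided-to-two-sided promotion via $1-T$ and the projection that strips the appended output), but those details are handled correctly in your write-up.
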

\begin{proof}
Given the protocol $\p$, for all possible inputs $(D_A, D_B)$,  there exists a global simulator $S_A$ such that the distribution of $S_A$ is computationally indistinguishable from the view of Alice. As $f(D_A,D_B) = f(D_A,D'_B)$, $S_A$ takes the same input and hence will have the same distribution for $D_B$ and $D'_B$. Hence, the views over $(D_A,D_B)$ or $(D_A,D'_B)$ are indistinguishable.
\end{proof}

Any algorithm that satisfies SIM-CDP also satisfies IND-CDP \cite{Mironov:2009:CDP:1615970.1615981}, but it is unknown if the converse holds.

\subsection{Computationally Differential Privacy}\label{app:cdp}
Mironov et al. \cite{Mironov:2009:CDP:1615970.1615981} defines a privacy notion, known as indistinguishable computationally differential privacy (IND-CDP-2PC). This notion is a direct extension of DP in two party setting where both parties are computationally bounded. Formally, we have
\begin{definition}[IND-CDP-2PC] \label{def:ind-cdp-2pc}
A 2-party protocol $\p$ for computing function $f$ satisfies $(\epsilon_A(\kappa),\epsilon_B(\kappa))$-indistinguishable computationally differential privacy (IND-CDP-2PC) if $ \vw^{\p}_A(D_A,\cdot)$ satisfies $\epsilon_B(\kappa)$-IND-CDP, i.e. for any probabilistic polynomial-time (in $\kappa$) adversary $T$, for any neighboring databases $(D_B,D'_B)$ differing in a single row,
\begin{eqnarray}
&&Pr[T(\vw^{\p}_A(D_A,D_B))=1] \nonumber \\
&\leq & e^{\epsilon_B} Pr[T(\vw^{\p}_A(D_A,D'_B))=1] + \negligible(\kappa).
\end{eqnarray}
The same holds for Bob's view for any neighbors $(D_A,D'_A)$ and  $\epsilon_A$.
\end{definition}

\subsection{Local Differential Privacy}\label{app:localdpdef}

The local model is usually considered in the model where individuals do not trust the curator with their data. The local version of differential privacy is defined as follows.

\begin{definition}[$\epsilon$-Local Differential Privacy] \cite{Dwork:2014:AFD:2693052.2693053}
A randomized mechanism $M:\dom \rightarrow \mathcal{O}$ satisfies $\epsilon$-local differential privacy if
\begin{equation}
\Pr[M(r)=O] \leq e^{\epsilon} \Pr[M(r')=O]
\end{equation}
for any set $O\subseteq \mathcal{O}$, and any records $r,r'\in \dom$ and $\epsilon>0$.
\end{definition}

\section{Theorems \& Proofs}
\subsection{Privacy Leakage in Prior Work}
\subsubsection{Theorem~\ref{theorem:lsh} (Limitations of \prl with Blocking)} \label{app:prlblocking}
Given $(d_1,d_2,p_1,p_2)$-sensitive $H=\{h_0,..,h_{|H|-1}\}$, we use $H(\cdot)$ for a record to denote the list of hashing values $[h_0(\cdot),\ldots, h_{|H|-1}(\cdot)]$. An LSH-based blocking considers a set of bins where records associated with the same value for $H(\cdot)$ are hashed to the same bin. A popular blocking strategy is to compare all the corresponding bins, and results in a set of candidate matches $\{(a,b)| h(a)=h(b) \forall h\in H, a\in D_A, b\in D_B \}$, i.e.
$\{(a,b)| H(a)=H(b), \forall a\in D_A, b\in D_B \}$. We can show that any LSH based blocking cannot satisfy IND-S2PC as stated in Theorem~\ref{theorem:lsh}. Here is the proof.
\begin{proof}
Take a pair of  databases $(D_B,D'_B)$ where $f_{\Join_m}(D_A,D_B) = f_{\Join_m}(D_A,D'_B)$. Let the symmetric difference between $D_B$ and $D'_B$ be $(b,b')$ and $dist(b,b')>d_2$. Hence, with high probability $1-p_2^{|H|}$, we have $H(b)\neq H(b')$, and
$|\block_{H(b)}(D_B)| - |\block_{H(b)}(D'_B)|=1$  and $|\block_{H(b')}(D'_B)| - |\block_{H(b')}(D_B)|=1$ as the rest of records are the same in $D_B$ and $D'_B$. Alice as a semi-honest adversary can set her dataset such that $ |\block_{H(b)}(D_A)| \neq |\block_{H(b')}(D_A)|$. Then, with high probability, the following inequality holds
\begin{eqnarray}
&&cost_{\block^S}(D_A,D_B) - cost_{\block^S}(D_A,D'_B) \nonumber \\
&= & (|\block_{H(b)}(D_B)| - |\block_{H(b)}(D'_B)| )| \block_{H(b)}(D_A)|  \nonumber \\
&& + (|\block_{H(b')}(D_B)| - |\block_{H(b')}(D'_B)| )|\block_{H(b')}(D_A)| \nonumber \\
&=& |\block_{H(b)}(D_A)| - |\block_{H(b')}(D_A)| \neq 0.
\end{eqnarray}
Hence, Alice can distinguish $D_B$ and $D'_B$ by $cost_{\block^S}(D_A,D_B) \neq cost_{\block^S}(D_A,D'_B)$ with high probability $1-p_2^{|H|}$. Other blocking strategies can be similarly shown. Therefore, this LSH-based PRL does not satisfy IND-S2PC.
\end{proof}

\subsubsection{Theorem~\ref{theorem:prldp_limit} (Limitations of \prl with DP Blocking of Prior Approaches/\lap-2)}\label{app:prldp}
Several prior works \cite{Inan:2010:PRM:1739041.1739059, Kuzu:2013:EPR:2452376.2452398,DBLP:conf/icde/CaoRBK15} combine \prl techniques with differentially private blocking (\prl+$\block_{DP}$). These approaches can be summarized in three steps: (1) DP blocking, (2) records addition and suppression, (3) secure pair-wise comparisons based on blocking strategy $\block^{S}$. In the first step, Alice and Bob process their data independently. Each party generates an $\epsilon$-differentially private partition of the data, where each partition is associated with a noisy count $\tilde{o}_i = |\block_i(D_B)|+\eta_i$, where $\Pr[\eta_i=x] = pe^{-\epsilon/\Delta\block \cdot |x|}$, for $x\in \mathbb{Z}$ and $p=\frac{e^{\epsilon/\Delta\block-1}}{e^{\epsilon/\Delta\block}+1}$ is the normalized factor  \footnote{We use discrete version of Laplace distribution to avoid rounding.}. $\Delta\block$ is the sensitivity of the blocking strategy (Def~\ref{def:deltablock}).

Next, for each partition, if the noise $\eta_i$ is positive, dummy records are added; otherwise, records in that partition are suppressed randomly to obtain the published count. This results in new bins, denoted by $\{ \tilde{\block}_i(D_A)\}$ and $\{ \tilde{\block}_j(D_B)\}$. In the last step, Alice and Bob jointly compare record pairs $(a,b)$, where $a\in \tilde{\block}_i(D_A)$ and $b\in \tilde{\block}_j(D_B)$ for all $(i,j)\in \block^S$ as in APC. They only exchange the true records $(a,b)$ if they match. \cite{DBLP:conf/icde/CaoRBK15} considers a third party for identifying candidate pairs for Alice and Bob, so that Alice and Bob has no direct access to the noisy bins of the opposite party, but has access to the number of secure comparisons. However, this hybrid protocol above does not satisfy $(\epsilon,\delta)$-DPRL as stated in Theorem~\ref{theorem:prldp_limit}. The failure to satisfy \dprl is mainly caused by the record suppression step for the negative noise drawn from a zero-mean Laplace distribution, as shown in the following proof.

\begin{proof}
Without loss of generality, we consider Alice as the adversary. For any arbitrary $\epsilon$ and small $\delta< \frac{p^{\Delta\block}}{2e^{\epsilon}}$, there exists a counter example fails $(\epsilon,\delta)$-\dprl. For simplicity, we illustrate how to construct counterexamples using a blocking strategy $\block$ with sensitivity $\Delta\block=2$, where Alice and Bob use the same hashing and each record is hashed to at most 1 bin. For other blocking strategies, counterexamples can be similarly constructed.

Fix a $D_A$, consider $D_B$ such that $\block_0(D_B) = \{b_*\}$ and $\block_1(D_B) = \{b_1,..,b_{n_1}\}$, where $1\leq n_1 < \frac{p^2}{e^{\epsilon}\delta-1}$. (Note that $\frac{p^2}{e^{\epsilon}\delta}>2$ because $\delta< \frac{p^2}{2e^{\epsilon}}$.) In addition, all records in $\block_1(D_B)$ can find some matching ones from $D_A$, but $b_*$ does not match any record in $D_A$. A neighboring database $D'_B$ can be constructed from $D_B$ by removing $b_*$ from $\block_0$, and adding another  $b'_*$ that can be hashed to $\block_1$.  It is easy to see that $(D_B,D'_B)\in \mathcal{N}(f_{\Join_m}(D_A,\cdot))$.

Without a third party \cite{Inan:2010:PRM:1739041.1739059, Kuzu:2013:EPR:2452376.2452398}, Alice and Bob has access to the number of secure comparisons and the noisy bin counts (in addition to the input data sizes and the matching output). Consider a set of views of Alice $\vw^*$ with output that contains all matching pairs from $\block_1(D_A)\Join_m \block_1(D_B)$ and noisy counts for bin $\block_0$ and $\block_1$ for Bob being $0$ and $n_1$ respectively. Let $\tilde{\block}(D_A)$ be the noisy bins that Alice uses for the final secure pairwise comparisons.  The probabilities to generate these views from $D_B$ and $D'_B$ are respectively:
\begin{eqnarray}
\Pr[\vw^* |\tilde{\block}(D_A), D_B] &=&  \Pr[\eta_0=0] \Pr[\eta_1=0]= p^2, \\
\Pr[\vw^* |\tilde{\block}(D_A), D'_B] &=& \Pr[\eta_0=1]\Pr[\eta_1=-1 \& \textit{~suppress~}b'_*]   \nonumber \\
&=& p^2 / (e^{\epsilon} (n_1+1)) > \delta
\end{eqnarray}
The inequality above is due to $n_1<\frac{p^2}{e^{\epsilon}\delta}-1$. Hence, we have
\begin{eqnarray}
\Pr[\vw^* |\tilde{\block}(D_A), D_B] &=& (e^{\epsilon} +n_1 e^{\epsilon}) \Pr[\vw^* |\tilde{\block}(D_A), D'_B] \nonumber \\
&>& e^{\epsilon}\Pr[\vw^* |\tilde{\block}(D_A), D'_B] + \delta.
\end{eqnarray}
Hence, $(\epsilon,\delta)$-DPRL is violated.

With a third party\cite{DBLP:conf/icde/CaoRBK15}, Alice and Bob has access to the final output, and the total number of secure pairwise comparisons,
but not the noisy bin counts. We can construct examples where knowing the number of secure comparisons leaks the noisy bin counts. After which the previous arguments (for the case with no third party) can show that this protocol does not ensure DPRL for all epsilon and delta.  For instance, consider Alice has only 1 record in $\tilde{\block}_0(D_A)$, and more than 1 records in other bins, if the output $O= D_A\Join_m D_B$, and the total number of secure pairwise comparisons is $|O|+1$. This secure pairwise matching that returns false can  only happen between a record of Bob from $\tilde{\block}_0(D_B)$ with the record from $\tilde{\block}_0(D_A)$. Hence, Alice can infer the noisy counts of $\tilde{\block}(D_B)$. Then the argument for the case with no third party can be used.
\end{proof}

In addition, by Theorem~\ref{theorem:dprl_ind-s2pc} and Theorem~\ref{theorem:dprl_cdp}, \dprl is weaker than IND-S2PC and IND-CDP-2PC, we have the following result.

\begin{corollary}\label{cor:lap2}
\lap-2 satisfies neither IND-CDP-2PC nor IND-S2PC.
\end{corollary}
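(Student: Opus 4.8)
The plan is to derive the corollary entirely by contraposition from results already established: Theorem~\ref{theorem:dprl_ind-s2pc}, which says every IND-S2PC protocol for record linkage satisfies $(0,\negligible(\kappa))$-\dprl; Theorem~\ref{theorem:dprl_cdp}, which says $\epsilon/2$-IND-CDP-2PC implies $(\epsilon,\delta)$-\dprl with a vanishing $\delta$; and Theorem~\ref{theorem:prldp_limit}, which says \lap-2 fails $(\epsilon,\delta)$-\dprl whenever $\delta < \frac{p^{\Delta\block}}{2e^{\epsilon}}$. No fresh construction is required --- the witnessing neighboring databases are exactly the ones built in the proof of Theorem~\ref{theorem:prldp_limit}.

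For the IND-S2PC half, I would assume toward a contradiction that \lap-2 satisfies IND-S2PC. Theorem~\ref{theorem:dprl_ind-s2pc} then gives $(0,\negligible(\kappa))$-\dprl, and since enlarging $\epsilon$ and $\delta$ only weakens Definition~\ref{def:output}, \lap-2 in fact satisfies $(\epsilon,\delta)$-\dprl for every $\epsilon\geq 0$ and every $\delta\geq\negligible(\kappa)$. Fix any constant $\epsilon>0$: then $p=\frac{e^{\epsilon/\Delta\block}-1}{e^{\epsilon/\Delta\block}+1}$, and hence $\frac{p^{\Delta\block}}{2e^{\epsilon}}$, is a fixed positive constant that for all large $\kappa$ strictly exceeds $\negligible(\kappa)$. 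Picking $\delta$ with $\negligible(\kappa)\leq\delta<\frac{p^{\Delta\block}}{2e^{\epsilon}}$, Theorem~\ref{theorem:prldp_limit} produces a neighboring pair on which \lap-2 violates $(\epsilon,\delta)$-\dprl, contradicting the preceding sentence. Hence \lap-2 does not satisfy IND-S2PC.

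The IND-CDP-2PC half follows the same template. The degenerate parameter value $\epsilon_0=0$ is free: $0$-IND-CDP-2PC is a strengthening of $\epsilon_0$-IND-CDP-2PC for every $\epsilon_0>0$, so it suffices to rule out each fixed $\epsilon_0>0$. For such an $\epsilon_0$, Theorem~\ref{theorem:dprl_cdp} yields $(2\epsilon_0,\delta)$-\dprl with $\delta$ negligible (in any case $o(1)$) in $\kappa$, and the same computation as above with $\epsilon$ replaced by the constant $2\epsilon_0$ shows this $\delta$ falls below $\frac{p^{\Delta\block}}{2e^{2\epsilon_0}}$, contradicting Theorem~\ref{theorem:prldp_limit}. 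Therefore \lap-2 satisfies no instance of IND-CDP-2PC, and combining the two halves gives the corollary.

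The one point needing care --- and the step I would flag as the crux --- is verifying that the $(\epsilon,\delta)$ regime handed down by the computational definitions genuinely lands inside the failure regime $\delta<\frac{p^{\Delta\block}}{2e^{\epsilon}}$ of Theorem~\ref{theorem:prldp_limit}. This goes through precisely because, once $\epsilon$ (respectively $2\epsilon_0$) is pinned to a positive constant, $\frac{p^{\Delta\block}}{2e^{\epsilon}}$ is a positive constant lower bound that any negligible --- or even $o(1)$ --- $\delta$ eventually undercuts, whereas the boundary value $\epsilon_0=0$, where this threshold collapses and Theorem~\ref{theorem:prldp_limit} is silent, is absorbed by monotonicity of the guarantees rather than by invoking the impossibility result directly.
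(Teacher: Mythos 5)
Your proposal is correct and follows essentially the same route as the paper, which derives the corollary in one line by noting that Theorems~\ref{theorem:dprl_ind-s2pc} and~\ref{theorem:dprl_cdp} make \dprl weaker than IND-S2PC and IND-CDP-2PC, so the failure of \dprl in Theorem~\ref{theorem:prldp_limit} contradicts either stronger notion. Your additional care in checking that the negligible $\delta$ handed down by the computational definitions eventually falls below the constant threshold $\frac{p^{\Delta\block}}{2e^{\epsilon}}$, and your handling of the degenerate $\epsilon_0=0$ case by monotonicity, fill in details the paper leaves implicit.
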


\subsection{Properties of Output Constrained DP}
\subsubsection{Theorem~\ref{theorem:seq} (Sequential Composition)}\label{app:seqproof}
\begin{proof}
Consider Alice as a probabilistic polynomial-time (in $\kappa$) adversary $T$,
with input $D_A$. $(D_B,D'_B)$ are neighbors w.r.t. $f(D_A,\cdot)$. We have the
probabilities of distinguishing $D_B$ and $D'_B$ bounded by
\begin{eqnarray} \label{eq:p1}
&&\Pr[T(\vw^{\p_2,\p_1}_A(D_A,D_B))=1] \nonumber \\
&\leq& \int_x \Pr[T(\vw^{\p_2}_A(D_A,D_B,x))=1]\cdot \nonumber\\
&&      \Pr[x=\vw^{\p_1}_A(D_A,D_B)]dx \nonumber\\
&\leq& \int_x (e^{\epsilon_2}\Pr[T(\vw^{\p_2}_A(D_A,D_B',x))=1] + \delta_2)\cdot \nonumber \\
&&     \Pr[x=\vw^{\p_1}_A(D_A,D_B)]dx \nonumber\\
&\leq& \int_x (e^{\epsilon_2}\Pr[T(\vw^{\p_2}_A(D_A,D_B',x))=1]) \cdot\nonumber \\
&&     (e^{\epsilon_1}Pr[x=\vw^{\p_1}_A(D_A,D'_B)]+\delta_1)dx +\delta_2 \nonumber\\
&\leq& e^{\epsilon_1+\epsilon_2} \Pr[T(\vw^{\p_2,\p_1}_A(D_A,D'_B))=1] + \delta_1+\delta_2\nonumber
\end{eqnarray}
\end{proof}

\subsubsection{Theorem~\ref{theorem:pp} (Post-processing)}\label{app:ppproof}
\begin{proof}
Since $g$ is efficient and in composition with $T$ can be used as adversary itself. If $g(\out^{\p}(D_A,D_B))$ does not satisfy $(\epsilon,\delta)$-IND-DPRL, then $\p$ does not satisfy $(\epsilon,\delta)$-IND-DPRL.
\end{proof}

\subsubsection{Theorem~\ref{theorem:dprl_cdp} (Relation with IND-CDP-2PC)}
We show that \dprl is weaker than IND-CDP-2PC.
\begin{proof}
$\epsilon/2$-IND-CDP-2PC is equivalent to $\epsilon$-IND-DP-2PC, where neighboring databases have a symmetric difference of 2.
The set of neighboring databases for \dprl is a subset of that for $\epsilon$-IND-DP-2PC, and hence $(\epsilon,\delta)$-\dprl is weaker than $\epsilon/2$-IND-DP-2PC.
\end{proof}

\subsection{Properties for \dprl Protocols}

\subsubsection{Theorem~\ref{theorem:lap_dprl} (Privacy of Laplace Protocol)}
\begin{lemma} \label{lemma:ratio}
With probability $1-\delta$, the probability  for Alice having the same view from
neighboring databases $(D_B,D'_B)\in \mathcal{N}(f_{\Join_m}(D_A,\cdot))$ is bounded by $e^{\epsilon}$.
\end{lemma}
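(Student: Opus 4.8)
The plan is to recognize that Lemma~\ref{lemma:ratio} is a statement purely about the mechanism $\textsc{LapNoise}$ of Algorithm~\ref{algo:lapNoise}: everything in Alice's view at the end of Step~(a) that depends on Bob's input is a deterministic function of Bob's noisy bin-count vector $\tilde{o}(D_B)=(|\tilde{\block}_1(D_B)|,\dots,|\tilde{\block}_k(D_B)|)$ --- in particular the number of candidate pairs in each $(i,j)\in\block^S$, once Alice's own (independently generated) noisy bins are fixed. So it suffices to show that with probability at least $1-\delta$ over the noise added to $D_B$, the realized vector $v$ satisfies $\Pr[\tilde{o}(D_B)=v]\le e^{\epsilon}\Pr[\tilde{o}(D'_B)=v]$; post-processing (Theorem~\ref{theorem:pp}) and the fact that Alice's own randomness contributes a factor $1$ to every likelihood ratio then transfer this to her view. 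First I would invoke Theorem~\ref{theorem:neighbor_prl}: $D'_B=D_B-b+b'$ with $b,b'$ non-matching w.r.t.\ $D_A$, so the true count vectors of $D_B$ and $D'_B$ agree outside the set $I$ of bins that $b$ or $b'$ hashes into, and, by Definition~\ref{def:deltablock}, $|I|\le\Delta\block$ and $\sum_{i\in I}\bigl||\block_i(D_B)|-|\block_i(D'_B)|\bigr|\le\Delta\block$. Since the $\eta_i$ are independent across bins, coordinates outside $I$ are identically distributed under both inputs and drop out of the ratio; it remains to handle the $\le\Delta\block$ affected coordinates.

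Next I would carry out the single-bin analysis. For a bin with true counts $c$ (under $D_B$) and $c'$ (under $D'_B$), $\textsc{LapNoise}$ publishes $c+\eta^+$ with $\eta\sim Lap(\epsilon,\delta,\Delta\block)$ and $\eta^+=\max(\eta,0)$, so its pmf is $q_c(v)=\Pr[\eta\le 0]$ at $v=c$, $q_c(v)=\Pr[\eta=v-c]$ for $v>c$, and $q_c(v)=0$ for $v<c$. The only outcomes on which the per-bin ratio $q_c(v)/q_{c'}(v)$ can exceed $e^{(\epsilon/\Delta\block)|c-c'|}$ are those in which the noise of an affected bin is truncated (or sits exactly at a bin endpoint), so that the published value collapses to that bin's exact true count --- a value that is unreachable, or reachable only through the left-endpoint atom, from the neighbor. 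I would therefore set the failure set $B$ to be the collection of count vectors on which some affected bin exhibits this truncation, and bound $\Pr_{D_B}[\tilde{o}(D_B)\in B]$ by a union bound over the $\le\Delta\block$ affected bins of the per-bin truncation probability. Here the shift $\eta^0$ of Definition~\ref{def:lapbias} enters crucially: it is calibrated exactly so that the per-bin truncation probability is $1-(1-\delta)^{1/\Delta\block}$, whence, using independence and $|I|\le\Delta\block$, $\Pr_{D_B}[\tilde{o}(D_B)\in B]\le 1-(1-\delta)^{|I|/\Delta\block}\le 1-(1-\delta)=\delta$.

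It then remains to bound the ratio off $B$. On those outcomes, every affected coordinate's published value exceeds the relevant true count(s), so both $q_c(v_i)$ and $q_{c'}(v_i)$ lie on the exponential part of the density; the triangle inequality $\bigl||v_i-c-\eta^0|-|v_i-c'-\eta^0|\bigr|\le|c-c'|$ then bounds the per-coordinate ratio by $e^{(\epsilon/\Delta\block)|c-c'|}$ (the boundary case in which the neighbor still uses the $\eta^+=0$ atom is disposed of by a one-line computation of the form $(r-1)r^{|c-c'|-1}<r^{|c-c'|}$ with $r=e^{\epsilon/\Delta\block}$). Multiplying over $i\in I$ and using $\sum_{i\in I}|c_i-c'_i|\le\Delta\block$ gives an overall ratio $\le e^{\epsilon}$ on the complement of $B$. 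Combined with $\Pr_{D_B}[B]\le\delta$, this is exactly the claimed ``$e^{\epsilon}$ with probability $1-\delta$''; it yields, via the standard split $\Pr[M(D_B)\in S]\le\Pr_{D_B}[B]+e^{\epsilon}\Pr[M(D'_B)\in S]$, the $(\epsilon,\delta)$-form used in the proof of Theorem~\ref{theorem:lap_dprl}, and the reverse direction follows because the neighbor relation $\mathcal{N}(f_{\Join_m}(D_A,\cdot))$ is symmetric.

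The step I expect to be the main obstacle is the bookkeeping around the truncation atom: because $v\mapsto q_c(v)$ is not a translate of a fixed law (it carries a point mass at the left endpoint whose location depends on $c$), the textbook Laplace-mechanism ratio bound does not apply directly, and one must (i) pin down exactly which outcomes force the ratio above $e^{\epsilon}$ --- those touching the endpoint of an affected bin --- and (ii) verify that the specific $\eta^0$ of Definition~\ref{def:lapbias} drives the per-bin failure probability down to $1-(1-\delta)^{1/\Delta\block}$, uniformly over all $D_A$ and all neighbor pairs, i.e.\ over every way the hashing can move a single record among up to $\Delta\block/2$ bins. Once the failure set and this calibration are nailed down, the remainder --- the per-coordinate exponential ratio, the geometric-series evaluations of $\Pr[\eta\le 0]$ and $\Pr[\eta<0]$, and the union/product bound --- is routine.
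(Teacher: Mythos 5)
Your proposal follows essentially the same route as the paper's proof of Lemma~\ref{lemma:ratio}: reduce Alice's view to Bob's noisy bin-count vector, invoke Theorem~\ref{theorem:neighbor_prl} to localize the change to at most $\Delta\block$ affected bins with total count difference at most $\Delta\block$, define a failure event tied to truncation in an affected bin whose probability is controlled by the calibration of $\eta^0$, and off that event bound the product of per-bin likelihood ratios by $e^{\epsilon}$ via the triangle inequality on the exponential part. You are in fact more careful than the paper at the one delicate point: the paper conditions on all affected bins drawing non-negative noise and then writes the per-bin ratio as $\Pr[\eta_i = c_i - |\block_i(D_B)|]/\Pr[\eta_i = c_i - |\block_i(D'_B)|]$ using the \emph{untruncated} pmf in the denominator, which is not the actual output probability when $c_i$ equals or falls below $|\block_i(D'_B)|$; your explicit bookkeeping around the left-endpoint atom addresses exactly this.

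One quantitative step does not close as you state it. Your failure set is ``$\eta\le 0$ in some affected bin'' (the published count collapses to the true count), and you assert that $\eta^0$ is calibrated so that this per-bin event has probability $1-(1-\delta)^{1/\Delta\block}$. In fact the paper's $\eta^0$ satisfies $\Pr[\eta<0]=1-(1-\delta)^{1/\Delta\block}$, while with $r=e^{\epsilon/\Delta\block}$ one computes $\Pr[\eta\le 0]=\Pr[\eta<0]+\Pr[\eta=0]=\frac{r}{r+1}e^{-\eta^0\epsilon/\Delta\block}=r\cdot\Pr[\eta<0]$, so your union bound yields an overall failure probability of roughly $e^{\epsilon/\Delta\block}\delta$ rather than $\delta$. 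The atom genuinely cannot be dropped from the failure set: for a bin with $c>c'$ the ratio at $v=c$ is $\Pr[\eta\le0]/\Pr[\eta=c-c'] = r^{1-(c-c')}/(r-1)$, which exceeds $r^{|c-c'|}$ whenever $r<(1+\sqrt5)/2$, and for, say, $\Delta\block=2$ with one bin at its atom and the other off it the total ratio is $r/(r-1)>e^{\epsilon}$ for small $\epsilon$. So the slack is real, though it is only a constant factor and is fixable by recalibrating $\eta^0$ to control $\Pr[\eta\le 0]$ instead of $\Pr[\eta<0]$. The paper's own proof avoids confronting this only by leaving the $\eta=0$ boundary case inside its ``good'' event, where its ratio computation is not literally valid; modulo the recalibration, your version is the more rigorous of the two.
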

\begin{proof}
Given $(D_B,D'_B)\in \mathcal{N}(f_{\Join_m}(D_A,\cdot))$ and $\block$, the maximum difference in the bin counts of $D_B$ and $D'_B$ is $\Delta\block$. Let $\block_{\Delta}$ be the set of bins that $D_B$ and $D'_B$ have different counts, and $\sum_{i\in \block_{\Delta}} |\block_i(D_B)-\block_i(D'_B)| \leq \Delta\block$.
If all the noise for these bins are non-negative, then the probability to output the same noisy counts $(c_0,\ldots,c_{k-1})$ from $D_B$ and $D'_B$ is bounded by
\begin{eqnarray}
&&\ln(\frac{\Pr[(c_0,\ldots,c_{k-1}|D_B]}{\Pr[(c_0,\ldots,c_{k-1}|D'_B]}) \nonumber \\
&=&\ln (\frac{\prod_{i=0}^{k-1}\Pr[\eta_i=c_i-|\block_i(D_B)|]}{\prod_{i=0}^{k-1}\Pr[\eta_i=c_i-|\block_i(D'_B)|]}) \nonumber \\
&=&\sum_{i\in \block_{\Delta}} \ln(\Pr[\eta_i = c_i-|\block_i(D_B)|])-\ln(\Pr[\eta_i = c_i-|\block_i(D'_B)|]) \nonumber \\
&\leq & \epsilon/\Delta\block \cdot (\sum_{i\in \block_{\Delta}}|\block_i(D_B)-\block_i(D'_B)|) \leq \epsilon  \nonumber
\end{eqnarray}
The probability to draw a negative noise $\eta$ from $Lap(\epsilon,\delta,\Delta\block)$ is
\begin{eqnarray}
Pr[\eta < 0] = \sum_{i=-1}^{-\infty} p \cdot e^{-(\epsilon/\Delta\block) (x-\eta^0)} =
 \frac{e^{-\eta_0\epsilon/\Delta\block}}{e^{\epsilon/\Delta\block} +1}
\end{eqnarray}
Given $\eta_0=-\frac{\Delta\block\ln ((e^{\epsilon/\Delta\block}+1)(1-(1-\delta)^{1/\Delta\block}))}{\epsilon}$, we have $Pr[\eta < 0] = 1- (1-\delta)^{1/\Delta\block}$.
For each neighboring pair, at most $\Delta\block$ bins differ and fail to have $Pr[\eta\geq 0]$. Hence, the overall failing probability is $1-(1-Pr[\eta < 0])^{\Delta\block}=\delta$. With $1-\delta$, the probability of having the same view from
\prl neighboring databases is bounded by $e^{\epsilon}$.
\end{proof}

\subsubsection{Theorem~\ref{theorem:gmc_eff_recall} (Correctness \& Efficiency of GMC)}
\label{app:gmc_eff_recall}
\begin{proof}
First, we will show that the efficiency of \lap with the greedy match \& clean step (GMC) is better than \lap alone. The first part of the protocol that adds dummy records is the same. The second part of the protocol without GMC compares all the candidate matches using the secure matching protocol $SMC(a,b)$. On the other hand, with GMC, if a record pair $(a,b)$ is compared securely, then $(a,b)$ must be one of the candidate matches. Hence, the number of the secure pairwise comparisons with GMC will be no more than the protocol without GMC.

Next, we will show the correctness of \lap with GMC. Let $O_{\lap}, O_{\lap+GMC}$ be the final output of \lap protocol without GMC and with GMC. We would like to show that if $(a,b)\in O_{\lap}$, then $(a,b)\in O_{\lap,GMC}$. Suppose this is not true, then there exists a matching pair $(a,b)\in O_{\lap}$, but  $(a,b)\notin O_{\lap+GMC}$. If so, then one of the records in $(a,b)$ must be removed from the bins before its turn of secure pairwise comparison $SMC(a,b)$. Without loss of generality, let's say $a$ is cleaned from Alice's bins before $SMC(a,b)$. The condition to remove $a$ is that $a$ has already been in the current output. Hence, Bob is able to compare $a$ with all his records in plain text and identify this matching pair $(a,b)$. This leads to a contradiction. Hence, $O_{\lap}\subseteq O_{\lap+GMC}$. Moreover, if a matching pair $(a,b)$ is not a candidate match based on the blocking strategy $\block^S$, and if $a$ has been already found matching with another record of Bob, then GMC can add $(a,b)$ into $O_{\lap+GMC}$. Hence, it is possible that \lap with GMC gains even more matching pairs than \lap alone.
\end{proof}

\section{Additional Protocols}
\subsection{Example for secure pairwise match} \label{app:eg_smc}
Here we give an example for the function $SMC(a,b)$ that outputs $(a,b)$ if they match; null otherwise. The matching rule is that Euclidean distance of $a$ and $b$ is less than $\theta$.
First, Party Alice creates a homomorphic public/private key pair $(pk,pr)$, and sends the public key $pk$ to party Bob. Let $E_{pk}(\cdot)$ denote the encryption function with public key $pk$ and $D_{pr}(\cdot)$ the decryption function with private key $pr$. Paillier's cryptosystem supports the following operations on the encrypted plain texts $m_1$ and $m_2$ without the knowledge of the private key:
\begin{itemize}
 \item Addition: $E_{pk}(m_1+m_2) = E_{pk}(m_1) +_h E_{pk}(m_2)$;
 \item Multiplication with constant $c$: $E_{pk}(cm_1) = c \times_h E_{m_1}$
\end{itemize}
These two operations allow secure computation of Euclidean distances, i.e. $dist(a, b) = \sum_{i}(a[i]-b[i])^2 = \sum_{i}(a[i])^2-2a[i]b[i]+(b[i])^2$, and also hamming distances for bit vectors, i.e. $dist(a,b)=  \sum_{i} |a[i]-b[i]|=  \sum_{i} a[i]+b[i]-2a[i]b[i]$.

As summarized in Algorithm~\ref{algo:securematch}, given bit vectors $a$ and $b$, Alice will send to Bob the encrypted values $(a_{id}, \left\{ E_{pk}(a[i]) \right\})$ where $a_{id}$ is a randomly generated record identifier for record $a$. Next, party Bob computes for each of its records $b_{id}$ the value $E_{pk}(a[i]) +_{h} E_{pk}(a[i]) \times_{h} (-2b[i])+_{h} E_{pk}(b[i])$ which is equal to $E_{pk}(|a[i]-b[i]|)$ for all $i$, and computes the encrypted $E_{pk}(\sum_i |a[i]-b[i]|)$. A random number $r$ is generated and added to the encrypted distance, such that the true distance is hidden from Alice if $(a,b)$ is not a matching pair. Party Bob creates the message $(b_{id}, E_{pk}(\sum_i |a[i]-b[i]| + r)$ for each record pair comparison. Alice can then decrypt the message with her private key and obtain the relative distance $d=\sum_i |a[i]-b[i]| + r$. Since Bob knows $\theta+r$, a secure comparison protocol, such as Yao's garbled circuit \cite{Yao:1986:GES:1382439.1382944},  can be used to evaluate if $d\leq \theta+r$. If this algorithm outputs ``True'', Alice and Bob will exchange their true record values.
\begin{algorithm}[t]
    \SetKwInOut{Input}{Input}
    \SetKwInOut{Output}{Output}
    \underline{function SecureMatch} $(a,b, \theta)_{pk,pr}$\;
    \Input{$ a,b \in \left\{0,1\right\}^d$, hamming distance threshold $\theta$, public/private key pair $(pk,pr)$}
    \Output{$(a,b)$ or $\emptyset$}
    \ Alice: randomly generates an id $a_{id}$ and sends to Bob\;
    \ Bob: randomly generates an id $b_{id}$ and an integer $r$\;
    \ Bob: initiates $s$ with $E_{pk}(r)$\;
    \For{$i \in [0,\ldots,d-1]$}{
    		\ Alice: sends to Bob $E_{pk}(a[i])$\;
    		\ Bob: updates $s = s +_h E_{pk}(a[i]) +_{h} (E_{pk}(a[i]) \times_{h} (-2b[i]))+_{h} E_{pk}(b[i])$\;
    }
    \ Bob: sends $(b_{id},s)$ to Alice \;
    \ Alice: decrypts $s = D_{pr}(s)$  \;
    \eIf{$s \leq \theta+r$ (secure integer comparison)}
      {
        return $(a,b)$\;
      }
      {
        return $\emptyset$\;
      }
    \caption{Secure Match $a$ and $b$}\label{algo:securematch}
\end{algorithm}

\subsection{Local DP Protocol}\label{app:localdp}
\subsubsection{Theorem~\ref{theorem:localdp}}
Let $\block$ be a blocking that randomly hashes records into a pre-specified set of $k$ bin, such that $\frac{\Pr[\block(b) = i]}{\Pr[\block(b')=i]} \leq e^{\epsilon}$.
Such a blocking $\block$ satisfies $\epsilon$-local DP (Appendix~\ref{app:localdpdef}). Protocols that combine a local differentially private blocking with IND-S2PC protocols for record linkage to achieve $(\epsilon, \negligible(\kappa))$-\dprl.
\begin{proof}(sketch) We prove privacy for Bob (the proof for Alice is analogous). In this protocol, Alice with input Data $D_A$ has a view consisting of (1) the number of candidate matching pairs arising in each $(i,j)\in \block^S$, (2) the output for each candidate matching pair. Consider a neighboring pair $(D_B,D'_B) \in N(f_{\Join_m}(D_A,\cdot))$ for a given $D_A$. By Theorem~\ref{theorem:neighbor_prl}, $D_B$ and $D'_B$ differ in only one non-matching record with respect to $D_A$, i.e. $D'_B=D_B-b_*+b_*'$ and $b_*\neq b_*'$, where $m(b_*,a)=0$ and $m(b_*',a)=0$ for all $a \in D_A$. Given both $b_*,b_*'$ can be hashed into the same bin with probability ratio bounded by $e^\epsilon$, the probabilities of generating the same number of candidate matching pairs from $D_B$ and $D'_B$  are also bounded by the same ratio. The encrypted records only differ in $b_*$ and $b'_*$, and both of them lead to the same output for each candidate matching, because they do not match any records in $D_A$.
\end{proof}

In this work, we use randomized response  (RR) \cite{Dwork:2014:AFD:2693052.2693053} as an example to achieve \dprl. Other local DP algorithms, such as Johnson-Lindenstrauss (JL) transform \cite{Blocki:2012:JTI:2417500.2417899} can be similarly applied.

\subsubsection{RR based Blocking} \label{app:rr_basic}

Given a fixed hash function $h: \dom \rightarrow [0,k-1]$, records in $D_B$ are hashed into $k$ bins, $\block_0,..,\block_{k-1}$ respectively. Let us define a RR based on this fixed hashing function with privacy budget $\epsilon_B$ for Bob. Each record $b\in D_B$ is randomly hashed into $\block_{h(b)}$ with probability $p_B = \frac{e^{\epsilon_B}}{k-1+e^{\epsilon_B}}$ and the other $(k-1)$ bins with probability $q_B = \frac{1}{k-1+e^{\epsilon_B}}$. We denote the resulted bins by $\tilde{\block}(D_B)$, and the resulted bin for each record $b\in D_B$ by $\tilde{\block}(b)$.
Similarly, using the same fixed hashing function and randomized response, Alice's records $D_A$ are randomly hashed into the $k$ bins, $\tilde{\block}(D_A)$ with corresponding $p_A,q_A$ based on $\epsilon_A$. This randomized response with probabilities $(p_A,q_A)$ and $(p_B,q_B)$ ensures $\epsilon_A$-local DP and $\epsilon_B$-local DP respectively.

Consider a basic blocking strategy $\block^S = \{(i,i)| i\in [0,k)\}$, all corresponding bins are compared. If the hash function $h$ is a LSH, then matching records are likely fall into the same bin as Alice and Bob use the same hash function. The probability that such records $(a,b)$ appear in the same bin after randomization $\Pr[\tilde{\block}(a)=\tilde{\block}(b) | h(a)=h(b)]$ is $p_Ap_B + (k-1)q_Aq_B$. This probability  increases with the privacy budget $\epsilon_A,\epsilon_B$, and hence recall will improves.

A further trade-off between correctness, privacy and efficiency is allowed by considering a general blocking strategy $\block^S = \{(i,(i+j)\%k)| i\in [0,k), j\in [0,k') \}$,  each record in $\tilde{\block}_i(D_A)$ is securely compared with $k'$ neighboring bins of Bob. The basic blocking strategy corresponds to the case where $k'=1$. As $k'$ increases, more candidate matching pairs are securely compared, resulting potentially higher recall and more communication and computation cost. When $k'=k$, the resulted protocol is equivalent to the all pairwise comparisons baseline.

\subsubsection{Optimal RR Probability}

If the window size $k'$ is given in advance as a parameter for the efficiency, the expected recall of randomized response can be further optimized.
Let $p^B_i$ be the probability for a record $b \in D_B$ to be randomly hashed into $\block_{(h(b)+i)\% k}$. To ensure $\epsilon_B$-DPRL, the probabilities to hash any pairs of bins should be bounded by $e^{\epsilon_B}$. The hashing probability for Alice records, $p^A_i$, is similarly defined and constrained. As each record of Alice's can fall into any bin, and each bin of Alice's is compared with $k'$ neighboring bins of Bob's, the probability that a matching pair $(a,b)$ is compared after randomization is
\begin{equation} \label{eqn:pab}
p_{(a,b)}=\sum_{i=0}^{k-1}\sum_{j=0}^{k'-1}p_i^Ap_{i+j\%k}^B.
\end{equation}
The expected recall can be improved by maximizing $p_{(a,b)}$ (Eqn.~\eqref{eqn:pab}) with the constraints on
\begin{itemize}
\item[(a)] Ratio: $\frac{p^B_i}{p^B_{i'}}\leq e^{\epsilon_B}, \forall i\neq i'$; $\frac{p^A_i}{p^A_{i'}}\leq e^{\epsilon_A}, \forall i\neq i'$;
\item[(b)] Sum: $\sum_{i=0}^{k-1} p_i^A=1$; $\sum_{i=0}^{k-1}p_i^B=1$.
\end{itemize}

The solution in Section~\ref{app:rr_basic} where we assign a high probability to a single bin, and a low probability to the rest of the bins is a valid solution to the maximization problem, but it is not always the optimal solution. In general, this optimization can be solved by existing tools for quadratic objectives with linear constraints, such as quadratic programming.

Here, we present an explicit solution for the special case when $\epsilon_B=\epsilon \geq 0$ and $\epsilon_A\rightarrow \infty$. This case corresponds to the situation where Alice's data is public and only Bob's data requires $\epsilon$-DPRL protection.
Then the expected recall can be maximized by solving the following linear optimization:
\[ \max_{p^B_0,\ldots,p^B_{k-1}}   \sum_{i=0}^{k'-1} p^B_i ~~~\text{s.t. }
 \sum_{i=0}^{k-1} p^B_i = 1 \text{, and }  \frac{p^B_{i}}{p^B_{i'}} \leq e^{\epsilon} ~~ \forall i\neq i'
\]
The expected recall is maximized with value $p_{(a,b)}=\frac{k'e^{\epsilon}}{k-k'+k'e^{\epsilon}}$ at
\[
    p^B_i=
\begin{cases}
    \frac{e^{\epsilon}}{k-k'+k'e^{\epsilon}},& \text{for }i=0,\ldots,k'-1\\
    \frac{1}{k-k'+k'e^{\epsilon}},              & \text{for }i=k',\ldots,k-1
\end{cases}
\]

If Bob's records are uniformly distributed over the bins, then the compression ratio w.r.t all pairwise comparisons (APC) is $\rho= \frac{k'}{k}$. The maximized expected recall can be written as
\begin{equation}\label{eqn:irr}
p_{(a,b)} = \frac{\rho e^{\epsilon}}{1-\rho+\rho e^{\epsilon}}.
\end{equation}
This equation explicitly form the relationship between correctness ($p_{(a,b)}$), privacy ($\epsilon$) and efficiency ($\rho$) of this protocol.

\begin{theorem}\label{theorem:localdp_constantfactor}
The basic RR mechanism achieves a constant factor speedup in efficiency given $\epsilon_B=\epsilon \geq 0$, $\epsilon_A\rightarrow \infty$, $\delta_A=\delta_B = \negligible(\kappa)$ and recall $r$.
\end{theorem}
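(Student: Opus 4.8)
The plan is to turn the explicit correctness--privacy--efficiency identity of Eqn.~\eqref{eqn:irr} into a lower bound on the number of secure comparisons, and then observe that the recall requirement pins the relevant parameter to a constant. In the stated regime $\epsilon_A\to\infty$ makes Alice's randomized response deterministic, so $\tilde{\block}(D_A)=\block(D_A)$, while Bob's records are spread by randomized response which, as constructed in Appendix~\ref{app:rr_basic}, places each record into any single bin with probability at least $q_B$. The failure parameters $\delta_A=\delta_B=\negligible(\kappa)$ come for free from Theorem~\ref{theorem:localdp}, so that clause of the hypothesis needs no separate treatment. First I would record the recall constraint: writing $\rho=k'/k$ for the compression ratio relative to APC, a matching pair $(a,b)$ on which the underlying LSH $h$ collides is retained by the randomized blocking with probability exactly $p_{(a,b)}=\rho e^{\epsilon}/(1-\rho+\rho e^{\epsilon})$ (Eqn.~\eqref{eqn:irr}), and any matching pair is compared with probability at most this value, so the protocol's recall is at most $p_{(a,b)}$. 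Demanding recall $r$ therefore forces $p_{(a,b)}\ge r$, which rearranges to
\begin{equation}
\frac{1}{\rho}\ \le\ 1+\frac{e^{\epsilon}(1-r)}{r}, \nonumber
\end{equation}
a quantity depending only on $\epsilon$ and $r$; for the basic strategy $k'=1$ this says the number of bins $k$ is itself a constant.

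Second I would lower bound the cost. Because every bin of Bob receives each of his records with probability at least $q_B$, linearity of expectation gives $\mathbb{E}[|\tilde{\block}_j(D_B)|]\ge q_B\,n_B$ for every $j$; since each of Alice's bins is compared against $k'$ of Bob's bins,
\begin{equation}
\mathbb{E}[\mathrm{cost}]\ \ge\ \sum_i|\block_i(D_A)|\cdot k'q_B\,n_B\ =\ k'q_B\,n_A n_B\ \ge\ \frac{p_{(a,b)}}{e^{\epsilon}}\,n_A n_B\ \ge\ \frac{r}{e^{\epsilon}}\,n_A n_B, \nonumber
\end{equation}
using $k'q_B\ge p_{(a,b)}/e^{\epsilon}$ (an identity for the optimal RR probabilities and an easy inequality for the basic ones, both read off from Appendix~\ref{app:rr_basic}) and then $p_{(a,b)}\ge r$. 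Since APC performs exactly $n_A n_B=\Theta(n^2)$ comparisons, the speedup of the RR protocol over APC is at most $e^{\epsilon}/r$, a constant independent of $n$; in particular the RR protocol still uses $\Omega(n^2)$ comparisons and cannot be subquadratic. Combining with the expected-cost upper bound $O(\rho\,n^2)$ that holds under the balanced-bin assumption already used to derive Eqn.~\eqref{eqn:irr} characterizes the speedup as $\Theta(1/\rho)$, again a constant depending only on $\epsilon$ and $r$.

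The step I expect to be the main obstacle is making the recall bound airtight. The protocol's recall equals $p_{(a,b)}$ only for matching pairs on which $h$ collides, and under a windowed strategy a non-colliding matching pair may still be compared with some positive (but strictly smaller) probability, so I must argue carefully that the overall recall is at most $p_{(a,b)}$ --- intuitively, any matching pair is compared with probability at most $\max_{b}\Pr[\tilde{\block}(b)\in\mathrm{window}(h(a))]=p_{(a,b)}$ --- in order to derive $p_{(a,b)}\ge r$. A secondary, routine point is upgrading $\mathbb{E}[\mathrm{cost}]$ to a high-probability statement, which follows from a standard concentration argument since $\mathrm{cost}$ is a sum of $O(1)$-bounded contributions that are independent across Bob's records; this is unnecessary if ``speedup'' is read in expectation.
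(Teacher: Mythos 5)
Your proposal is correct and rests on the same identity the paper uses --- Eqn.~\eqref{eqn:irr} relating recall, $\epsilon$, and the compression ratio $\rho=k'/k$ --- but you do noticeably more work, and some of it is a genuine improvement. The paper's proof is two lines: set $r=p_{(a,b)}$, invert Eqn.~\eqref{eqn:irr} to get $\rho = r/(r+e^{\epsilon}(1-r))$, and observe that this is independent of $n$. Crucially, the paper's identification of $\rho$ with the actual cost ratio relies on the stated assumption that Bob's records are uniformly distributed over the bins. Your second step replaces that assumption with the observation that randomized response itself guarantees every bin receives each of Bob's records with probability at least $q_B$, so $\mathbb{E}[\mathrm{cost}]\ge k'q_B\,n_A n_B \ge (r/e^{\epsilon})\,n_A n_B = \Omega(n^2)$ regardless of the input distribution; this is a cleaner and distribution-free lower bound, and it makes the ``cannot be subquadratic'' conclusion explicit rather than implicit. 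Your first step (arguing recall $\le p_{(a,b)}$ for every matching pair, colliding or not, so that the recall requirement forces $p_{(a,b)}\ge r$) is also sound and fills in a justification the paper elides by simply equating $r$ with $p_{(a,b)}$. The concern you flag about non-colliding matching pairs is handled correctly: under the windowed strategy the comparison probability of any pair is at most $\sum_{i=0}^{k'-1}p^B_i = p_{(a,b)}$, so the bound holds uniformly. In short, the two arguments buy the same theorem, but yours is the more robust version: the paper's is an exact computation under a uniformity idealization, while yours is an inequality chain that survives without it.
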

\begin{proof}
Given a recall $r=p_{(a,b)}$, we have $\rho= 1 - \frac{e^{\epsilon}(1-r)}{r+e^{\epsilon}(1-r)}$ based on Eqn.~\eqref{eqn:irr}.
The improvement in efficiency $\rho$ is a constant factor in terms of $r$ and $\epsilon$, independent of $n$.
\end{proof}

\newcommand{\hi}{\top}
\newcommand{\low}{\bot}

\balance

Next, we compute the optimal amongst a restricted class of strategies for the  more general case where $\epsilon_A=\epsilon_B=\epsilon$, though the explicit form for this case is unknown yet. The strategies we consider are those where Alice and Bob (a) use symmetric probabilities to assign a bin to each record, and (b) they both assign a high probability $p_\hi$ to place a record from bin $i$ to bins $i$ through $(i+x-1)\%k$ (for some $1 \leq x \leq k'$), and a low probability $p_\low$ to assign a record from bin $i$ to the rest of the bins. Note that, when only one of Alice or Bob is randomizing their records, $x=k'$ results in the RR probabilities that optimize the expected recall.

In order to satisfy the constraints in the above maximization problem, we need
$p_\hi = \frac{e^\epsilon}{k-x + x\cdot e^\epsilon}$ and $p_\low = \frac{1}{k-x + x\cdot e^\epsilon}$.

The expected recall can be derived as follows. Without loss of generality consider a matching pair $(a,b)$ that fall into bin 0. There are 3 ways $(a,b)$ are matched after randomization:\\
{\em (i): } Both Alice and Bob randomize their records to a $\hi$ bin (i.e., some bin  $0 \leq j \leq x-1$). Since $x \leq k'$, these records are definitely compared. This occurs with probability $\frac{x(x+1)}{2}p_\hi^2$\\
{\em (ii): } Only one of Alice and Bob randomize their records to a $\hi$ bin (i.e., some bin  $0 \leq j \leq x-1$). There $2k'x - x(x+1)$ ways in which exactly one of $a$ or $b$ is randomized to a $\hi$ bin, but still end up getting compared by the algorithm. This occurs with probability $(2k'x - x(x+1)) p_\hi \cdot p_\low$\\
{\em (ii): } Both Alice and Bob randomize their records to a $\low$ bin. This occurs with probability $(kk' - (2k'x - \frac{x(x+1)}{2}))p_\low^2$
In total, the probability that $(a,b)$ are compared is expressed in terms of $x$ as
\begin{eqnarray}
p_{(a,b)}(x) &=&\frac{x(x+1)}{2}p_\hi^2 + (2k'x - x(x+1)) p_\hi \cdot p_\low \nonumber \\
&& + (kk' - (2k'x - \frac{x(x+1)}{2}))p_\low^2 \nonumber
\end{eqnarray}
The derivative of $p_{(a,b)}(x)$ w.r.t $x$ is
\begin{equation}
p'_{(a,b)}(x) = C_1 \cdot [(e^{\epsilon}-1)C_2x + k(2k'+e^{\epsilon}-1)], \nonumber
\end{equation}
where $C_1=\frac{e^{\epsilon}-1}{2(k-x+xe^{\epsilon})^3}$ and $C_2 = (e^{\epsilon}-3+2k-4k')$.
When $C_2>0$, the derivative is always positive, the expected recall is maximized when $x=k'$, as $0<x\leq k'$. We will leave the complete analysis to the future work.

\section{Additional Plot}\label{app:eval}
Figure~\ref{fig:greedy_recall95} shows the log(base 10) value of the average cost with respect to the $\log$ value of data size for PSI+X, APC, and \lap with $\epsilon\in\{0.1, 0.4, 1.6\}$ and $ \delta=10^{-5}$ and the non-private setting (np) when they achieve a recall $>0.95$. Similar to Figure~\ref{fig:greedy_recall100}, \lap gives lower costs than the baselines, and scales near linearly.

\begin{figure}[H]
\centering
\includegraphics[scale=0.53]{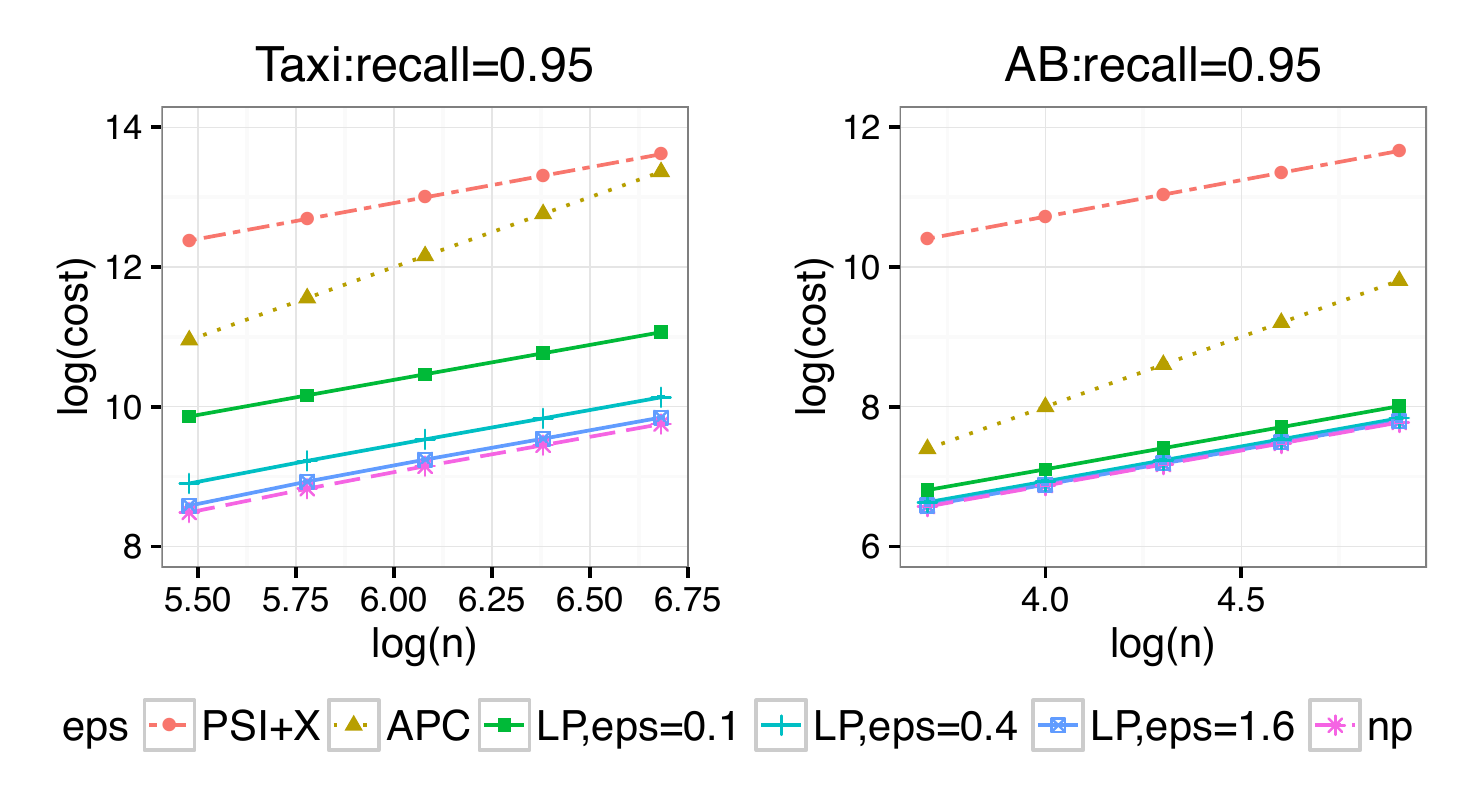}
\caption{The average $\log(cost)$ vs $\log(data~size)$} \label{fig:greedy_recall95}
\end{figure}

\end{document}